\documentclass{lmcs}

\usepackage{tikz}
\usepackage{hyperref}
\usetikzlibrary{automata,calc}

\newcommand{\NN}{\mathbb{N}}
\newcommand{\hdcw}{HD-tCBW}

\begin{document}

\title[How Concise are Chains of co-Büchi Automata?]{How Concise are Chains of co-Büchi Automata?}
\titlecomment{{\lsuper*}
This paper is the extended version of a paper with the same title published at GandALF 2025 \cite{gandalf2025}.
}
\thanks{Funded by Volkswagen Foundation within its \emph{Momentum} framework under project no.~9C283}	%

\author[R.~Ehlers]{Rüdiger Ehlers\lmcsorcid{0000-0002-8315-1431}}

\address{Clausthal University of Technology, Institute for Software and Systems Engineering, Arnold-Sommerfeld-Straße 1, 38678 Clausthal-Zellerfeld, Germany}	%

\begin{abstract}
{
Chains of co-Büchi automata (COCOA) have recently been introduced as a new canonical model for representing arbitrary $\omega$-regular languages. They can be minimized in polynomial time and are hence an attractive language representation for applications in which normally, deterministic $\omega$-automata are used.
While it is known how to build COCOA from deterministic parity automata, little is currently known about their relationship to automaton models introduced earlier than COCOA.

In this paper, we 
analyze the conciseness of chains of co-Büchi automata. We provide three main results and give an overview of the implications of these results.
First of all, we
show that even in the case that all automata in the chain are deterministic, chains of co-Büchi automata can be exponentially more concise than deterministic parity automata. %
We then present two main results that together negatively answer the question if this conciseness is retained when performing Boolean operations (such as disjunction, conjunction, and complementation) over COCOA. For the binary operations, we show that there exist families of languages for which their application leads to an exponential growth of the sizes of the automata. The families have the property that when representing them using deterministic parity automata, taking the disjunction or conjunction of the family elements only requires a polynomial blow-up.
We finally show that an exponential blow-up is also unavoidable when complementing a COCOA, as this operation can require redistributing with which colors words need to be recognized.
}

\end{abstract}

\maketitle

\section{Introduction}

Automata over infinite words are a classical model for representing the specification of a reactive system.
They augment temporal logics such as linear temporal logic (LTL, \cite{DBLP:conf/focs/Pnueli77}) and linear dynamic logic (LDL, \cite{DBLP:conf/ijcai/GiacomoV13,DBLP:journals/iandc/FaymonvilleZ17}) by providing an intermediate representation for a specification that is structured in a way so that it can be used directly in verification and synthesis algorithms.
While for classical model checking of finite-state systems, non-deterministic automata with a Büchi acceptance condition suffice, for some applications, such as reactive synthesis and probabilistic model checking, richer automaton types are employed. 
In this context, deterministic automata with parity acceptance are particularly interesting as when a specification is given as such, the reactive synthesis problem over the specification can be reduced to solving a parity game based on the state space structure of the automaton \cite{DBLP:reference/mc/BloemCJ18}.

Unfortunately, translation procedures from temporal logic to
deterministic parity automata cannot avoid a substantial blow-up in the worst case, which complicates employing them in reactive synthesis. For instance, for a specification in LTL, a doubly-exponential worst-case blow-up is known \cite{DBLP:conf/mochart/KupfermanR10}. However, even for languages that do not require such huge automata, current translation procedures for obtaining deterministic parity automata can compute unnecessarily large automata, caused by them only applying heuristics for size minimization, as deterministic parity automaton minimization is NP-hard \cite{DBLP:conf/fsttcs/Schewe10,DBLP:journals/corr/abs-2504-20553}, 

To counter this problem, chains of co-Büchi automata (COCOA) have recently been proposed as a new model for $\omega$-regular languages \cite{DBLP:conf/fsttcs/EhlersS22}.
In a COCOA, the language to be represented is split into a falling chain of co-Büchi languages, where each of the co-Büchi languages is represented as a \emph{history-deterministic} co-Büchi automaton with \emph{transition-based acceptance} (\hdcw). In this context, tran\-si\-tion-based acceptance refers to the transitions being accepting or rejecting rather than the states. This particular type of co-Büchi automata is minimizable in polynomial time \cite{DBLP:journals/lmcs/RadiK22}, so that each automaton in the chain can be minimized separately. To employ these automata in a canonical and polynomial-time minimizable model for arbitrary $\omega$-regular languages, a canonical split of an $\omega$-regular language to co-Büchi automata was defined \cite{DBLP:conf/fsttcs/EhlersS22}. COCOA can be thought of as assigning a \emph{color} to each word, just as deterministic parity automata do. A word then has a color of $i$ (for some $i \in \NN$) if the $i$th automaton in the chain accepts the word, but no automaton later in the chain accepts the word. The core contribution of the COCOA definition is a concretization of which color should be assigned to each word, and this concretization is not based on some automaton representation of the language, but only on the language itself, called the \emph{natural color} of the respective word.

COCOA have already found first use in reactive synthesis \cite{DBLP:conf/tacas/EhlersK24}, based on a procedure for translating deterministic parity automata to COCOA \cite{DBLP:conf/fsttcs/EhlersS22}. Given that polynomial-time minimization is an attractive property for future applications as well, it makes sense to have a closer look at the properties of COCOA and their relationship to earlier automata types, in particular in relation to deterministic parity automata, which they have the potential of replacing in some applications. 

In particular, we need to understand their conciseness in relation to previous automaton models, as doing so provides an indication of in which cases COCOA are a useful replacement in verification and synthesis procedures that currently base on other automaton types. Furthermore, with polynomial-time minimization of COCOA available, future applications of this representation for $\omega$-regular languages may build specification models in a compositional way while minimizing the intermediate automata after each step. To understand whether such an approach can be feasible, we need to understand the complexity of performing the usual Boolean operations on automata, most importantly conjunction (language intersection) and complementation.

In this paper, we provide a study of the conciseness of COCOA. We compare them against deterministic parity automata and analyze how performing Boolean operations on languages represented by COCOA affects their conciseness. Apart from summarizing how some existing results on deterministic co-Büchi automata transfer to the COCOA case, we provide three new COCOA-specific technical results:
\begin{enumerate}
\item 
We show that COCOA can be exponentially more concise than deterministic parity automata (DPW) even when the co-Büchi languages in the COCOA are representable as small deterministic co-Büchi automata and when the overall language only has one residual language.
{While it was previously known that COCOA can be exponentially more concise than deterministic parity automata, this was due to the automata in the COCOA being history-deterministic, and \hdcw{} are known to be exponentially more concise than deterministic automata (for some languages). The new result in this paper shows that COCOA can be exponentially more concise than DPW even when not making use of history-determinism for the chain elements.} Our results also imply that these sources of conciseness to not stack.

\item We show that exponential conciseness of COCOA over deterministic parity automata can be lost when performing binary Boolean operations (such as conjunction or disjunction) on COCOA. In particular, such Boolean operations can require an exponential growth in the number of states even in cases in which for deterministic parity automata, such a growth is not necessary.
\item Finally, we prove that performing complementation of COCOA, as a Boolean operation with only a single operand, also leads to an exponential blow-up in the worst case. This is caused by words with the same natural color in the uncomplemented COCOA potentially having different natural colors in the complemented COCOA. Distinguishing between such words can require distinguishing between exponentially many residual languages of the original COCOA in a single chain element of the complement COCOA, which shatters conciseness. 
\end{enumerate}
All three results shed light on the fundamental properties of COCOA. In the first case, the example family of languages defined for the result shows that even with small automata in a chain, complex liveness languages can be composed.
This even holds when the automata do not make use of the additional conciseness of history-deterministic co-Büchi automata.
The second technical result shows that the property of a COCOA to have a number of residual languages that is exponential in their size can be lost when applying binary Boolean operations. It hence demonstrates that future procedures for performing Boolean operations on COCOA will need to have an exponential lower bound on the sizes of the resulting COCOA.
Finally, the third technical result demonstrates that for some operations on COCOA, the residual languages of co-Büchi automata that appear late in a chain can have an effect on on the first co-Büchi automaton of an output COCOA.

This paper is structured as follows: After stating some preliminaries, we give a summary of the ideas behind COCOA in Section~\ref{sec:COCOA}. Section~\ref{sec:conciseness} summarizes the implications of known results on the conciseness of COCOA and provides the first new technical result. Section~\ref{sec:conjunction} then contains the lower bound on the blow-up incurred by binary Boolean operations on COCOA. 
The third technical result concerning the complementation of COCOA is given in Section~\ref{sec:complementation}.
The paper closes with a discussion of the obtained results in Section~\ref{sec:conclusion}.

{This paper is the extended version of a paper with the same name published at the GandALF 2025 conference held in September 2025 in Valletta, Malta. It extends the conference version by the results in Section~\ref{sec:complementation} (on the exponential blow-up when complementing COCOA) as well as by proofs for Lemma~\ref{lem:correctNaturalColors}, Theorem~\ref{thm:conjunctionBecomesBig}, and Lemma~\ref{lem:conjunctionDisjunctionOfDCW}. The proof 
of Theorem~\ref{thm:conjunctionBecomesBig} includes the newly appearing observations~\ref{obs:three} to \ref{obs:five}.
Finally, some details were added to the proof of Proposition~\ref{proposition:paritySizes}. %
}

\section{Preliminaries}

\textbf{Languages:} For a finite set $\Sigma$ as \emph{alphabet}, let $\Sigma^*$ denote the set of finite words over $\Sigma$, and $\Sigma^\omega$ be the set of infinite words over $\Sigma$. A subset $L \subseteq \Sigma^\omega$ is also called a \emph{language}. Given a language $L$ and some finite word $w \in \Sigma^*$, we say that $L|_w = \{w' \in \Sigma^\omega \mid ww' \in L\}$ is the \emph{residual language} of $L$ over $w$. Given a word $w = w_0 w_1 \ldots \in \Sigma^\omega$ and some language $L$, we say that an infinite word $w' = w_0 w_1 \ldots w_i \tilde w w_{i+1} w_{i+2} \ldots$ results from a \emph{residual language invariant injection} of $\tilde w$ at position $i \in \NN$ if $L|_{w_0 \ldots w_i} = L|_{w_0 \ldots w_i \tilde w}$.

\textbf{Automata:} Some languages, in particular the \emph{$\omega$-regular languages}, can be represented by \emph{parity automata}. We only consider automata with \emph{transition-based acceptance} in this paper. These are tuples of the form $\mathcal{A} = (Q,\Sigma,\delta,q_0)$ in which $Q$ is a finite set of states, $\Sigma$ is the alphabet, $q_0 \in Q$ is the initial state of the automaton, and $\delta \subseteq Q \times \Sigma \times Q \times \NN$ is its \emph{transition relation}.

Given a word $w = w_0 w_1 \ldots \in \Sigma^\omega$, we say that $w$ induces an infinite run $\pi = \pi_0 \pi_1 \ldots \in Q^\omega$ together with a sequence of \emph{colors} $\rho = \rho_0 \rho_1 \ldots \in \NN^\omega$ if we have $\pi_0 = q_0$ and for all $i \in \NN$, we have $(\pi_i,w_i,\pi_{i+1},\rho_i) \in \delta$. In this paper, we only consider automata that are \emph{input-complete}, i.e., for which for each state/letter combination $(q,x)$, there exists at least one pair $(q',c)$ with $(q,x,q',c) \in \delta$. We furthermore only consider automata for which the color $c$ does not depend on the transition taken, so that for each $(q,x)$, there is only one value $c$ with $(q,x,q',c) \in \delta$ for some $q'$.

A run is accepting if for the corresponding color sequence $\rho$ (which is unique), we have that the lowest color occurring infinitely often in it is even. This color is also called the \emph{dominating color} of the run. A word is accepted by $\mathcal{A}$ if there exists an accepting run for it. The language of $\mathcal{A}$, written as $\mathcal{L}(\mathcal{A})$, is the set of words with accepting runs.
An automaton is said to be deterministic if for every $(q,x) \in Q \times \Sigma$, there exists exactly one combination $(q',c) \in Q \times \NN$ with $(q,x,q',c) \in \delta$. In such a case, we also refer to the dominating color of the unique run as the color with which the automaton \emph{recognizes} the word. We say that $\mathcal{A}$ is a \emph{co-Büchi automaton} if the only colors occurring along transitions in $\mathcal{A}$ are 1 and 2. A \emph{co-Büchi language} is a language of some co-Büchi automaton. 

We say that an automaton $\mathcal{A}$ represents the \emph{disjunction} of some automata $\mathcal{A}_1$ and $\mathcal{A}_2$ if $\mathcal{L}(\mathcal{A}) = \mathcal{L}(\mathcal{A}_1) \cup \mathcal{L}(\mathcal{A}_2)$. It represents the \emph{conjunction} of  $\mathcal{A}_1$ and $\mathcal{A}_2$ if $\mathcal{L}(\mathcal{A}) = \mathcal{L}(\mathcal{A}_1) \cap \mathcal{L}(\mathcal{A}_2)$. {The \emph{size} of an automaton is defined to be the number of its states.}

\textbf{History-deterministic automata:} Parity automata, as defined above, are not necessarily deterministic. We consider \emph{history-deterministic} co-Büchi automata (\hdcw{}) in particular. For them, there exists some \emph{advice} function $f : \Sigma^* \rightarrow Q$ such that for each word, if and only if $w = w_0 w_1 \ldots \in \mathcal{L}(\mathcal{A})$, the sequence $q_0 f(w_0) f(w_1 w_2 \ldots) \ldots$ is a valid accepting run of the automaton. Abu Radi and Kupferman~\cite{DBLP:journals/lmcs/RadiK22} showed how to minimize such automata, and in minimized automata, for every state/letter combination, all transitions have the same color (so that the assumption from above is justified). History-deterministic co-Büchi automata are also called \emph{good-for-games} co-Büchi automata in the literature. The sets of languages representable by \hdcw{} and deterministic co-Büchi automata are the same. Deterministic parity automata (DPW) are however strictly more expressive.
We also sometimes represent automata in a graphical notation, where states are circles, transitions are arrows between circles, and the initial state is marked by an arrow from a dot. In co-Büchi automata, dashed arrows represent \emph{rejecting transitions} (with color $1$), while the solid arrows represent \emph{accepting transitions} (with color $2$). For parity automata, the edges are labeled by the color numbers in addition to their alphabet letters.

\textbf{SCCs:} Given an automaton $\mathcal{A} = (Q,\Sigma,\delta,q_0)$, we say that some tuple $(Q',\delta')$ with $Q' \subseteq Q$ and $\delta' \subseteq \delta$ is a \emph{strongly connected component} (SCC) of $\mathcal{A}$ if for each $q, q' \in Q'$, there exists a sequence of transitions within $\delta'$ for reaching $q'$ from $q$. Similarly, every transition within $\delta'$ is used in some such sequence.

{\textbf{Temporal logic and $\omega$-regular expressions:}
\emph{Linear temporal logic} (LTL, \cite{DBLP:conf/focs/Pnueli77}) is a formalism for expressing (some) languages over $\Sigma = 2^\mathsf{AP}$ for a set $\mathsf{AP}$. It is known that LTL can be translated to deterministic parity automata of size doubly-exponential in the sizes of the LTL properties, and this blow-up bound is tight (see, e.g., \cite{DBLP:conf/tacas/EsparzaKRS17}).
We also use \emph{$\omega$-regular expressions} for stating some languages. These extend classical regular expressions by a symbol for infinite repetition, namely $^\omega$.
}

\section{A short introduction to chains of co-Büchi automata}
\label{sec:COCOA}

Chains of co-Büchi automata (COCOA, used as both the singular and plural form) provide a canonical representation for arbitrary $\omega$-regular languages.
Let a language $L$ over an alphabet $\Sigma$ be given. The starting point for a COCOA representation of $L$ is the decomposition of $\Sigma^\omega$ into a chain of languages $L_1 \supset L_2 \supset \ldots \supset L_n$.
A word $w$ is in the language represented by the chain, also denoted as $\mathcal{L}(L_1, \ldots, L_n)$ henceforth, if the \emph{highest} index $i$ such that $w \in L_i$ is even or $w \notin L_1$.
Each language $L_i$ (for some $1 \leq i \leq n$) represents the set of words whose \emph{natural color} (with respect to $L$) is at least $i$. The natural color of a word is the minimal color in which a word is \emph{at home}, which in turn is defined as follows:

\begin{defi}[\cite{DBLP:conf/fsttcs/EhlersS22}, Def.~1]
\label{def:naturalColorsOfWords}
Let $L$ be a language and $i \in \NN$. We say that a word $w$ is at home in a color of $i$ if there exists a sequence of injection points $J \subset \NN$ such that for all words $w'$ that result from injecting residual language invariant words at word positions in $J$ into $w$, {we either have that
\begin{itemize}
\item $w'$ is at home in a color strictly smaller than $i$, or
\item both $w$ and $w'$ are in $L$ and $i$ is even, or both $w$ and $w'$ are not in $L$ and $i$ is odd.
\end{itemize}
Note that in the case of $i=0$, only the second case can} apply.
\end{defi}
The concept of the natural color of a word generalizes the idea of colors in a parity automaton in a way that is agnostic to the concrete choice of automaton for representing the language. The inductive definition above starts from color $0$, so that the languages at each level are uniquely defined.

With this definition, not only is a chain $L_1, \ldots, L_n$ of languages uniquely defined for each $\omega$-regular language $L$, but we also have that for each $1 \leq i \leq n$, the language $L_i$ is a co-Büchi language \cite{DBLP:conf/fsttcs/EhlersS22}, i.e., it can be encoded into a co-Büchi automaton. Hence, we can represent the chain of languages $L_1, \ldots, L_n$ by a chain of history-deterministic co-Büchi automata $A_1, \ldots, A_n$. Each of these automata can be minimized and made canonical in polynomial time \cite{DBLP:journals/lmcs/RadiK22}. 
Since the representation of $L$ as a chain of co-Büchi languages $L_1, \ldots, L_n$ is also canonical, we %
obtain a canonical representation of $L$. 

Details on the COCOA language representation can be found in the paper introducing COCOA \cite{DBLP:conf/fsttcs/EhlersS22} and in a video recording of a presentation of the paper's concepts with  additional examples
\cite{naturalColorsVideo}. 

\subsection{An example COCOA}

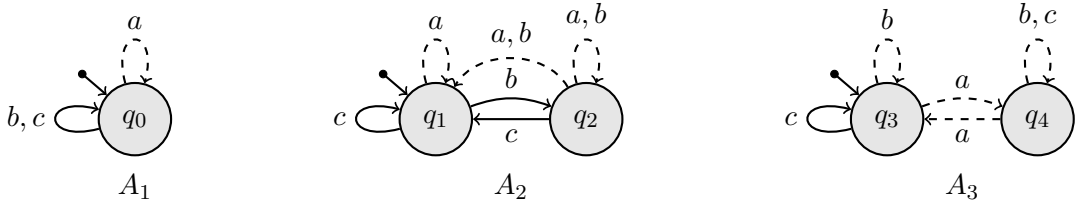
\begin{figure}
\centering\begin{tikzpicture}

\begin{scope}[xshift=-1cm]
\node[state,thick,fill=black!10!white] (a) at (0,0) {$q_0$};
\draw[thick,->,dashed] (a) to[loop above] node[above] {$a$} (a);
\draw[thick,->] (a) to[loop left] node[left] {$b,c$} (a);
\draw[fill=black] ($(a)+(-0.7,0.6)$) circle (0.05cm);
\draw[->,thick] ($(a)+(-0.7,0.6)$) -- (a);

\end{scope}

\node[state,thick,fill=black!10!white] (b) at (3,0) {$q_1$};
\node[state,thick,fill=black!10!white] (c) at (5,0) {$q_2$};
\draw[thick,->,dashed] (b) to[loop above] node[above] {$a$} (b);
\draw[thick,->,dashed] (c) to[loop above] node[above] {$a,b$} (c);
\draw[thick,->,dashed] (c) to[bend right=60] node[above] {$a,b$} (b);
\draw[thick,->] (c) to[bend left=0] node[below] {$c$} (b);
\draw[thick,->] (b) to[bend left=20] node[above] {$b$} (c);
\draw[thick,->] (b) to[loop left] node[left] {$c$} (b);
\draw[fill=black] ($(b)+(-0.7,0.6)$) circle (0.05cm);
\draw[->,thick] ($(b)+(-0.7,0.6)$) -- (b);

\begin{scope}[xshift=1cm]
\node[state,thick,fill=black!10!white] (d) at (8,0) {$q_3$};
\node[state,thick,fill=black!10!white] (e) at (10,0) {$q_4$};
\draw[thick,->,dashed] (d) to[loop above] node[above] {$b$} (d);
\draw[thick,->,dashed] (e) to[loop above] node[above] {$b,c$} (e);
\draw[thick,->] (d) to[loop left] node[left] {$c$} (d);
\draw[thick,->,dashed] (e) to[bend left=0] node[below] {$a$} (d);
\draw[thick,->,dashed] (d) to[bend left=20] node[above] {$a$} (e);
\end{scope}

\draw[fill=black] ($(d)+(-0.7,0.6)$) circle (0.05cm);
\draw[->,thick] ($(d)+(-0.7,0.6)$) -- (d);

\node at ($(a)+(0,-0.9)$) {$A_1$};
\node at ($0.5*(b)+0.5*(c)+(0,-0.9)$) {$A_2$};
\node at ($0.5*(d)+0.5*(e)+(0,-0.9)$) {$A_3$};

\end{tikzpicture}
\caption{An example COCOA}
\label{fig:exampleCOCOA}
\end{figure}

Figure~\ref{fig:exampleCOCOA} shows an example COCOA consisting of three automata, all over the alphabet $\Sigma = \{a,b,c\}$, together representing some language $L$. The chain's language contains the words with an infinite number of $a$ letters (with a natural color of $0$) as well as words that satisfy three conditions:
\begin{itemize}
\item the word ultimately only consists of $b$s and $c$s,
\item eventually, every $b$ is immediately followed by a $c$, and
\item if there is a finite even number of $a$ letters in the word, there are infinitely many $b$s.
\end{itemize}
Words satisfying these three conditions but only having finitely many $a$s have a natural color of $2$. Words that are not in $L$ have natural colors of $1$ or $3$. The COCOA hence recognizes words with four different natural colors, and it follows from the existing translation procedure from deterministic parity automata to COCOA \cite{DBLP:conf/fsttcs/EhlersS22} that every deterministic parity automaton for this language also needs at least four colors. 
The colors represent how often by injecting residual language invariant finite words an infinite number of times, words can alternate between being in $L$ or not.
In this example, the word $c^\omega$ has color $3$ and is hence not in $L$. By injecting $b$s such that the resulting word never has two $b$ letters in a row, the word becomes contained in $L$. By subsequently injecting $bb$ infinitely often, the word  leaves $L$ again. Finally, by then injecting $a$ infinitely often, the final word is rejected by $A_1$ and hence in $L$. The overall language $L$ represented by the COCOA has two residual languages, but only $A_3$ tracks them and not $A_2$ or $A_1$. The relevance of the injection point set $J$ in Definition~\ref{def:naturalColorsOfWords} is not exemplified in the COCOA in Figure~\ref{fig:exampleCOCOA}, as for all concretely given COCOA discussed in the following sections, this set can be freely chosen and is hence not of relevance.

\subsection{Some additional definitions and notes in the context of COCOA}
\label{subsec:additionalCOCOALemmata}

For convenience, whenever we are dealing with a COCOA $A_1, \ldots, A_n$ in the following, we will assume that $A_0 = \Sigma^\omega$ and $A_{n+1} = \emptyset$, as this avoids dealing with special cases in some constructions while not affecting the definition of the COCOA's language.
To avoid cluttering the exposition in the following, the word \emph{injection} always refers to a residual language invariant word injection. When a word $w'$ is the result of a residual language invariant word injection into some word $w$, we say that $w'$ \emph{extends} $w$.
We define the sum of the automaton sizes in a COCOA to be the size of the COCOA.

The condition for a chain of co-Büchi automata  $A_1, \ldots, A_n$ to represent a language $L$ can be equivalently stated as requiring $A_1$ to reject the words with a natural color of $0$ (with respect to $L$) and that for each $1 \leq i \leq n$, the words accepted by $A_i$ but rejected by $A_{i+1}$ (if $i<n$) are the ones with a natural color of $i$ (with respect to $L$).
The definitions above also imply that the natural language of a word $w$ can only decrease by injecting letters into $w$ {(if the set $J$ of positions to inject at is chosen according to the requirements of Definition~\ref{def:naturalColorsOfWords})}.

\section{On the conciseness of COCOA}
\label{sec:conciseness}

In this section, we will relate the sizes of deterministic parity automata to the sizes of COCOA (for the same languages).
We summarize the implications of existing results on the conciseness of COCOA and augment them by new insights.

\paragraph{DPW conciseness over COCOA} For starters, the translation by Ehlers and Schewe \cite{DBLP:conf/fsttcs/EhlersS22} for obtaining a COCOA from a deterministic parity automaton with $n$ states and $c$ colors yields COCOA with at most $c$ history-deterministic co-Büchi automata, each having at most $n$ many states. Even more, since the construction by Ehlers and Schewe minimizes the numbers of colors on-the-fly, this fact also holds for $c$ being the minimal number of colors that \emph{any} DPW for the language has. Hence, a COCOA can only be polynomially larger than a deterministic parity automaton for the same language, and the factor by which it can be larger is bounded by the number of colors.
This bound is also tight:
\begin{prop}[Appears to not have been stated previously elsewhere]
Let $\inf$ be the function mapping a sequence to the set of elements occurring infinitely often in the sequence.
For every $k \in \NN$, the language $L^k = \{w \in \{1, \ldots, k\}^\omega \mid \min(\inf(w)) \text{ is even}\}$ can be represented by a deterministic parity automaton with a single state and $k$ colors, but every COCOA for the same language needs at least $k$ levels (with one state on each level). 
\end{prop}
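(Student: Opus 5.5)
The upper bound on the DPW side is immediate. I take a single state $q$ with a self-loop $(q,j,q,j)$ for every letter $j \in \{1,\ldots,k\}$. The color sequence of a word is then the word itself, so the dominating color is $\min(\inf(w))$, and the automaton accepts exactly $L^k$. The rest of the plan computes the natural colors of all words with respect to $L^k$ and reads off the canonical chain $L_1 \supset \ldots \supset L_n$.

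\textbf{Key observation.} $L^k$ has exactly one residual language, since membership only depends on $\inf(w)$. Hence every finite word may be injected at every position, and the set $J$ of injection points plays no role beyond being chosen infinite. The claim I would prove is that the natural color of $w$ is exactly $m := \min(\inf(w))$.

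\textbf{Upper bound on the natural color (induction on $m$).} Injecting finite words can only add letters that occur infinitely often. So any $w'$ extending $w$ satisfies $m' := \min(\inf(w')) \le m$.
\begin{itemize}
\item If $m' < m$, then by the induction hypothesis $w'$ is at home in color $m' < m$, so the first case of Definition~\ref{def:naturalColorsOfWords} applies.
\item If $m' = m$, then $w$ and $w'$ have the same membership in $L^k$. This membership is ``in'' exactly when $m$ is even, so the second case applies with $i = m$.
\end{itemize}
The base case $m = 1$ is the same argument: every extension again has minimum $1$, is rejected, and $1$ is odd. So $w$ is at home in color $m$.

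\textbf{Lower bound (also by induction).} Suppose $i < m$ and $J$ is any injection point set. If $m \geq 2$, I inject the letter $m-1$ at every position of $J$. The resulting $w'$ has $\min(\inf(w')) = m-1$, so by the first part and induction its natural color is $m-1 \ge i$. Thus $w'$ is not at home in a color strictly smaller than $i$. Moreover $w'$ has the opposite membership in $L^k$ from $w$, so the second case fails as well. Hence $w$ is not at home in $i$. Also no word is at home in color $0$ unless it is in $L^k$, which is consistent.

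The main obstacle I expect is this step. It relies on $J$ being infinite so that the injected letter really occurs infinitely often. I would justify this from the intended reading of Definition~\ref{def:naturalColorsOfWords} (a sequence of injection points), in line with the remark in Section~\ref{sec:COCOA} that $J$ may be chosen freely for such languages.

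\textbf{Reading off the chain.} It follows that $L_i = \{w \mid \min(\inf(w)) \ge i\}$ for $1 \le i \le k$. These sets are pairwise distinct and nonempty, witnessed by $i^\omega \in L_i \setminus L_{i+1}$. Since the COCOA for a language is canonical, every COCOA for $L^k$ therefore has exactly $k$ nonempty levels. Each level is realizable with one state, namely a self-loop on all letters in which the letters $j < i$ are rejecting (dashed) and the others are accepting. A nonempty level cannot have fewer than one state, so each level needs exactly one state, which gives the tightness claim.
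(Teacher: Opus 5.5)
Your proof is correct, but it takes a different route from the paper. The paper never computes natural colors. It builds the same one-state DPW and then invokes the Ehlers--Schewe DPW-to-COCOA translation, which outputs the chain $A_1,\ldots,A_k$ with $A_i$ accepting the words ending in $\{i,\ldots,k\}^\omega$; since that translation is correct and minimizes colors, canonicity then forces every COCOA for $L^k$ to have these $k$ levels. You instead verify directly from Definition~\ref{def:naturalColorsOfWords}, by induction on $m=\min(\inf(w))$, that the natural color of $w$ is exactly $m$. Your argument uses two facts: $L^k$ has a single residual language, so every injection is admissible, and injections can only enlarge $\inf(w)$. The paper's route is shorter and reuses a general theorem, but it leaves implicit why the $k$ levels produced are pairwise distinct. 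Your route is self-contained and makes this explicit through the witnesses $i^\omega \in L_i \setminus L_{i+1}$. It also exposes the one place where the definition has to be read with an infinite injection set $J$. With a finite $J$ the lower bound would fail, and with it the whole color hierarchy would collapse, so you rightly flag infinite $J$ as the intended reading.
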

\begin{proof}
A deterministic parity automaton with a single state can be built with self-loops for all letters that use the letter as the respective color. The existing procedure for translating a DPW to a COCOA \cite{DBLP:conf/fsttcs/EhlersS22} then builds a COCOA $A_1, \ldots, A_k$ for this language in which on each level $i$, the words ending with $(\{i, \ldots, k\})^\omega$ are accepted. Overall, we have a blow-up by a factor of $k$, while $k$ is the number of colors in the deterministic parity automaton that we start with. 
\end{proof}

\paragraph{LTL $\rightarrow$ COCOA}
Before discussing that COCOA can also be more concise than DPW, we look at an area in which they have the same conciseness. In particular, a translation from LTL to automata has the same worst-case blow-up lower bound for DPW and COCOA, namely doubly-exponential.

This follows from an existing proof of the doubly-exponential lower bound for translating from LTL to deterministic Büchi automata (whenever possible). Kupferman and Rosenberg gave multiple versions of such proofs for the cases of fixed and non-fixed alphabets \cite{DBLP:conf/mochart/KupfermanR10}. All proofs have in common that a family of languages is built that has a doubly-exponential number of residual languages (in the sizes of the LTL formula). This can be seen from the fact that their languages only contain words that end with $\#^\omega$ for some character $\#$ in the alphabet, and hence only a doubly-exponential blow-up in the number of residual languages can cause the automata to be so big.

The complements of these languages are representable by co-Büchi automata. Furthermore, minimal \hdcw{} are \emph{semantically deterministic}, meaning that for each state $q$ in the automaton $A = (Q,\Sigma,\delta,q_0)$ reachable under a prefix word $\tilde w$, we have $\mathcal{L}((Q,\Sigma,\delta,q)) = \{w \in \Sigma^\omega \mid \tilde w w \in \mathcal{L}(A)\}$. If there is a doubly-exponential number of residual languages in $A$, we have that $A$ then needs at least a doubly-exponential number of states.
As a consequence, COCOA for these languages also need to be of doubly-exponential size, as a COCOA for a co-Büchi language consists of only a single \hdcw{} for the language.

\paragraph*{COCOA conciseness over DPW}
Let us now identify if and how COCOA can be more concise than DPWs.
For starters, it was shown that \hdcw{} can be exponentially more concise than deterministic co-Büchi automata \cite{DBLP:conf/icalp/KuperbergS15}. 
Since parity automata are co-Büchi type \cite{DBLP:journals/ijfcs/KupfermanMM06}, deterministic co-Büchi word automata cannot be less concise than deterministic parity automata.
Since furthermore COCOA for co-Büchi languages consist of a single history-deterministic co-Büchi automaton, we overall obtain that COCOA can be exponentially more concise than DPW.

We can also employ some existing results for showing that COCOA cannot be doubly-exponentially more concise than deterministic parity automata:

\begin{prop}[Already appearing in abbreviated form in \cite{DBLP:journals/corr/abs-2410-01021} based on remarks in \cite{DBLP:conf/tacas/EhlersK24}]
\label{proposition:COCOAtoDPWExponentialIsEnough}
Let $({A}_1, \ldots, {A}_k)$ be a COCOA. There exists a deterministic parity automaton for the same language that has a number of states that is exponential in $|{A}_1|+\ldots+|{A}_k|$.
\end{prop}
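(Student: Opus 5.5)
The plan is to first determinize each chain element separately and then combine the results in a product whose acceptance condition is a parity condition. Each $A_i$ is a history-deterministic co-Büchi automaton with $n_i = |A_i|$ states. Every co-Büchi automaton, history-deterministic or not, can be determinized with the classical Miyano--Hayashi breakpoint construction, applied to the dual Büchi view; equivalently one uses a subset construction with a breakpoint. This gives a deterministic co-Büchi automaton $D_i$ with at most $3^{n_i}$ states and $\mathcal{L}(D_i) = \mathcal{L}(A_i)$. Concretely, a state of $D_i$ is a pair $(S,O)$ with $O \subseteq S \subseteq Q_i$. Here $S$ tracks all reachable states, and $O$ tracks those reached since the last breakpoint using only accepting (color $2$) transitions. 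A rejecting transition of $D_i$ is emitted exactly when $O$ becomes empty and is reset to $S$. A word is accepted iff some run eventually uses only color-$2$ transitions, i.e.\ iff $D_i$ resets only finitely often. Standard arguments give $\mathcal{L}(D_i)=\mathcal{L}(A_i)$, and we may additionally use that $\mathcal{L}(A_{i+1}) \subseteq \mathcal{L}(A_i)$.

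Next, build the synchronous product $P = D_1 \times \cdots \times D_k$. It is deterministic and has at most $\prod_i 3^{n_i} = 3^{|A_1|+\cdots+|A_k|}$ states, which is exponential in the size of the COCOA as required. Along the run of $P$ on $w$, component $i$ emits rejecting transitions finitely often iff $w \in \mathcal{L}(A_i)$. Since the languages form a falling chain, the set of indices with finitely many rejections is a prefix $\{1,\ldots,m\}$. The word is in the COCOA's language iff $m$ is even, with $m=0$ covering $w \notin \mathcal{L}(A_1)$.

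The main obstacle is turning this ``largest index whose component stabilizes'' condition into a parity condition on the transitions of $P$ without an exponential blow-up beyond the product. I would first try a direct coloring. A product transition gets color $2j$ if $j$ is the smallest index at which component $j+1$ emits a rejecting transition, and color $2k$ if no component rejects. The lowest color seen infinitely often is then $2j$ with $j$ the least index such that $D_{j+1}$ rejects infinitely often. By the chain property this $j$ equals $m$, so the parity of the natural color is determined. Since acceptance requires $m$ even, I would use the colors $j$ themselves, shifted so that $j$ even means accepting: a transition gets color $\min\{j : D_{j+1} \text{ rejects on it}\}$, or $k$ if none does. The lowest color occurring infinitely often is $\min\{j : D_{j+1}\text{ rejects infinitely often}\}$. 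One checks that this minimum is exactly $m$, because components $1,\ldots,m$ reject only finitely often and component $m+1$ rejects infinitely often. So the run is accepting iff $m$ is even, matching the language of the chain. The delicate point to verify carefully is that colors of the other components occurring infinitely often never undercut this minimum. This holds precisely because only finitely many rejections occur in components $\le m$, so I expect no further state augmentation, such as index appearance records, to be needed.
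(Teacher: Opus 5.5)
Your proposal is correct and follows the paper's own route. You determinize each chain element with the Miyano--Hayashi breakpoint construction into a deterministic co-Büchi automaton with at most $3^{|A_j|}$ states. You then take the synchronous product and color each product transition by the least $j$ such that component $j+1$ takes a \emph{rejecting} (color-$1$) transition, or $k$ if there is none, so that the dominating color is the natural color. Your coloring is the one that makes the argument work: the paper's displayed formula writes $c^{j+1}=2$, which appears to be a slip, since its correctness argument presupposes your version.
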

\begin{proof}
Translating a non-deterministic co-Büchi automaton to a deterministic co-Büchi automaton can be performed with an exponential blow-up \cite{DBLP:conf/icalp/BokerKR10} using the Miyano-Hayashi construction \cite{MIYANO1984321}. Doing so for each automaton in the COCOA yields a sequence of deterministic co-Büchi automata $\mathcal{D}_1, \ldots, \mathcal{D}_k$, where for each $1 \leq j \leq k$, we have $|\mathcal{D}_j| \leq 3^{|A_j|}$.

Let for each $1 \leq j \leq k$ be $\mathcal{D}_j = (Q^j,\Sigma,\delta^j,q^j_0)$. We can construct a deterministic parity automaton $\mathcal{P} = (Q^P,\Sigma,\delta^P,q^P_0)$ for the language of the COCOA as follows (using a construction from \cite{DBLP:journals/corr/abs-2410-01021}):
\allowdisplaybreaks
\begin{align*}
Q^P & = Q^1 \times \ldots \times Q^k \\
\delta^P((q^1, \ldots, q^k),x) & = ((q'^1, \ldots, q'^k),c) \text{ s.t. } \exists c^1, \ldots, c^k \in \NN. (q'^1,c^1) \in \delta^1(q^1,x), \ldots,\\
& \quad \quad (q'^k,c^k) \in \delta^k(q^k,x), c = \min(\{k \} \cup \{ j \in \{0, \ldots, k-1\} \mid c^{j+1}=2 \}) \\
q^P_0 & = (q^1_0, \ldots, q^k_0)
\end{align*}
To see that $\mathcal{P}$ has the right language, assume that for some word $w$, its natural color is $j$ for some $0 \leq j \leq k$. Then, all automata $\mathcal{D}_1, \ldots, \mathcal{D}_j$ accept the word while the automata $\mathcal{D}_{j+1}, \ldots, \mathcal{D}_k$ reject the word. Since $\mathcal{P}$ simulates all these automata in parallel, infinitely often the color $c$ along transitions in the run for $w$ will be $j$, but only finitely often the color will be in $\{0, \ldots, j-1\}$. This means that $\mathcal{P}$ accepts $w$ if and only if $j$ is even, which proves that $\mathcal{P}$ has the right language.

We have that $|\mathcal{P}| \leq |\mathcal{D}_1| \cdot \ldots \cdot |\mathcal{D}_k| \leq 3^{|{A}_1|} \cdot \ldots \cdot 3^{|{A}_k|} = 3^{|{A}_1| + \ldots + |{A}_k|} $. Overall, the blow-up of the translation is hence exponential.
\end{proof}

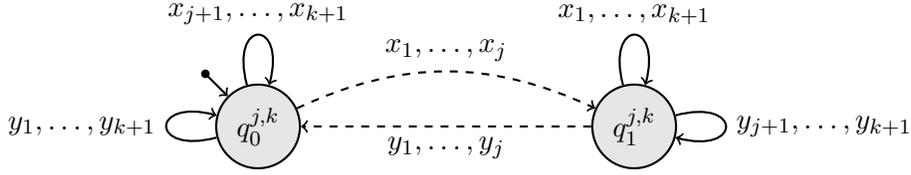
\begin{figure}
\centering\begin{tikzpicture}

\node[state,thick,fill=black!10!white] (a) at (0,0) {$q^{j,k}_0$};
\node[state,thick,fill=black!10!white] (b) at (5,0) {$q^{j,k}_1$};

\draw[thick,->] (a) to[loop above] node[above] {$x_{j+1}, \ldots, x_{k+1}$} (a);
\draw[thick,->] (a) to[loop left] node[left] {$y_{1}, \ldots, y_{k+1}$} (a);

\draw[thick,->] (b) to[loop above] node[above] {$x_{1}, \ldots, x_{k+1}$} (b);
\draw[thick,->] (b) to[loop right] node[right] {$y_{j+1}, \ldots, y_{k+1}$} (b);

\draw[thick,->,dashed] (a) to[bend left=25] node[above] {$x_{1}, \ldots, x_j$} (b);
\draw[thick,->,dashed] (b) to[bend left=0] node[below] {$y_{1}, \ldots, y_j$} (a);

\draw[fill=black] ($(a)+(-0.7,0.7)$) circle (0.05cm);
\draw[->,thick] ($(a)+(-0.7,0.7)$) -- (a);

\end{tikzpicture}
\caption{A deterministic co-Büchi automaton (parametrized for some $k \in \NN$ and $1 \leq j \leq n$) for the co-Büchi languages used in the proof of Theorem~\ref{thm:conciseNessWithSingleSuffixLanguage}}
\label{fig:firstResultSingleLevelDCW}
\end{figure}

So at a first glance, the conciseness of COCOA over DPW has been characterized to precisely singly-exponential.
What cannot be easily derived from existing results, however, is why exactly a COCOA can be exponentially more concise than a deterministic parity automaton. In particular, it may be possible that 
there are also factors other than the conciseness of history-deterministic co-Büchi automata that contribute to the  conciseness of COCOA, but they do not \emph{stack}.

It turns out that this is the case, as we show next. Even in the case that the co-Büchi languages on each level of a COCOA are representable as two-state deterministic co-Büchi automata, a parity automaton for the represented language may need exponentially more states.

\begin{thm}
\label{thm:conciseNessWithSingleSuffixLanguage}
There exists a family of COCOA $\mathcal{C}^1, \mathcal{C}^2, \ldots$ for which for each COCOA $\mathcal{C}^k = (A^k_1, \ldots, A^k_k)$, we have that for all $1 \leq j \leq k$, the history-deterministic co-Büchi automaton ${A}^k_j$ only has two states, the language of $\mathcal{C}^k$ only has a single residual language, and every deterministic parity automaton $\mathcal{P}^k$ for the language of $\mathcal{C}^k$ needs at least $2^k$ states.
\end{thm}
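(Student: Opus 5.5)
The plan is to read the family off Figure~\ref{fig:firstResultSingleLevelDCW}. Use the alphabet $\Sigma_k = \{x_1,\ldots,x_{k+1},y_1,\ldots,y_{k+1}\}$ and let $A^k_j$ be the two-state automaton shown there. Write $X_{\le j}=\{x_1,\ldots,x_j\}$ and $Y_{\le j}=\{y_1,\ldots,y_j\}$.

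\textbf{Language of each level.} $A^k_j$ takes a rejecting transition exactly when it switches between a letter from $X_{\le j}$ and a later letter from $Y_{\le j}$. Hence it rejects $w$ iff $w$ contains infinitely many letters from $X_{\le j}$ and infinitely many from $Y_{\le j}$. Define
\[
m_x(w)=\min\{i \mid x_i \in \inf(w)\}, \qquad m_y(w)=\min\{i \mid y_i\in\inf(w)\},
\]
with value $k+1$ if no such index $i\le k$ exists. Then $A^k_j$ accepts $w$ iff $j<\max(m_x(w),m_y(w))$. The languages are therefore falling in $j$, and the chain assigns to $w$ the color
\[
c(w)=\max(m_x(w),m_y(w))-1 .
\]

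\textbf{Single residual language.} The language is a Muller condition on $\inf(w)$, so it is prefix-independent. This gives a single residual language immediately.

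\textbf{Colors are natural.} I must check that $c(w)$ is the natural color of $w$ in the sense of Definition~\ref{def:naturalColorsOfWords}, so that the chain is really the COCOA of this language. I would do this by induction on the color, taking $J=\NN$; every injection is residual language invariant here.

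For the lower bound, suppose $c(w)=c$ with $c<k$. Injecting finite words can only add letters to $\inf(w)$, so it can only decrease $\max(m_x,m_y)$. Any extension thus has color at most $c$, and color exactly $c$ means the same membership parity.

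For the upper bound, suppose $c\ge 1$. Then injecting both $x_c$ and $y_c$ infinitely often produces a word of color $c-1$. This shows $w$ is not at home in any color below $c$.

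\textbf{DPW lower bound.} I would use the characterization of minimal deterministic parity automata for Muller (prefix-independent) languages: the size of a minimal DPW equals the number of leaves of the Zielonka tree of the condition (Casares, Colcombet and Fijalkow). The main work is computing this tree. If that result is to be avoided, the same recursion can be done by hand: show by induction on $k$ that two ``branches'' must be realized by disjoint sets of states inside an accepting/rejecting SCC.

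\textbf{Computing the Zielonka tree.} Consider a node labelled by the letter set $C_i=\{x_i,\ldots,x_{k+1},y_i,\ldots,y_{k+1}\}$. Its acceptance status is given by the parity of $i-1$. The maximal subsets with the opposite status are exactly $C_i\setminus\{x_i\}$ and $C_i\setminus\{y_i\}$. Every set containing both $x_i$ and $y_i$ has the same status as $C_i$, and removing either of these two letters raises $\max(m_x,m_y)$ to at least $i+1$. Removing additional letters does not restore the status in a maximal way, so these two sets are the only children.

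Each child, say $C_i\setminus\{x_i\}$, behaves like $C_{i+1}$ with one extra $y_i$ that is irrelevant to the maximum. The same argument applies to it with index $i+1$, giving two children again. Recursing from $C_1$ down to index $k$ yields a full binary tree of depth $k$, hence $2^k$ leaves.

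\textbf{Main obstacle.} I expect the hardest step to be making this recursion precise, i.e., checking that the maximal opposite-status subsets are exactly the two described at every level despite the leftover letters. If the cited characterization is not accepted as given, the fallback is to prove the $2^k$ bound directly by the inductive SCC splitting argument sketched above.
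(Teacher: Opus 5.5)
Your proposal is correct. Everything except the DPW lower bound matches the paper's proof: the same automata from Figure~\ref{fig:firstResultSingleLevelDCW}, the chain color read off as $\max(m_x,m_y)-1$, and an induction on the color to show it is the natural color. You argue both directions of that induction. The paper only checks that a word is at home in the color the chain assigns, not that it is not at home in a smaller one.

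For the $2^k$ bound your route differs. The paper proves it from scratch via Lemma~\ref{lem:One} and Lemma~\ref{lem:nofStates}. An SCC of $\mathcal{P}^k$ that is closed under $x_i,\ldots,x_{k+1},y_j,\ldots,y_{k+1}$ must contain two disjoint sub-SCCs, closed under the two alphabets in which $\max(i,j)$ is raised by one. Otherwise a run mixing a witness word from each would get a dominating color of the wrong parity. Induction then gives $2^{k-\max(i,j)}$ states.

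You instead note that the language is a Muller condition over $\Sigma^k$ and invoke the Casares--Colcombet--Fijalkow theorem: the minimal transition-based DPW has as many states as the Zielonka tree has leaves. The proof of that theorem is essentially the paper's nested-SCC argument in general form. So you trade self-containedness for modularity. What remains is a purely combinatorial computation, and you also get that $2^k$ is tight (via the Zielonka tree automaton), which the paper only remarks on after the proof.

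On the step you flagged: ``the same argument applies with index $i+1$'' is not literally right, because the leftover letters matter for the children. At the node $C_1\setminus\{x_1\}=C_2\cup\{y_1\}$, removing only $y_2$ leaves $m_y=1$, so the status does not change. The second child is $C_2\setminus\{y_2\}$, meaning $y_1$ has to be removed too. The clean invariant is that a node $D$ at depth $d$ satisfies $C_{d+1}\subseteq D$ and $\max(m_x(D),m_y(D))=d+1$. The argument then runs as follows.
\begin{itemize}
\item Subsets can only raise $m_x$ and $m_y$, so any subset of $D$ with opposite status has maximum at least $d+2$.
\item Such a subset therefore avoids all of $x_1,\ldots,x_{d+1}$ or all of $y_1,\ldots,y_{d+1}$. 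Hence it is contained in $D\setminus\{x_1,\ldots,x_{d+1}\}$ or in $D\setminus\{y_1,\ldots,y_{d+1}\}$.
\item Both of these sets have maximum exactly $d+2$ (using $x_{d+2},y_{d+2}\in C_{d+1}$), so they have opposite status and are exactly the two maximal ones.
\item They are incomparable, since one contains $y_{d+1}$ and the other $x_{d+1}$, and both contain $C_{d+2}$, so the invariant is preserved.
\end{itemize}
Nodes at depth $k$ have maximum $k+1$ and are leaves, so the tree is a full binary tree of depth $k$ with $2^k$ leaves.
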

For the proof of this theorem, we first define a suitable family of languages. 
\begin{defi}
For every $k \in \mathbb{N}$, we define $\mathcal{C}^k = (A^k_1, \ldots, A^k_k)$ so that the co-Büchi automata in the COCOA have the joint alphabet $\Sigma^k = \{x_1, \ldots, x_{k+1}, y_1, \ldots, y_{k+1}\}$ and such that for each $1 \leq j \leq k$, a deterministic co-Büchi automaton for $A^k_j$ can be given as in Figure~\ref{fig:firstResultSingleLevelDCW}.
\end{defi}
Intuitively, every automaton $A^k_i$ in a COCOA $\mathcal{C}^k$ accepts those words in which either the letters $x_1 \ldots x_i$ appear only finitely often or the letters $y_1 \ldots y_i$ appear only finitely often. Figure~\ref{fig:dpaForC2} depicts a minimally sized DPW $\mathcal{P}^2$ for $\mathcal{L}(\mathcal{C}^2)$ and provides some intuition on why a DPW for such a COCOA may need to be large: in order to ensure that words are accepted by the DPW that are rejected by $A^2_1$, the DPW's state set needs to be split into those states corresponding to state $q_0^{1,2}$ (on the left) and those corresponding to $q_1^{1,2}$ (on the right), so that a run switching between these infinitely often is accepting. This is implemented by the transitions between the left and right parts of the the DPW having a color of $0$, which is then the dominating color of the run. Within \emph{each} of these separate state sets, however, we also need a split between states corresponding to $q_0^{2,2}$ (the bottom two states in the DPW) and those corresponding to $q_1^{2,2}$ (the top two states in the DPW) to detect when a word should be rejected by the DPW due to it not being accepted by $A^2_2$. 
Transitions between bottom and top states have a color of $1$ to implement that the word is rejected by $\mathcal{P}^2$ if the word is rejected by $A^2_2$ (but accepted by $A^2_1$).
Such a nesting of states from different co-Büchi automata in $\mathcal{C}^k$ is indeed unavoidable, as we show next in order to prove Theorem~\ref{thm:conciseNessWithSingleSuffixLanguage}.

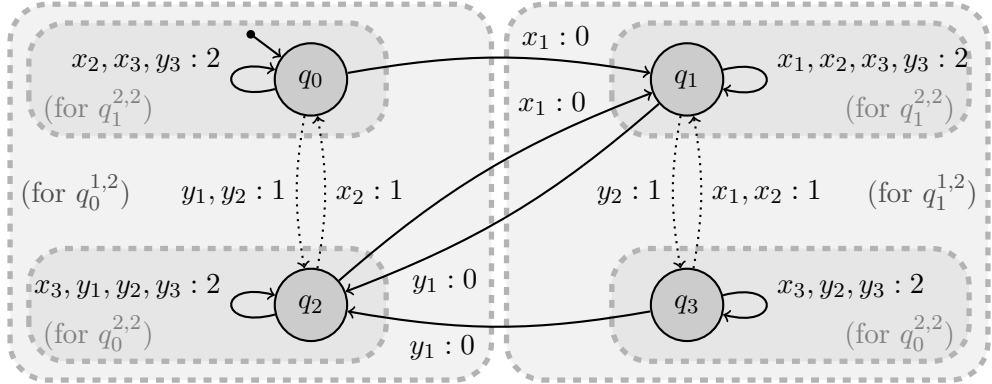
\begin{figure}
\centering\begin{tikzpicture}

\draw[line width=2pt,dashed,rounded corners=0.5cm,color=black!30!white,fill=black!5!white] (-4,-4) rectangle (2.4,1);
\draw[line width=2pt,dashed,rounded corners=0.5cm,color=black!30!white,fill=black!5!white] (2.6,-4) rectangle (9,1);

\draw[line width=2pt,dashed,rounded corners=0.5cm,color=black!30!white,fill=black!10!white] (-3.75,-3.75) rectangle +(4.75,1.5);
\draw[line width=2pt,dashed,rounded corners=0.5cm,color=black!30!white,fill=black!10!white] (-3.75,-0.75) rectangle +(4.75,1.5);
\draw[line width=2pt,dashed,rounded corners=0.5cm,color=black!30!white,fill=black!10!white] (4,-3.75) rectangle +(4.75,1.5);
\draw[line width=2pt,dashed,rounded corners=0.5cm,color=black!30!white,fill=black!10!white] (4,-0.75) rectangle +(4.75,1.5);

\node[anchor=west,color=black!70!white] at (-4,-1.5) {(for $q^{1,2}_0$)};
\node[anchor=east,color=black!70!white] at (9,-1.5) {(for $q^{1,2}_1$)};
\node[anchor=south west,color=black!50!white] at (-3.7,-3.75) {(for $q^{2,2}_0$)};
\node[anchor=south west,color=black!50!white] at (-3.7,-0.75) {(for $q^{2,2}_1$)};
\node[anchor=south east,color=black!50!white] at (8.7,-3.75) {(for $q^{2,2}_0$)};
\node[anchor=south east,color=black!50!white] at (8.7,-0.75) {(for $q^{2,2}_1$)};

\node[state,thick,fill=black!20!white] (a) at (0,0) {$q_0$};
\node[state,thick,fill=black!20!white] (b) at (5,0) {$q_1$};
\node[state,thick,fill=black!20!white] (c) at (0,-3) {$q_2$};
\node[state,thick,fill=black!20!white] (d) at (5,-3) {$q_3$};

\draw[thick,->] (a) to[loop left] node[left,yshift=0.25cm] {$x_2, x_3, y_3: 2$} (a);
\draw[thick,->] (b) to[loop right] node[right,yshift=0.25cm] {$x_1, x_2, x_3, y_3: 2$} (b);
\draw[thick,->] (c) to[loop left] node[left,yshift=0.25cm] {$x_3, y_1, y_2, y_3: 2$} (c);
\draw[thick,->] (d) to[loop right] node[right,yshift=0.25cm] {$x_3, y_2, y_3: 2$} (d); 

\draw[thick,->,pos=0.7] (a) to[bend left=10] node[above] {$x_{1}: 0$} (b);
\draw[thick,->] (c) to[bend left=10] node[above left=-1mm,pos=0.83] {$x_{1}: 0$} (b);
\draw[thick,->] (b) to[bend left=10] node[below right=-1mm,pos=0.83] {$y_{1}: 0$} (c);
\draw[thick,->,pos=0.7] (d) to[bend left=10] node[below] {$y_{1}: 0$} (c);

\draw[thick,->,dotted] (c) to[bend right=10] node[right] {$x_{2}: 1$} (a);
\draw[thick,->,dotted] (a) to[bend right=10] node[left] {$y_1, y_{2}: 1$} (c);
\draw[thick,->,dotted] (d) to[bend right=10] node[right] {$x_1, x_{2}: 1$} (b);
\draw[thick,->,dotted] (b) to[bend right=10] node[left] {$y_{2}: 1$} (d);

\draw[fill=black] ($(a)+(-0.8,0.6)$) circle (0.05cm);
\draw[->,thick] ($(a)+(-0.8,0.6)$) -- (a);

\end{tikzpicture}
\caption{A minimal DPW for $\mathcal{L}(\mathcal{C}^2)$ with a marking of how the states map to combinations of states in a COCOA for the same language. Transitions with a color of $1$ are drawn as dotted.}
\label{fig:dpaForC2}
\end{figure}

We employ ideas from the study of \emph{rerailing automata} \cite{DBLP:journals/corr/abs-2503-08438}, which generalize deterministic parity automata. In particular, we study how strongly connected components in a parity automaton for $\mathcal{L}(\mathcal{C}^k)$ need to be \emph{nested}. The main observation used for proving Theorem~\ref{thm:conciseNessWithSingleSuffixLanguage} that can be obtained in this way is captured in the following lemma:

\begin{lem}
\label{lem:One}
Let $(Q',\delta')$ be a strongly connected component in $\mathcal{P}^k$ consisting only of reachable states and for some $1 \leq i < k$ and $1 \leq j < k$, we have that for any word $w$ in which only letters from $x_i, \ldots, x_{k+1}$ and $y_j, \allowbreak{} \ldots, \allowbreak{} y_{k+1}$ occur, a run for $w$ starting in any state in $Q'$ stays in $(Q',\delta')$.

Then, we have that there are disjoint reachable SCCs $(Q'^x,\delta'^x)$ and $(Q'^y,\delta'^y)$ within $(Q',\delta')$ s.t.
\begin{itemize}
\item for any word $w'$ with only letters from $x_{\max(i,j)+1} \ldots x_{k+1}$ and $y_j, \ldots, y_{k+1}$, any run from a state $q \in Q'^x$ for $w'$ stays in $(Q'^x,\delta'^x)$, and
\item for any word $w'$ with only letters from $x_{i} \ldots x_{k+1}$ and $ y_{\max(i,j)+1}, \ldots, y_{k+1}$, any run from a state $q \in Q'^y$ for $w'$ stays in $(Q'^y,\delta'^y)$.
\end{itemize}
\end{lem}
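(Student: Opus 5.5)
The plan is to obtain both sub-SCCs as \emph{bottom} SCCs of $(Q',\delta')$ restricted to a smaller alphabet, and to derive disjointness from the parity of natural colors.

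Let $m = \max(i,j)$ and define the alphabets $\Sigma^x = \{x_{m+1}, \ldots, x_{k+1}, y_j, \ldots, y_{k+1}\}$ and $\Sigma^y = \{x_i, \ldots, x_{k+1}, y_{m+1}, \ldots, y_{k+1}\}$. Both are subsets of the alphabet $\{x_i, \ldots, x_{k+1}, y_j, \ldots, y_{k+1}\}$ under which $(Q',\delta')$ is closed by assumption.

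\textbf{Construction.} Consider the finite graph on $Q'$ whose edges are the transitions of $\mathcal{P}^k$ labelled by letters of $\Sigma^x$. Every such transition stays in $(Q',\delta')$, and $\mathcal{P}^k$ is deterministic and input-complete. Hence this graph has a bottom SCC, which I take as $(Q'^x,\delta'^x)$. Being bottom, it is closed under all words over $\Sigma^x$, which is the first bullet. It consists of reachable states, since $Q'$ does. Symmetrically, I take a bottom SCC $(Q'^y,\delta'^y)$ for $\Sigma^y$, which gives the second bullet.

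\textbf{Language computation.} I use the description from the text: $A^k_l$ accepts $w$ iff the letters $x_1,\ldots,x_l$ occur only finitely often or the letters $y_1,\ldots,y_l$ occur only finitely often. So the natural color of $w$ is the largest such $l$, and this depends only on which letters occur infinitely often. In particular, $\mathcal{L}(\mathcal{C}^k)$ is prefix-independent, consistent with its single residual language. Three cases matter.
\begin{itemize}
\item If exactly the letters of $\Sigma^x$ occur infinitely often, then $A^k_l$ accepts iff $l \le m$, so the color is $m$. Here $m+1 \le k$ because $i,j<k$, so the level $m+1$ exists and rejects.
\item If exactly the letters of $\Sigma^y$ occur infinitely often, the color is again $m$.
\item If all letters of $\Sigma^x \cup \Sigma^y$ occur infinitely often, then $x_i$ and $y_j$ both occur, so $A^k_l$ accepts iff $l<i$ or $l<j$. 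The color is $m-1$.
\end{itemize}
Colors $m$ and $m-1$ have different parity, so the first two kinds of words are on one side of $\mathcal{L}(\mathcal{C}^k)$ and the third kind on the other side.

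\textbf{Disjointness.} Suppose a state $q$ lies in $Q'^x \cap Q'^y$.
\begin{itemize}
\item Since $(Q'^x,\delta'^x)$ is a bottom SCC of an input-complete restriction, every letter of $\Sigma^x$ labels some transition inside it. So there is a finite word $u$ over $\Sigma^x$ whose run from $q$ returns to $q$, traverses all transitions of $\delta'^x$, and contains every letter of $\Sigma^x$. Let $c_x$ be the minimal color on this cycle.
\item Analogously, obtain $v$ over $\Sigma^y$ with minimal cycle color $c_y$.
\end{itemize}
Let $p$ be a prefix word reaching $q$. The word $p u^\omega$ is recognized with dominating color $c_x$, and by the first case above and prefix-independence, $c_x$ must have the acceptance parity of color $m$. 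The same holds for $c_y$ using $p v^\omega$. But $p (uv)^\omega$ has dominating color $\min(c_x,c_y)$, which has that same parity, while by the third case its natural color is $m-1$. This is a contradiction, so $Q'^x \cap Q'^y = \emptyset$.

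\textbf{Main obstacle.} I expect the delicate step to be justifying that the cycles $u$ and $v$ really contain \emph{every} letter of their alphabet infinitely often, since this is what pins down the natural colors. This is exactly why I use bottom SCCs of the restricted graph rather than arbitrary SCCs. A second point to get right is the case analysis of which $A^k_l$ accept the mixed word, in particular that the edge cases $i=j$ and $m+1\le k$ go through.
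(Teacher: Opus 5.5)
Your proposal is correct and follows essentially the same route as the paper. It restricts $(Q',\delta')$ to the two smaller alphabets to obtain sub-SCCs, and it gets disjointness by interleaving cycles through a shared state: the dominating color $\min(c_x,c_y)$ then has the parity of $\max(i,j)$, while the mixed word has color $\max(i,j)-1$ in $\mathcal{C}^k$. Your explicit choice of \emph{bottom} SCCs and the cycle words $u$, $v$ with $p(uv)^\omega$ make precise what the paper's ``switching whenever $q_\mathit{mix}$ is reached'' construction states only informally.
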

\begin{proof}
We can find the SCC $(Q'^x,\delta'^x)$ as follows:
Consider the set of transitions $T^x$ in $(Q',\delta')$ for letters from $x_{\max(i,j)+1} \ldots x_{k+1}, y_j, \ldots, y_{k+1}$. We use a subset of $T^x$ that forms a transition set of an SCC  as $\delta'^x$. Such a subset has to exist as all transitions from states in $Q'$ for letters in the considered letter set stay in $Q'$, and $Q'$ together with $T^x$ decomposes into SCCs.
We find the SCC $(Q'^y,\delta'^y)$ in the same way but for the letters $x_{i} \ldots x_{k+1}, y_{\max(i,j)+1}, \ldots, \allowbreak{} y_{k+1}$.

The SCCs $(Q'^x,\delta'^x)$ and $(Q'^y,\delta'^y)$ have the needed property: all outgoing transitions for letters in the considered character sets are within $\delta'^x$/$\delta'^y$, respectively, as they consist of all transitions for the respective characters within the SCCs, and due to how they were chosen, there are no outgoing transitions in $\mathcal{P}^k$ for the respective letter set.

To see that $(Q'^x,\delta'^x)$ and $(Q'^y,\delta'^y)$ are disjoint, 
consider first a word $w^x$ containing all letters from $x_{\max(i,j)+1} \ldots x_{k+1}, y_j, \ldots, y_{k+1}$ infinitely often and for which from some $q'^x \in Q'^x$, a run for $w^x$ takes all transitions in $\delta'^x$ infinitely often. Since  $(Q'^x,\delta'^x)$ is an SCC and contains transitions for all these letters, such a word has to exist.
Note that by the definition of $\mathcal{C}^k$, we have that $w^x$ is in the language of $\mathcal{C}^k$ if and only if $\max(i,j)$ is even.
We can build a similar word $w^y$ for $x_{i} \ldots x_{k+1}, y_{\max(i,j)+1}, \ldots, y_{k+1}$. It is also in the language of $\mathcal{C}^k$ if and only if $\max(i,j)$ is even.

If $(Q'^x,\delta'^x)$ and $(Q'^y,\delta'^y)$ would overlap, we could build a word/run combination $w^\mathit{mix}$/$\pi^\mathit{mix}$ from $w^x$ and $w^y$ by 
taking the prefix run/word of $w^x$ until reaching the joint state $q_\mathit{mix} \in Q'^x \cap Q'^y$, removing the stem of $w^y$ (i.e., the characters until when the respective run reaches $q_\mathit{mix}$), and then
switching between the words whenever $q_\mathit{mix}$ is reached along the run for $w^\mathit{mix}$. The resulting word $w^\mathit{mix}$ contains all letters from $x_{i} \ldots x_{k+1}, y_j, \ldots, y_{k+1}$ infinitely often and the run for the word takes all transitions in $\delta'^x \cup \delta'^y$ infinitely often. This means that the dominating color of the run of $w^\mathit{mix}$ is the least dominating color of runs induced by $w^x$ and $w^y$, respectively.

By the definition of $\mathcal{C}^k$, whether $w^\mathit{mix}$ is in the language of $\mathcal{C}^k$ needs to differ, however, from whether $w^x$ and $w^y$ are in the language of $\mathcal{C}^k$, as $w^\mathit{mix}$ is in $\mathcal{C}^k$ if and only if $\max(i,j)$ is odd. Hence, to avoid either $w^\mathit{mix}$, $w^x$, or $w^y$ to be recognized with a color that has the wrong evenness, we have that $Q'^x$ and $Q'^y$ need to be disjoint.
\end{proof}

This lemma can be used in an induction argument over the size of $\mathcal{P}^k$:
\begin{lem}
\label{lem:nofStates}
Let $(Q',\delta')$ be a strongly connected component in $\mathcal{P}$ such that from any state $q \in Q'$, for any word $w$ in which only letters from $x_i, \ldots, x_{k+1}, y_j, \ldots, y_{k+1}$ occur, a run for $w$ starting in $q$ stays in $(Q',\delta')$ (for some $1 \leq i \leq k$ and $1 \leq j \leq k$). %
We have that $Q'$ is of size at least $2^{k-\max(i,j)}$.
\end{lem}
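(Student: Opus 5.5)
We prove the statement by induction on $m = k - \max(i,j)$, using Lemma~\ref{lem:One} to split the component into two disjoint smaller components. Throughout, we take $\mathcal{P}$ to be $\mathcal{P}^k$ and assume that $(Q',\delta')$ consists of reachable states, as in Lemma~\ref{lem:One}. This is harmless: unreachable states can be removed from $\mathcal{P}^k$ without changing its language, and the SCCs in our application are reachable anyway.

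\emph{Base case} ($m = 0$, i.e., $\max(i,j) = k$). We must show $|Q'| \geq 2^0 = 1$. This holds because an SCC contains at least one state.

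\emph{Inductive step} ($m > 0$). Then $\max(i,j) < k$, so $i < k$ and $j < k$, and the hypothesis of Lemma~\ref{lem:One} holds for $(Q',\delta')$. The lemma yields disjoint reachable SCCs $(Q'^x,\delta'^x)$ and $(Q'^y,\delta'^y)$ inside $(Q',\delta')$.
\begin{itemize}
\item Runs from $Q'^x$ stay inside it on every word over $x_{\max(i,j)+1},\ldots,x_{k+1},y_j,\ldots,y_{k+1}$. So $(Q'^x,\delta'^x)$ satisfies the present statement with parameters $i' = \max(i,j)+1 \leq k$ and $j' = j$, and $\max(i',j') = \max(i,j)+1$.
\item Symmetrically, $(Q'^y,\delta'^y)$ satisfies it with parameters $i'' = i$ and $j'' = \max(i,j)+1 \leq k$, again with maximum $\max(i,j)+1$.
\end{itemize}
In both cases the quantity $k - \max$ has dropped to $m-1$. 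The induction hypothesis therefore gives $|Q'^x|, |Q'^y| \geq 2^{k-\max(i,j)-1}$. Since $Q'^x$ and $Q'^y$ are disjoint subsets of $Q'$,
\[
|Q'| \geq |Q'^x| + |Q'^y| \geq 2^{k-\max(i,j)}.
\]

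The substantive work, the disjointness argument, is already done in Lemma~\ref{lem:One}. What remains to check is bookkeeping:
\begin{itemize}
\item The new parameters stay within the range $1..k$, which holds because $\max(i,j)+1 \leq k$ in the inductive step.
\item The SCCs produced by Lemma~\ref{lem:One} satisfy exactly the closure hypothesis of the present statement, which is how that lemma's conclusion is phrased.
\item They are again reachable SCCs, so Lemma~\ref{lem:One} can be applied to them at the next level.
\end{itemize}

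For Theorem~\ref{thm:conciseNessWithSingleSuffixLanguage}, we apply the statement with $i = j = 1$ to a bottom SCC reached from the initial state. Such an SCC exists and is closed under all letters. This gives $2^{k-1}$ states; obtaining the full $2^k$ stated in the theorem would need a slightly sharper base case or count, which we leave to that proof.
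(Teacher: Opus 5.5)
Your proof is correct and is essentially the paper's own argument. It uses backward induction on $\max(i,j)$ with the trivial one-state base case, and Lemma~\ref{lem:One} supplies two disjoint sub-SCCs with parameter maximum $\max(i,j)+1$, each of size at least $2^{k-\max(i,j)-1}$ by the induction hypothesis. Your explicit reachability assumption is needed: an unreachable SCC of $\mathcal{P}^k$ need not satisfy the bound, and the paper uses this assumption implicitly through Lemma~\ref{lem:One}.

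Your closing remark is also right. With $i=j=1$ the lemma as stated yields only $2^{k-1}$, so the $2^k$ claimed in the proof of Theorem~\ref{thm:conciseNessWithSingleSuffixLanguage} needs a sharper base case. That sharper base case is available, because the disjointness argument of Lemma~\ref{lem:One} still works when $\max(i,j)=k$ (using the letters $x_{k+1}$ and $y_{k+1}$), which forces at least two states there.
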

\begin{proof}
We prove the claim by induction over $\max(i,j)$, starting from the case $\max(i,j)=k$ and progressing backwards.
For the induction basis ($\max(i,j)=k$), this claim is trivially true, as at least one state is needed in $(Q',\delta')$.

For the induction step, consider a concrete combination of $(i,j)$ with $i < k$ and $j<k$ (so that the induction basis does not apply). Lemma \ref{lem:One} states that there are distinct sub-SCCs $(Q'^x,\delta'^x)$ and $(Q'^y,\delta'^y)$ within $(Q',\delta')$ for letters from $x_{\max(i,j)+1} \ldots x_{k+1}, y_j, \ldots, y_{k+1}$ and $x_{i} \ldots x_{k+1}, y_{\max(i,j)+1}, \ldots, y_{k+1}$, respectively. By the induction hypothesis, these each have sizes of $2^{k-\max(i,j)-1}$. As $(Q',\delta')$ has both of these as distinct sub-SCCs, $Q'$ needs to have at least $2^{k-\max(i,j)}$ states.
\end{proof}
We are now ready to prove Theorem~\ref{thm:conciseNessWithSingleSuffixLanguage}. Note that it has not been proven yet that $\mathcal{C}^k$ is actually a canonical COCOA of the language it represents, which requires that every word is accepted with its natural color w.r.t.~the language of $\mathcal{C}^k$. Hence, the following proof starts with establishing this fact.
\begin{proof}[Proof of Theorem~\ref{thm:conciseNessWithSingleSuffixLanguage}]
We first prove that $\mathcal{C}^k$ is the COCOA of some language (for every $k \in \NN$).
To see this, consider first some word $w$ that is rejected by $A_1^k$. Then, 
both $x_1$ and $y_1$ appear in the word infinitely often. Injecting additional letters does not change that the word is rejected, and hence words rejected by $A_1^k$ have a natural color of $0$.

For the other automata, we show by induction that if an automaton ${A}^k_i$ is the one with smallest index accepting some word $w$, then the word has a natural color of $i$ w.r.t.~the language of $\mathcal{C}^k$. 
So let us assume that $w$ is accepted by $A_i^k$ but rejected by ${A}_{i+1}^k$ (if $i<k$). Then the word either contains $x_{i+1}$ infinitely often and all characters $x_1, \ldots, x_{i}$ only finitely often, or $y_{i+1}$ infinitely often  and all characters $y_1, \ldots, y_{i}$ only finitely often.
Any injection either maintain this property (hence keeping whether the word is in the language of $\mathcal{C}^k$) or injects characters from $x_1, \ldots, x_{i}$ or $y_1, \ldots, y_{i}$ infinitely often, and then the resulting word has a natural color that is strictly smaller.

For proving the size bound, applying Lemma~\ref{lem:nofStates} on $x_1, \ldots, x_{k+1}, y_1, \ldots, y_{k+1}$ and any SCC of $(Q,\delta)$ without outgoing edges yields that at least $2^{k}$ many states are needed for $\mathcal{P}^k$. Note that such an SCC always exists.
\end{proof}

We note that for the family of languages defined in this section, the size bound of Theorem~\ref{thm:conciseNessWithSingleSuffixLanguage} is actually tight, as by generalizing the construction depicted in Figure~\ref{fig:dpaForC2}, we can obtain parity automata $\mathcal{P}^k$ of size exactly $2^{k}$.

\section{COCOA disjunction/conjunction can cause an exponential blow-up}
\label{sec:conjunction}

We have seen in the previous section that COCOA can be exponentially more concise than deterministic parity automata. 
But how \emph{brittle} is this conciseness? In particular, can it be that a language can be represented concisely with COCOA (when compared to a DPW representation), but when processing the language, conciseness is shattered by the operation performed on the language? 
In turns out that this is indeed the case when considering conjunction and disjunction operations on COCOA, as we show in this section. 
We define two families of languages that can be concisely represented and prove that when taking their conjunction or disjunction, an exponential blow-up is unavoidable for this family. In contrast, conjunction or disjunction can be performed with polynomial blow-up when using a DPW representation for this family of languages.

{
We note that the blow-up is unrelated to any automaton size increase potentially caused by disjunction or conjunction operations on \hdcw{}, of which the COCOA are composed. 
Rather, the change in conciseness is caused by a restructuring of how the language to represent is mapped to the COCOA levels. We also note that in the general case, taking the conjunction or disjunction of DPWs has an unavoidable exponential blow-up \cite{DBLP:conf/lpar/Boker18}. %
}

We start by introducing the first family of languages $\{L^k\}_{k \in \NN}$ that have 
COCOA of size polynomial in $k$, but for which the number of residual languages is exponential in $k$. 

\begin{defi}
\label{def:toyLanguage}
Let $k \in \NN$ be given. We set $\Sigma = \{X_1, \ldots, X_k, Y_1, \ldots, Y_k, a_0, \ldots, a_{4k-1}\}$ and define $L^k = \mathcal{L}(L_1^k, \allowbreak{} \ldots, \allowbreak{} L_n^k)$ for the following sequence of language $L^k_1, \ldots, L^k_k$, where $1 \leq i \leq k$:
\begin{align*} 
L^k_i & = ((\Sigma \setminus \{ X_i \}) + X_i (\Sigma \setminus \{ X_i \})^* X_i )^* (a_0 + \ldots + a_{4k-2i+1} )^\omega + \Sigma^* (a_0 + \ldots + a_{4k-2i} )^\omega
\end{align*}
\end{defi}
Each language $L_i^k$ only includes words that eventually only contain lower-case letters. Which such words are in the language depends on which lower-case letters are infinitely often contained, and their order does not matter. If the number of $X_i$ letters at the beginning of the word is even, then the set of characters that can appear infinitely often in the word is slightly larger by also including $a_{4k-2i+1}$. For all $L^k_i$, the set of letters that may occur infinitely often is strictly larger than for $L^k_{i+1}$. Whether the number of $X_i$ letters in a word is even or odd is only relevant for $L^k_i$, but not for $L^k_j$ for $i \neq j$.
We will next show that
\begin{itemize}
\item each language $L^k_i$ is a co-Büchi language, and there exists a deterministic co-Büchi automaton for $L^k_i$ with 2 states, and
\item $\Sigma^\omega$ has words with natural colors of $0\ldots k$ and $L_i^k$ accepts exactly the words with a natural color of $i$ or more (w.r.t.~$L^k$) -- hence, the co-Büchi automata for $L^k_1, \ldots, L^k_k$ together form a valid COCOA.
\end{itemize}

\begin{lem}
\label{lem:gfgcoBuchiSizes}
Let $L^k_i$ be a language as defined in Def.~\ref{def:toyLanguage}. There exists a deterministic co-Büchi automaton with transition-based acceptance for $L^k_i$ with two states.
\end{lem}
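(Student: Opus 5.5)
The plan is to write down an explicit two-state deterministic co-Büchi automaton in which the states track the parity of the number of $X_i$ letters read so far, and then verify its language against Definition~\ref{def:toyLanguage}.

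\textbf{Simplifying the language.} The expression $((\Sigma \setminus \{ X_i \}) + X_i (\Sigma \setminus \{ X_i \})^* X_i )^*$ describes exactly the finite words containing an even number of $X_i$ letters. Write $E_i = \{a_0, \ldots, a_{4k-2i+1}\}$ and $O_i = \{a_0, \ldots, a_{4k-2i}\}$; since $i \geq 1$, we have $O_i \subsetneq E_i \subseteq \{a_0,\ldots,a_{4k-1}\}$. Then $L^k_i$ is the set of words $w$ such that either
\begin{itemize}
\item $w$ contains finitely many $X_i$ letters, an even number of them, and eventually only letters from $E_i$; or
\item $w$ eventually contains only letters from $O_i$.
\end{itemize}
In both cases $X_i$ occurs only finitely often, because $X_i \notin E_i$.

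\textbf{The automaton.} I take states $q_{e}$ (even number of $X_i$ read, initial) and $q_{o}$ (odd number). The transitions are:
\begin{itemize}
\item Reading $X_i$ switches between $q_e$ and $q_o$ via a rejecting transition (color $1$).
\item In $q_e$, each letter of $E_i$ is an accepting self-loop (color $2$).
\item In $q_o$, each letter of $O_i$ is an accepting self-loop (color $2$).
\item Every other letter is a rejecting self-loop in its state.
\end{itemize}
The automaton is deterministic and input-complete. The color depends only on the state and the letter, as the paper's conventions require.

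\textbf{Verifying the language.} A run is accepting iff from some point on only color-$2$ transitions are taken.
\begin{itemize}
\item Such a suffix never reads $X_i$, so the run eventually stays in one state. That state is $q_e$ exactly when the total number of $X_i$ letters is even.
\item If the run stays in $q_e$, the suffix lies in $E_i^\omega$ and the $X_i$ count is even, so $w$ satisfies the first condition. If it stays in $q_o$, the suffix lies in $O_i^\omega$, so $w$ satisfies the second condition.
\item Conversely, suppose $w \in L^k_i$. Then $X_i$ occurs finitely often, so the run eventually stays in the state given by the final parity. If $w$ satisfies the first condition, that state is $q_e$ and the $E_i$-suffix uses only accepting transitions. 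If $w$ satisfies the second condition, the $O_i$-suffix is accepting in either state, because $O_i \subseteq E_i$.
\item A word with infinitely many $X_i$ letters takes infinitely many rejecting transitions and is rejected, as it should be.
\end{itemize}

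The only real subtlety is reading the regular expression correctly. One must note that the first disjunct forces an even and finite $X_i$ count, and that $O_i \subseteq E_i$, which is what makes the odd state's accepting set suffice for the second disjunct from either state. Everything else is routine.
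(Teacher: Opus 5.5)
Your proof is correct and follows the paper's approach. Your automaton is exactly the one in Figure~\ref{def:toyLanguageAutomaton}, with $q_e, q_o$ playing the roles of $q_0, q_1$; the paper only exhibits it without argument, so your language verification supplies the missing details, in particular that $O_i \subseteq E_i$ makes the second disjunct accepted from either state.
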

\begin{proof}
The automaton shown in Figure~\ref{def:toyLanguageAutomaton} accepts the desired language.
\end{proof}

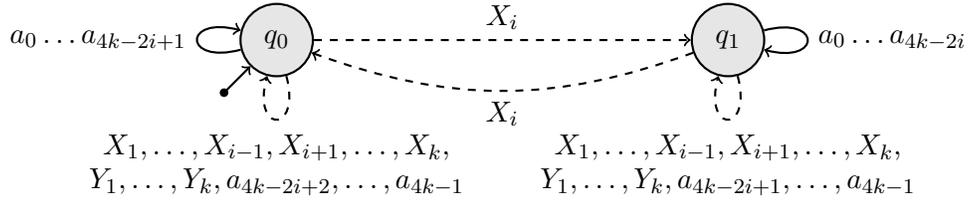
\begin{figure}
\centering
\begin{tikzpicture}

\node[state,thick,fill=black!10!white] (a) at (0,0) {$q_0$};

\node[state,thick,fill=black!10!white] (b) at (6,0) {$q_1$};

\draw[fill=black] ($(a)+(-0.7,-0.7)$) circle (0.05cm);
\draw[->,thick] ($(a)+(-0.7,-0.7)$) -- (a);

\draw[->,thick,dashed] (a) to[bend left=00] node[above] {$X_i$} (b);
\draw[->,thick,dashed] (b) to[bend left=20] node[below] {$X_i$} (a);
\draw[->,thick,dashed] (a) to[loop below] node[below] {
\begin{tabular}{c}$X_1, \ldots, X_{i-1}, X_{i+1}, \ldots, X_k$, \\ $Y_1, \ldots, Y_k, a_{4k-2i+2}, \ldots, a_{4k-1}$ \end{tabular}} (a);
\draw[->,thick] (a) to[loop left] node[left] {$a_0 \ldots a_{4k-2i+1}$} (a);

\draw[->,thick,dashed] (b) to[loop below] node[below] {
\begin{tabular}{c}$X_1, \ldots, X_{i-1}, X_{i+1}, \ldots, X_k$, \\ $Y_1, \ldots, Y_k, a_{4k-2i+1}, \ldots, a_{4k-1}$ \end{tabular}} (b);
\draw[->,thick] (b) to[loop right] node[right] {$a_0 \ldots a_{4k-2i}$} (b);
\end{tikzpicture}
\caption{A deterministic co-Büchi automaton for the language $L^k_i$. Rejecting transitions are dashed.}
\label{def:toyLanguageAutomaton}
\end{figure}

We note that there does not exist a smaller history-deterministic (or deterministic) automaton for the same language as there are two residual languages in the automaton: the language of state $q_0$ includes $(a_{4k-2i+1})^\omega$, whereas the language of state $q_1$ does not. As Abu Radi and Kupferman have shown that every co-Büchi language has a smallest \hdcw{} in which every state is labeled by its residual language \cite{DBLP:journals/lmcs/RadiK22}, we hence know that there does not exist a one-state \hdcw{} for this language.

An interesting property of the definition is that the overall $\omega$-regular language induced by the chain of languages has a number of residual languages that is exponential in the number of automaton states in the chain.

\begin{lem}
\label{lem:numberOfSuffixLanguages}
The number of residual languages of $L^k$ is $2^k$. In particular, there is one residual language for each combination of whether letter $X_i$ has been seen an even or odd number of times so far (for $1 \leq i \leq k$).
\label{lem:suffixLanguages}
\end{lem}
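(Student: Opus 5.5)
We prove the two directions separately: an upper bound of $2^k$ on the number of residual languages, and a lower bound showing that all $2^k$ candidates are pairwise distinct. Throughout, we use that $L^k = \mathcal{L}(L^k_1,\ldots,L^k_k)$ is determined by the highest index $i$ with $w \in L^k_i$ (or by $w \notin L^k_1$). Membership of a word in $L^k$ is therefore fixed once we know, for every $i$, whether the word is in $L^k_i$.

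\textbf{Upper bound.} For a finite prefix $u \in \Sigma^*$, let $p(u) \in \{0,1\}^k$ be the parity vector recording, for each $1 \leq i \leq k$, whether $X_i$ occurs an even or odd number of times in $u$. Read off the automaton of Figure~\ref{def:toyLanguageAutomaton} (Lemma~\ref{lem:gfgcoBuchiSizes}): the state reached after $u$ in the deterministic two-state automaton for $L^k_i$ is $q_0$ if $p(u)_i=0$ and $q_1$ otherwise. Hence for every $w' \in \Sigma^\omega$, whether $uw' \in L^k_i$ depends only on $p(u)_i$ and $w'$. Consequently $L^k|_u$ depends only on $p(u)$, so there are at most $2^k$ residual languages.

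\textbf{Lower bound.} Take prefixes $u,v$ with $p(u) \neq p(v)$, and pick an index $i$ where they differ, say $p(u)_i = 0$ and $p(v)_i = 1$. We need a suffix $w'$ that separates them. The natural candidate is $w' = (a_{4k-2i+1})^\omega$, which lies in $\mathcal{L}(q_0)$ but not in $\mathcal{L}(q_1)$ of the automaton for $L^k_i$. For every $j$ we then check where $uw'$ and $vw'$ lie:
\begin{itemize}
\item For $j < i$: $4k-2i+1 \leq 4k-2j-1 < 4k-2j$, so the letter is in the unconditional part of $L^k_j$ and both words are in $L^k_j$.
\item For $j > i$: $4k-2i+1 \geq 4k-2j+3 > 4k-2j+1$, so both words are rejected by $L^k_j$.
\item For $j = i$: $uw' \in L^k_i$, while $vw' \notin L^k_i$, since $4k-2i+1 > 4k-2i$.
\end{itemize}
Thus $uw'$ has natural color $i$ and $vw'$ has natural color $i-1$. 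These have different parities, so exactly one of $uw'$ and $vw'$ lies in $L^k$, and $L^k|_u \neq L^k|_v$.

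Finally, every parity vector is realized by the prefix $\prod_{i : b_i = 1} X_i$, so all $2^k$ residual languages actually occur.

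The main point requiring care is this index arithmetic: showing that the single suffix $(a_{4k-2i+1})^\omega$ affects only level $i$, which is exactly what the spacing of the lower-case letters by $2$ per level ensures. When $i = 1$, the natural color $i-1=0$ corresponds to rejection by $L^k_1$, which is the case $w \notin L_1$ and so counts as in $L^k$. The parity argument therefore still applies.
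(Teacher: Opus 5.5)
Your proof is correct and follows the paper's approach: the same suffix $(a_{4k-2i+1})^\omega$ separates prefixes with different $X_i$-parity, and your level-by-level index check is a more explicit version of the paper's appeal to the neighbouring levels $L^k_{i-1}$ and $L^k_{i+1}$. You also go beyond the paper's proof by making explicit the upper bound (the residual depends only on the parity vector, via the deterministic two-state automata) and the fact that all $2^k$ parity vectors occur; the one nit is that what you compute is the highest chain index rather than the natural color, and that index is all that membership in $L^k$ requires (natural colors are only matched to the chain later, in Lemma~\ref{lem:correctNaturalColors}).
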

\begin{proof}
To prove the claim, we need to show that for each two prefix words $w'^1 \in \Sigma^*$ and $w'^2 \in \Sigma^*$, if $w'^1$ and $w'^2$ differ on the evenness of the number of $X_i$ characters (for some $1 \leq i \leq k$), there exists an infinite word $w \in \Sigma^\omega$ such that whether $w'^1 w \in L^k$ differs from whether $w'^2 w \in L^k$.

Let $1 \leq i \leq n$ be such that, without loss of generality, the number of $X_i$ characters in $w'^1$ is even while the number of $X_i$ characters in $w'^2$ is odd.
Then, $w = (a_{4k-2i+1})^\omega$ is such a word. By the definition of $L_i^k$ and $L_{i+1}^k$, we have that $w'^1 w \in L^k$ if and only if $i$ is even. By the definition of $L_i^k$ and $L_{i-1}^k$, we have that $w'^2 w \in L^k$ if and only if $i$ is odd. 
\end{proof}

Before defining the COCOA to combine the one for $L^k$ with, we need to prove that a sequence of co-Büchi automata for $L_1^k, \ldots, L_k^k$ is indeed a COCOA for $L^k$, i.e., that it recognizes each word with its natural color.

\begin{lem}
\label{lem:correctNaturalColors}
Let $k \in \NN$ and $L_1^k, \ldots, L_k^k$ be languages defined for $k \in \NN$ according to Def.~\ref{def:toyLanguage}. 
We have that every language $L^k_i$ contains exactly the words that have a natural color of $i$ regarding $L^k$.
\end{lem}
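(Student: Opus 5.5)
We show that for every word $w$, the natural color of $w$ with respect to $L^k$ equals $c(w) := \max(\{0\} \cup \{ i \mid w \in L^k_i\})$. Then $L^k_i$ is exactly the set of words with natural color at least $i$; we read the statement in this sense, as in Subsection~\ref{subsec:additionalCOCOALemmata}. First we record that $L^k_1 \supseteq \ldots \supseteq L^k_k$. Membership of $w$ in $L^k_i$ depends only on two things: the set $\mathit{Inf}(w)$ of letters occurring infinitely often in $w$ (which must consist of lower-case letters), and the parity of the number of $X_i$ letters in $w$ (which is finite if $w \in L^k_i$). The threshold $4k-2i$ or $4k-2i+1$ decreases by at least $1$ when passing from $i$ to $i+1$, since $4k-2(i+1)+1 = 4k-2i-1 < 4k-2i$. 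Hence the chain is falling regardless of parities. With this, $w \in L^k$ if and only if $c(w)$ is even.

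The proof proceeds by induction on the color $i$, mirroring the inductive structure of Definition~\ref{def:naturalColorsOfWords}. Throughout we take $J = \NN$; the set of injection points is irrelevant here.

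\textbf{Color $0$.} Words with $c(w)=0$ are those not in $L^k_1$. For such a word, either some upper-case letter or some $a_m$ with $m > 4k-1$ occurs infinitely often, or $a_{4k-1}$ occurs infinitely often together with an odd number of $X_1$ letters. Since $4k-2+1 = 4k-1$, the letter $a_{4k-1}$ alone is allowed only in the even-$X_1$ case. We need every injection $w'$ to stay outside $L^k$. Injections cannot remove infinitely occurring letters. An injection of finitely many $X_1$'s would change the parity, but the injection must be residual language invariant. By Lemma~\ref{lem:numberOfSuffixLanguages}, the residual language determines the parity of every $X_j$ count, so every injection preserves all $X_j$ parities. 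Also $\mathit{Inf}(w') \supseteq \mathit{Inf}(w)$. Hence $w' \notin L^k_1$, so $w'$ has color $0$ in the sense of $c$. For $w \notin L^k_1$ this means $w' \notin L^k$ as well, so $w$ is at home in color $0$. Conversely, a word $w \in L^k_1$ is not at home in $0$. Injecting a fresh upper-case letter such as $Y_1$ infinitely often, or $X_1X_1$ infinitely often (both residual invariant), yields a word outside $L^k_1$, hence not in $L^k$, while $w \in L^k$ or not according to $c(w)$. If $c(w)$ is odd, $w \notin L^k$ and color $0$ is even, so again $w$ is not at home in $0$.

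\textbf{Inductive step.} Suppose $c(w) = i \geq 1$. To show $w$ is at home in $i$, take any injection $w'$. Parities are preserved as argued above and $\mathit{Inf}$ only grows, so $c(w') \leq i$ by monotonicity of the defining thresholds. If $c(w') < i$, the induction hypothesis gives that $w'$ is at home in a smaller color. If $c(w') = i$, then $w$ and $w'$ are both in $L^k$ or both outside it, matching the parity of $i$. To show $w$ is not at home in any smaller color $i' < i$, we must produce, for each choice of $J$, an injection $w'$ with $c(w') = i'' \geq i'$ of the wrong parity for $i'$, or otherwise violate the condition. We construct an injection of lower-case letters lowering the color to exactly $i'+1$ (or $i'$). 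Let $p$ be the parity of the $X_{i'+1}$ count. Injecting $a_t$ infinitely often at points of $J$ for a suitable $t \in \{4k-2(i'+1), 4k-2(i'+1)+1\}$ puts $w'$ in $L^k_{i'+1}$ but not in $L^k_{i'+2}$. Such injections are residual invariant, because lower-case letters do not change the $X$-parities and, by Lemma~\ref{lem:numberOfSuffixLanguages}, the residual is determined by those parities. Then $w'$ has natural color $i'+1$ by induction, which is not smaller than $i'$, and $w' \in L^k$ if and only if $i'+1$ is even. This contradicts the requirement for color $i'$.

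\textbf{Main obstacle.} The delicate step is choosing $t$ so that, given the parity of the $X_{i'+1}$ count, the word lands exactly at level $i'+1$ rather than skipping levels. The thresholds $4k-2j$ and $4k-2j+1$ interleave across consecutive $j$. Each level $j$ has the two candidates $a_{4k-2j}$ and $a_{4k-2j+1}$, and one of them works for either parity: $a_{4k-2j}$ always puts the word into $L^k_j$, and $a_{4k-2j}$ is excluded from $L^k_{j+1}$ because its maximum threshold is $4k-2j-1$. So taking $t = 4k-2(i'+1)$ suffices. Checking this carefully, together with the fact that the largest index among the infinitely occurring letters determines $c$, is the core calculation.
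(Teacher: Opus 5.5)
Your overall route is the paper's. Residual-language invariant injections preserve every $X_j$-parity and can only enlarge the set of infinitely occurring letters, so the level $c(w)$ can only drop. Lower-case injections then push a word down to a prescribed level. Your uniform choice $a_{4k-2j}$, which lands at level exactly $j$ for every $1\le j\le c(w)$, is tidier than the paper's case split on the largest infinitely occurring index. However, two steps fail as written.

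First, in the color-$0$ paragraph you have the acceptance convention reversed. By the definition of $\mathcal{L}(L_1,\ldots,L_n)$, a word outside $L^k_1$ is \emph{in} $L^k$, since color $0$ is even; your own remark ``$w\in L^k$ iff $c(w)$ is even'' says the same. For the ``at home in $0$'' direction this is harmless, because what matters is $c(w')=c(w)=0$. Your converse argument, however, breaks. Injecting $Y_1$ or $X_1X_1$ infinitely often yields a word in $L^k$. So for $w$ with even $c(w)\ge 2$ (hence $w\in L^k$), this extension is consistent with color $0$ and witnesses nothing. One must instead drop to the odd level $1$, e.g.\ by injecting $a_{4k-2}$ infinitely often. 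That is exactly your inductive-step construction with $i'=0$, so this part of the paragraph should simply be removed.

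Second, in the inductive step your witness against a color $i'<i$ lands at level $i'+1$. You then invoke the induction hypothesis to conclude that $w'$ has natural color $i'+1$ and hence is not at home below $i'$. For $i'=i-1\ge 1$, that level is $i$ itself, where the hypothesis is not yet available, so this step is circular. The repair has two cases:
\begin{itemize}
\item If $i'\not\equiv i \pmod 2$, the membership of $w$ itself in $L^k$ contradicts the parity of $i'$. Then the second disjunct of Definition~\ref{def:naturalColorsOfWords} fails for every extension, and you only need an extension that is not at home below $i'$. Injecting $a_{4k-2i'}$ infinitely often (for $i'\ge 1$; for $i'=0$ any extension will do) lands at level exactly $i'<i$, which the hypothesis covers.
\item If $i'\equiv i \pmod 2$, then $i'\le i-2$, so $i'+1<i$ and your argument is valid as written.
\end{itemize}
With these two corrections the proof goes through.
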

\begin{proof}

First of all, note that for every $i \in \NN$, we have that $L^k_{i} \supset L^k_{i+1}$ (if $i < k$). This is because $L^k_i$ contains all words ending with $(a_0 + \ldots + a_{4k-2i})^\omega$ whereas all words contained in $L^k_{i+1}$ need to end with $(a_0 + \ldots + a_{4k-2(i-1)+1})^\omega$. Thus, we have $L^k_1 \supset \ldots \supset L^k_k$.%

We now prove by induction that for each $1 \leq i \leq k$, the language $L^k_i$ contains exactly the words whose natural color regarding $L^k$ is at least $i$.

\emph{Induction basis:} Let $w$ be a word. We need to show that exactly the words $w \in L^k$ for which every sequence of injections leading to a word $w'$ that has $w' \in L^k$ as well are the ones \emph{not} in $L^k_1$. As all concerned languages only care about the letters occurring along a word, but not \emph{where} they occur in a word, we do not have to reason over the set of indices at which residual language invariant words are injected into $w$.

$\Rightarrow$: So let $w$ be a word not in $L^k_1$. We need to show that then for a word $w'$ that extends $w$, we have $w' \in L^k$ as well.
We split on the possible reasons for why we can have $w \notin L^k_1$.
\begin{itemize}
\item If there are infinitely many letters from $X_1, \ldots, X_k, Y_1, \ldots, Y_k$ in $w$, then injecting additional letters does not change this, and hence $w'$ is not in $L^k_1$ as well, which implies that $w' \in L^k$.
\item If there are infinitely many letters $a_{4k-1}$ in $w$ and the previous case does not hold, then there is an odd number of $A_1$ letters in $w$. In $w'$,  either infinitely many letters from $X_1, \ldots, X_k, Y_1, \ldots, Y_k$ are injected, and then $w'$ is in $L^k$ by the reasoning above, or finitely many such letters are injected, but an even number of $A_1$ letters (as otherwise some injection is not residual language invariant). In this case, because then $a_{4k-1}$ is still infinitely often in $w'$, and an odd number of $X_i$ letters is in $w'$, we also have that $w' \notin L^k$.
\end{itemize}
As all other words are in $L^k_1$, this direction of the induction basis is finished.

$\Leftarrow$: Let $w$ be a word in $L^k$ such that for every every word $w'$ that extends $w$, we have $w' \in L^k$ as well. We have to show that then, $w \notin L^k_1$ holds.

Let $w$ be such a word. If $w$ contains infinitely many $X_i$ or $Y_i$ letters (for some $1 \leq i \leq k$), then $w$ and $w'$ are not in $L^k_1$. In all other cases, we can assume that there is an odd number of $X_1$ letters in the word and there are infinitely many $a_{4k-1}$ letters in the word as otherwise by injecting infinitely many $a_{4k-2}$ letters into $w$, we can ensure that $w' \in L^k_1$ but not $w' \in L^k_2$, which contradicts the premise. Such words $w$ are also rejected by $L^k_1$, which completes this part of the induction basis.

\emph{Induction step:}
In the induction step, we have to prove for $i \geq 2$ that under the assumption that $L_{i-1}$ accepts exactly the words with a natural color of at least $i-1$ (w.r.t~$L^k$), $L^k_i$ does not contain the words $w$ with a natural color of less than $i$. This is equivalent to stating that $L^k_i$ rejects exactly the words $w$ for which for every extension, either (a) the resulting word $w'$ is in $L^k$ if and only if $w$ is, or (b) the resulting word has a natural color of less than $i$. 

$\Rightarrow$: Let $w \notin L^k_i$. We need to show that for every extension of $w$, either (a) the resulting word $w'$ is in $L^k$ if and only if $w$ is, or (b) the resulting word has a natural color of less than $i$. 

Let such a word $w$ be given, and consider a word $w'$. We perform a case split:
\begin{itemize}
\item If $w'$ contains infinitely many letters of the form $X_i$ or $Y_i$ (for some $1 \leq i \leq k$), then it is not accepted by $L^k_i$ nor by $L^k_1$. By the inductive hypothesis, the claim holds in this case as then the word has a natural color of $0$.
\item If $w'$ does not contain infinitely many letters of the form $X_i$ or $Y_i$ (for some $1 \leq i \leq k$), then so does $w$. Let $a_h$ be the letter with the highest index $h$ occurring infinitely often in $w$ and $a_{h'}$ be the letter with highest index $h'$ occurring infinitely often in $w'$. Every capital letter $X_i$ can only be injected an even number of times as otherwise $w'$ would not extend $w$ (see Lemma~\ref{lem:suffixLanguages}). In this case, as injecting additional $a_l$ letters for some $0 \leq l \leq 4k-1$ cannot make a word not in $L^k_i$ contained in $L^k_i$, we have by the inductive hypothesis that $w'$ either has a natural color that is lower than $i$, or we have that $w' \in L^k_{i-1}$, and then we have that $w \in L^k$ if and only if $w' \in L^k$.
\end{itemize}

$\Leftarrow$: Let $w$ be a word such that for every word $w'$ resulting from residual language invariant word injections, we have that either $w'$ is in $L^k$ if and only if $w \in L^k$, or (b) the resulting word has a natural color of strictly less than $i$. We need to show that then, $w \notin L^k_i$.

Let $w$ be such a word. If $w$ contains infinitely many $X_i$ or $Y_i$ letters for some $1 \leq i \leq k$, then it is in $L^k$, and so is $w'$. For all other words, we do a case split on the highest index $h$ such that $w$ contains infinitely many letters $a_h$.

\begin{itemize}
\item If $h \geq 4k-2i+2$, then $L^k_{i}$ neither contains $w$ nor $w'$.
\item If $h = 4k-2i+1$ and the number of $X_i$ letters in $w$ is odd, then $L^k_{i}$ does not contain $w$ or $w'$ either.
\item Assume that $h = 4k-2i+1$ and the number of $X_i$ letters in $w$ is even, then we can extend $w$ to $w'$ by infinitely many $a_{h+1}$ letters. 
This makes $w \in L^k_{i}$ and $w' \notin L^k_{i}$. Since $\{w,w'\} \subseteq L^k_{i-1}$, we have that $(w \in L^k) \neq (w' \in L^k)$. However, as $w' \in L^k_{i-1}$, by the inductive hypothesis, we cannot have that the natural color of $w'$ is strictly less than $i$.
Hence, the assumption that $h = 4k-2+1$ and the number number of $X_i$ letters in $w$ is even has to be incorrect.
\item Assume that $h \leq 4k-2i$. Then we can inject infinitely many letters $a_{h+1}$ into $w$ if $h$ is odd, and $a_{h+2}$ if $h$ is even. By the definition of the languages $L^k_1, \ldots, L^k_k$ and $L^k$, these injections change whether the word is in $L^k$. However, both $w$ and its extension are in $L^k_{i-1}$ in this case, which contradicts the premise that every extension of $w$ that differs from $w$ w.r.t.~its containment in $L^k$ has a natural color that is less than $i$. \qedhere
\end{itemize}
\end{proof}
Let us now define the family of languages to combine the COCOA for $L^k$ with.

\begin{defi}
\label{def:toyLanguageB}
Let $k \in \NN$ be given. We set $\Sigma = \{X_1, \ldots, X_k, Y_1, \ldots, Y_k, a_0, \ldots, a_{4k-1}\}$ 
and define $\hat L^k = \mathcal{L}(\hat L^k_1, \allowbreak{} \ldots, \allowbreak{} \hat L^k_n)$ for the following sequence of languages, where $1 \leq i \leq k$:
\begin{align*} 
\hat L^k_i & = ((\Sigma \setminus \{ Y_i \}) + Y_i (\Sigma \setminus \{ Y_i \})^* Y_i )^* (a_{2i-2} + \ldots + a_{4k-1} )^\omega + \Sigma^* (a_{2i-1} + \ldots + a_{4k-1} )^\omega
\end{align*}
\end{defi}

Note that the properties of $L^k$ established in Lemma~\ref{lem:gfgcoBuchiSizes} and Lemma~\ref{lem:correctNaturalColors} carry over to $\hat L^k$ as well, as the languages only differ by swapping the roles of the letters $\{X_i\}_{1 \leq 1 \leq k}$ and $\{Y_i\}_{1 \leq 1 \leq k}$ as well as {swapping the letters $a_i$ and  $a_{4k-i-1}$ for each $0 \leq i < 2k$}.

Let in the following $L'^k = L^k \cap \hat L^k$. 
We will next analyze how big a COCOA for $L'^k$ needs to be and in this way shed light on how big the conjunction of COCOA for $L^k$ and $\hat L^k$ need to be.
To perform this analysis, we consider the language intersections $L^k_i \cap \hat L^k_j$ (for $1 \leq i \leq k$ and $1 \leq j \leq k$) and show how a COCOA for $L'^k$ can be built from disjunctions of some co-Büchi automata for $L^k_i \cap \hat L^k_j$.

\begin{lem}
\label{lem:movingLemma}
Let $w \in \Sigma^\omega$ be a word. There exists a unique greatest index pair $(i,j) \in \{0, \ldots, k\}^2$ such that
$w \in L^k_i \cap \hat L^k_j$, i.e., we have $w \in L^k_i \cap \hat L^k_j$ and for all $(i',j') \in \{0, \ldots, k\}^2$ such that $w \in L^k_{i'} \cap \hat L^k_{j'}$, we have that $i' \leq i$ and $j' \leq j$.

Furthermore, for every pair $(i', j')$ with $i' \leq i$ and $j' \leq j$, there exists an extension $w'$ of $w$ such that $(i', j')$ is the unique greatest index pair such that $w' \in L^k_{i'} \cap \hat L^k_{j'}$.
\end{lem}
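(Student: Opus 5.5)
The plan is to read off the greatest index pair from the natural colors of $w$ with respect to $L^k$ and $\hat L^k$ separately, and then to realise every smaller pair by injecting lower-case letters only. I use the convention from Section~\ref{subsec:additionalCOCOALemmata} that $L^k_0=\hat L^k_0=\Sigma^\omega$.

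\emph{First claim.} By the proof of Lemma~\ref{lem:correctNaturalColors}, the chain $L^k_0\supset L^k_1\supset\ldots\supset L^k_k$ is falling, and the same holds for the chain $\hat L^k_0\supset\ldots\supset\hat L^k_k$. Let $c$ be the largest index with $w\in L^k_c$ and $\hat c$ the largest index with $w\in\hat L^k_{\hat c}$; both exist because $w\in L^k_0$ and $w\in\hat L^k_0$. Monotonicity gives $w\in L^k_{i}\cap\hat L^k_{j}$ if and only if $i\le c$ and $j\le\hat c$. Hence $(c,\hat c)$ is the unique greatest pair.

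\emph{Second claim, trivial case.} If $w$ contains infinitely many upper-case letters, then $c=\hat c=0$ and $w'=w$ works.

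\emph{Second claim, main case.} Otherwise $w$ eventually consists only of letters $a_h$. Let $(i',j')$ with $i'\le c$ and $j'\le\hat c$ be given.
\begin{itemize}
\item If $1\le i'<c$, inject $a_{4k-2i'}$ infinitely often.
\item If $1\le j'<\hat c$, inject $a_{2j'-1}$ infinitely often.
\end{itemize}
Such injections are residual language invariant. By Lemma~\ref{lem:numberOfSuffixLanguages} and its analogue for $\hat L^k$, the residual languages of $L^k$ and $\hat L^k$, and hence of $L'^k$, depend only on the parities of the numbers of $X_i$ and $Y_i$ letters, and lower-case letters do not change these parities.

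The required checks are index arithmetic.
\begin{itemize}
\item Since $w\in L^k_c$, every $a_h$ occurring infinitely often in $w$ has $h\le 4k-2c+1\le 4k-2i'-1$. So after the injection, $w'\in L^k_{i'}$ by the second disjunct, while $w'\notin L^k_{i'+1}$ because $a_{4k-2i'}$ exceeds the allowed bound $4k-2i'-1$.
\item Symmetrically, since $w\in\hat L^k_{\hat c}$, the letters occurring infinitely often have index at least $2\hat c-2\ge 2j'$. So after injecting $a_{2j'-1}$, we get $w'\in\hat L^k_{j'}\setminus\hat L^k_{j'+1}$.
\item The two injections do not interfere. We have $2j'-1<2k\le 4k-2i'$, so $a_{2j'-1}$ stays within the bound for $L^k_{i'}$. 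Likewise $4k-2i'\ge 2k\ge 2j'-1$, so $a_{4k-2i'}$ stays within the bound for $\hat L^k_{j'}$.
\item If $i'=c$ (or $j'=\hat c$), no injection is performed for that side. The injection made for the other side does not lift $w'$ into $L^k_{c+1}$ (resp.\ $\hat L^k_{\hat c+1}$), because adding infinitely occurring letters only shrinks membership in these languages.
\end{itemize}

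\emph{Expected obstacle.} The hard step is the boundary case where exactly one target index is $0$ while the other is positive, say $(0,j')$ with $j'\ge1$. Leaving $L^k_1$ requires either infinitely many upper-case letters, which would also push $\hat c$ down to $0$, or infinitely many $a_{4k-1}$ together with an odd number of $X_1$ letters. The latter cannot be achieved by residual language invariant injection when the $X_1$ count of $w$ is even. I would first try injecting $X_1X_1$ pairs or choosing a different injection-point set $J$. If neither works, I would restrict the statement to pairs that are not of the form $(0,j')$ or $(i',0)$ with the other index positive, or treat these pairs via the convention $L^k_0=\Sigma^\omega$. This case must be checked carefully before the lemma is used later.
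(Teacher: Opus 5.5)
Your proof of the first claim, and of the second claim for all targets with $i',j'\ge 1$, is correct. It is essentially the paper's argument: inject $a_{4k-2i'}$ and $a_{2j'-1}$ infinitely often. Your checks of non-interference, of the cases $i'=c$ and $j'=\hat c$, and of residual language invariance are more careful than the paper's. One easy case is missing from your bullets: the target $(0,0)$ when $(c,\hat c)\neq(0,0)$. Inject $X_1X_1$ infinitely often. This changes no parity, so it is residual language invariant, and the result lies outside both $L^k_1$ and $\hat L^k_1$.

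The obstacle you flag is not a technical difficulty: the second claim is false there, so what is needed is a refutation rather than further attempts. Whatever $J$ is, every injected word must contain an even number of $X_1$ letters. The residual languages of $L^k$ record this parity (Lemma~\ref{lem:numberOfSuffixLanguages}), and so do those of $L'^k$: for $k\ge2$, $p\,(a_{4k-1}a_3)^\omega\in L'^k$ holds exactly when $p$ contains an odd number of $X_1$ letters. Now let $w'$ be an extension with $w'\in\hat L^k_1$. Then $w'$ has only finitely many upper-case letters, hence the same $X_1$-parity as $w$. If this parity is even, the first disjunct of $L^k_1$, which admits every lower-case letter, gives $w'\in L^k_1$. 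So $w=(a_{2k})^\omega$ has greatest pair $(k,k)$ but no extension with greatest pair $(0,1)$.

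None of your proposed remedies helps. Finitely many $X_1X_1$ pairs keep the parity, and infinitely many push the word to $(0,0)$. Another $J$ does not help either. The convention $L^k_0=\Sigma^\omega$ is irrelevant, because greatest pair $(0,j')$ still requires $w'\notin L^k_1$. The true statement is your fallback, sharpened:
\begin{itemize}
\item a target $(0,j')$ with $j'\ge1$ is reachable if and only if $w$ contains an odd number of $X_1$ letters (then inject $a_{4k-1}$, and also $a_{2j'-1}$ if $j'<\hat c$);
\item symmetrically, $(i',0)$ with $i'\ge1$ is reachable if and only if $w$ contains an odd number of $Y_1$ letters (inject $a_0$, and also $a_{4k-2i'}$ if $i'<c$).
\end{itemize}

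The paper's proof has exactly this hole without noticing it: for $i'=0$ or $j'=0$ it injects $a_{4k}$ or $a_{-1}$, which are not letters of $\Sigma$. Your caution about later uses is warranted. For $k\ge2$, the word $(a_{4k-4}a_1)^\omega$ has greatest pair $(2,1)$ and no upper-case letters. By the argument above and its mirror image for $Y_1$, its extensions reach only $(2,1)$ and $(1,1)$, both outside $L'^k$, or $(0,0)$. Its natural color w.r.t.\ $L'^k$ is therefore $1$. Yet it lies in $L^k_2\cap\hat L^k_0\subseteq\mathcal{L}(C^k_2)$ in Theorem~\ref{thm:conjunctionBecomesBig}, so the chain given there does not recognize every word with its natural color.
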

\begin{proof}
For the first half, first of all note that $w \in L^k_0 \cap \hat L^k_0$ by definition as both $L^k_0$ and $\hat L^k_0$ {contain all infinite words over $\Sigma$.}
Then, let $K$ be the set of elements $(i,j)$ such that we have $w \in L^k_i \cap \hat L^k_j$.
If we have $(i,j) \in K$ and $(i',j') \in K$ for some such pairs, this means that $w \in L^k_i$, $w \in \hat L^k_j$, $w \in L^k_{i'}$, and $w \in \hat L^k_{j'}$, so we then also have $(\max(i,i'),\max(j,j')) \in K$. So we cannot have that both $(i,j)$ and $(i',j')$ are incomparable (with respect to element-wise comparison) maximal elements in $K$, as otherwise $(\max(i,i'),\max(j,j'))$ is another element in $K$, contradicting the assumption that both $(i,j)$ and $(i',j')$ are incomparable maximal elements of $K$. 

For the second half of the claim, let $w$ be given, let $(i,j)$ be the (unique) maximal level in $K$, and $(i',j')$ be such that $i' \leq i$ and $j' \leq j$. By injecting infinitely often $a_{4k-2i'}$ into $w$, the resulting word is in {$L^k_{i'}$} (but not in $L^k_{i'+1}$), and by injecting infinitely often $a_{2j'-1}$, the resulting word is in $\hat L^k_{j'}$ (but not in $\hat L^k_{j'+1}$). Definitions~\ref{def:toyLanguage} and \ref{def:toyLanguageB} are such that the former letter injections do not affect where in the chain $\hat L^k_1, \ldots, \hat L^k_k$ the resulting word is located, while the latter letter injections do not affect where in the chain $L^k_1, \ldots, L^k_k$ the resulting word is located. Hence, {the extended word has $(i',j')$ as the unique greatest index pair}.
\end{proof}

\begin{thm}
\label{thm:conjunctionBecomesBig}
A COCOA for $L'^k = L^k \cap \hat L^k$ can be given as $\mathcal{C}^{k} = (C^k_1, \ldots, C^k_{2k})$ where for each $u \in \{0, \allowbreak{} \ldots, \allowbreak{}2k\}$, the language of $C^k_u$ is 
\begin{equation*}
\mathcal{L}(C^k_{u}) = \bigcup_{(i,j) \in \Gamma_u} L^k_i \cap \hat L^k_j
\end{equation*}
for
\begin{equation*}
\Gamma^k_u = \begin{cases}
\{(i,j) \in \{0, \ldots, k\}^2 \mid i+j=u, i \text{ is even},j \text{ is even} \} & \text{if } u \in \{0, 2, \ldots, 2k\} \\
\{(i,j) \in \{0, \ldots, k\}^2 \mid u \leq i+j \leq u+1, i \text{ or } j \text{ are odd} \} & \text{if } u \in \{1, 3, \ldots, 2k-1\}.
\end{cases}
\end{equation*}
\end{thm}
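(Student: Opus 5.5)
The plan is to reduce everything to the unique greatest index pair from Lemma~\ref{lem:movingLemma}. For a word $w$, write $(i_w,j_w)$ for its greatest index pair. By Lemma~\ref{lem:correctNaturalColors} and its counterpart for $\hat L^k$, the index $i_w$ is the natural color of $w$ with respect to $L^k$, and $j_w$ is its natural color with respect to $\hat L^k$. Hence $w\in L'^k$ if and only if $i_w$ and $j_w$ are both even. I then define a candidate coloring
\[
c(i,j)=\begin{cases} i+j & \text{if } i,j \text{ are both even},\\ \text{the largest odd number} \leq i+j & \text{otherwise.}\end{cases}
\]
The theorem then splits into two claims. First, the natural color of $w$ with respect to $L'^k$ is $c(i_w,j_w)$. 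Second, $w\in\mathcal{L}(C^k_u)$ holds if and only if $c(i_w,j_w)\geq u$. Together these are exactly the COCOA condition from Section~\ref{subsec:additionalCOCOALemmata}.

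For the second claim, I first note that $\mathcal{L}(C^k_u)$ is the set of words whose greatest pair dominates, componentwise, some pair in $\Gamma^k_u$. This is because the sets $L^k_i\cap\hat L^k_j$ are decreasing in both indices. So the claim reduces to a combinatorial statement: $(i',j')$ dominates an element of $\Gamma^k_u$ if and only if $c(i',j')\geq u$.

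\begin{itemize}
\item For the direction from domination to $c\geq u$, I show that $c$ is monotone under componentwise order. This uses that $c(i,j)\in\{i+j-1,i+j\}$ and checks the parities of a single unit step in each coordinate. I also check that every element of $\Gamma^k_u$ has $c$-value at least $u$.
\item For the converse, starting from $(i',j')$ with $c(i',j')\geq u$, I decrement coordinates one at a time so as to land on a pair with sum $u$ and both entries even when $u$ is even, or on a pair with sum $u$ or $u+1$ and an odd entry when $u$ is odd. The parity bookkeeping here is a short case split.
\end{itemize}

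For the first claim, I first argue that residual language invariance with respect to $L'^k$ implies invariance with respect to both $L^k$ and $\hat L^k$. By the argument of Lemma~\ref{lem:numberOfSuffixLanguages}, the residuals of $L'^k$ separate all $2^{2k}$ combinations of $X$-parities and $Y$-parities. The separating suffixes $(a_{4k-2i+1})^\omega$ for $X_i$ and $(a_{2i-2})^\omega$ for $Y_i$ leave the other component's level at an even value, so they still distinguish membership in $L'^k$. Consequently, an injection changes no parity. Since natural colors can only decrease under injection, every extension $w'$ satisfies $(i_{w'},j_{w'})\leq(i_w,j_w)$. By the second half of Lemma~\ref{lem:movingLemma}, every dominated pair is in fact realized by some extension. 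As in the proof of Lemma~\ref{lem:correctNaturalColors}, the languages only depend on parities and on the letters occurring infinitely often, so the injection point set $J$ plays no role.

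Therefore, the natural color of $w$ depends only on $(i_w,j_w)$ and satisfies the following recursion: it is the least $m$ such that every pair dominated by $(i_w,j_w)$ whose both-even status differs from that of $(i_w,j_w)$ has color strictly below $m$, with $m$ having the parity matching membership. I then verify by induction on $i+j$ that $c$ satisfies this recursion.
\begin{itemize}
\item If both entries are even, the dominated "rejecting" pairs (those with an odd entry) have $c\leq i+j-1$, and $(i,j-1)$ or $(i-1,j)$ attains this value.
\item If an entry is odd, the dominated "accepting" pairs with both entries even have sum at most the largest even number $\leq i+j$. The largest such value is attained, so the next odd number is the color.
\end{itemize}

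I expect the main obstacle to be the injection-invariance step. It requires arguing cleanly that invariance for the intersection forces invariance for each component, and that decreases in the two chains can be produced independently, as Lemma~\ref{lem:movingLemma} supplies.
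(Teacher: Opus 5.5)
\textbf{Where the argument breaks.} The step that fails is your appeal to the second half of Lemma~\ref{lem:movingLemma}, that every pair dominated by $(i_w,j_w)$ is realized by an extension. With it goes your claim that the natural color depends only on $(i_w,j_w)$.

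The lemma's proof injects $a_{4k-2i'}$ and $a_{2j'-1}$. These letters do not exist for $i'=0$ or $j'=0$, and the boundary case is genuinely false. A word with finitely many capital letters lies outside $L^k_1$ only if it has an odd number of $X_1$ and infinitely many $a_{4k-1}$. No residual language invariant injection changes the $X_1$-parity. So if $w$ has an even number of $X_1$, every extension with first coordinate $0$ has infinitely many capitals, and therefore pair $(0,0)$. The same holds symmetrically for $Y_1$ and the second coordinate.

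\textbf{Counterexample.} Take $k=2$ and $w=(a_6a_3)^\omega$, which has pair $(1,2)$ and $c(1,2)=3$.
An extension with infinitely many capitals is in $L'^2$ with natural color $0$.
Every other extension keeps all parities even and still contains $a_6$ infinitely often. So it stays in $L^2_1\setminus L^2_2$ and is rejected.
Hence $w$ is at home in color $1$ (with $J=\NN$).
By contrast, $X_1(a_6a_3)^\omega$ has the same pair but natural color $3$; inject $a_7$ to reach $(0,2)$.
Likewise $(a_4)^\omega$ has pair $(2,2)$ but natural color $2$.

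So the natural color is not a function of the pair, your recursion overestimates, and your first claim is false. You inherited this rather than introduced it: the paper's own proof uses the same lemma in all four of its cases. Read as describing the canonical chain, which is how the paper's proof and Proposition~\ref{prop:exponentialSize} use it, the statement itself fails for every $k\geq 2$. For instance, $\mathcal{L}(C^2_2)=L^2_2\cup\hat L^2_2$ contains $w$. In general, $(a_{4k-2}a_3)^\omega$ does the same for any $k\geq 2$.

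\textbf{What survives.} Your decomposition into an explicit coloring $c$, a combinatorial second claim and a recursion is a cleaner version of the paper's four-case induction with Observations~\ref{obs:three}--\ref{obs:five}. You are also right to address how residual invariance transfers to $L^k$ and $\hat L^k$, which the paper skips. Your second claim is correct. Since $c$ is monotone and $c(i,j)$ is even exactly when $i$ and $j$ are both even, it shows that $\mathcal{L}(C^k_1)\supseteq\dots\supseteq\mathcal{L}(C^k_{2k})$ is a chain whose represented language is $L'^k$. It is just not the chain of natural colors. A correct computation of the natural colors has to carry the parities of $X_1$ and $Y_1$ alongside $(i_w,j_w)$. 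When both parities are odd, every dominated pair is reachable and your $c$ is right.

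\textbf{A smaller slip.} In the residual separation, the suffixes $(a_{4k-2i+1})^\omega$ and $(a_{2i-2})^\omega$ put the other component at level $k$. That level is odd for odd $k$, so for odd $k$ they do not separate the residuals of $L'^k$. For $k\geq 2$, suffixes such as $(a_{4k-2i+1}a_3)^\omega$ and $(a_{2i-2}a_{4k-4})^\omega$ fix the other level at $2$ independently of the parities, and they work.
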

Figure~\ref{fig:levels} shows how the languages $\{ L^k_i \cap \hat L^k_j\}_{0 \leq i \leq k, 0 \leq j \leq k}$ are grouped (by performing language disjunctions) to form a COCOA for $L'^k$. Languages $L^k_i \cap \hat L^k_j$ in which both $i$ and $j$ are even form the accepting levels of the COCOA, and the languages in between are grouped into rejecting levels of the COCOA for $L'^k$.

\begin{figure}

\resizebox{0.99\columnwidth}{!}{
\begin{tikzpicture}[yscale=0.9]

\path[fill=red!10!white] (-1.5,0) ..controls +(0.75,0) and +(-0.75,0) .. (0,-0.5) ..controls +(0.75,0) and +(-0.75,0) .. (1.5,0) ..controls +(-1,1) and +(1,1) .. cycle;

\path[fill=red!10!white]   (-4.5,-2) ..controls +(0.75,0) and +(-0.75,0) .. (-3,-2.5) ..controls +(0.75,0) and +(-0.75,0) .. (-1.5,-2) ..controls +(0.75,0) and +(-0.75,0) .. (0,-2.75) ..controls +(0.75,0) and +(-0.75,0) .. (1.5,-2) ..controls +(0.75,0) and +(-0.75,0) .. (3,-2.5) ..controls +(0.75,0) and +(-0.75,0) .. (4.5,-2)
-- 
(3,-1) 
..controls +(-0.75,0) and +(0.75,0) ..
(1.5,-1.5)
..controls +(-0.75,0) and +(0.75,0) ..
(0,-2.5)
..controls +(-0.75,0) and +(0.75,0) ..
(-1.5,-1.5)
..controls +(-0.75,0) and +(0.75,0) ..
(-3.0,-1)
;

\path[fill=red!10!white] (-6,-3) 
..controls +(0.75,0) and +(-0.75,0) .. 
(-4.5,-3.5)
..controls +(0.75,0) and +(-0.75,0) ..
(-3,-4.5)
..controls +(0.75,0) and +(-0.75,0) ..
(-1.5,-3.5)
--
(1.5,-3.5)
..controls +(0.75,0) and +(-0.75,0) ..
(3.0,-4.5)
..controls +(0.75,0) and +(-0.75,0) ..
(4.5,-3.5)
..controls +(0.75,0) and +(-0.75,0) ..
(6.0,-3)
--
(7.5,-4)
..controls +(-0.75,0) and +(0.75,0) ..
(6,-4.5)
..controls +(-0.75,0) and +(0.75,0) ..
(4.5,-4)
..controls +(-0.75,0) and +(0.75,0) ..
(3.0,-4.75)
..controls +(-0.75,0) and +(0.75,0) ..
(1.5,-4.5)
--
(-1.5,-4.5)
..controls +(-0.75,0) and +(0.75,0) ..
(-3,-4.75)
..controls +(-0.75,0) and +(0.75,0) ..
(-4.5,-4) 
..controls +(-0.75,0) and +(0.75,0) ..
(-6,-4.5) 
..controls +(-0.75,0) and +(0.75,0) ..
(-7.5,-4)
-- cycle
;

\path[fill=red!10!white] (-6,-5)
..controls +(0.75,0) and +(-0.75,0) ..
(-4.5,-5.5)
--
(-1.5,-5.5)
..controls +(0.75,0) and +(-0.75,0) ..
(0,-6.5)
..controls +(0.75,0) and +(-0.75,0) ..
(1.5,-5.5)
--
(4.5,-5.5)
..controls +(0.75,0) and +(-0.75,0) ..
(6,-5.25)
--
(4.5,-6.5)
..controls +(-0.75,0) and +(0.75,0) ..
(3,-6.5)
..controls +(-0.75,0) and +(0.75,0) ..
(1.5,-6.0)
..controls +(-0.75,0) and +(0.75,0) ..
(0,-6.75)
..controls +(-0.75,0) and +(0.75,0) ..
(-1.5,-6.0)
..controls +(-0.75,0) and +(0.75,0) ..
(-3,-6.5)
..controls +(-0.75,0) and +(0.75,0) ..
(-4.5,-6)
-- cycle
;

\path[fill=red!10!white] (-3,-7)
..controls +(0.75,0) and +(-0.75,0) ..
(-1.5,-7.5)
--
(1.5,-7.5)
..controls +(0.75,0) and +(-0.75,0) ..
(3,-7.25) 
..controls +(-2,-2) and +(2,-2) ..
(-3,-7)
;

\node[inner sep=2pt] (a00) at (0,0) {$L^4_0 \cap \hat L^4_0$};

\node[inner sep=2pt] (a01) at (-1.5,-1) {$L^4_0 \cap \hat L^4_1$};
\node[inner sep=2pt] (a10) at (1.5,-1) {$L^4_1 \cap \hat L^4_0$};

\node[inner sep=2pt] (a02) at (-3,-2) {$L^4_0 \cap \hat L^4_2$};
\node[inner sep=2pt] (a11) at (0,-2) {$L^4_1 \cap \hat L^4_1$};
\node[inner sep=2pt] (a20) at (3,-2) {$L^4_2 \cap \hat L^4_0$};

\node[inner sep=2pt] (a03) at (-4.5,-3) {$L^4_0 \cap \hat L^4_3$};
\node[inner sep=2pt] (a12) at (-1.5,-3) {$L^4_1 \cap \hat L^4_2$};
\node[inner sep=2pt] (a21) at (1.5,-3) {$L^4_2 \cap \hat L^4_1$};
\node[inner sep=2pt] (a30) at (4.5,-3) {$L^4_3 \cap \hat L^4_0$};

\node[inner sep=2pt] (a04) at (-6,-4) {$L^4_0 \cap \hat L^4_4$};
\node[inner sep=2pt] (a13) at (-3,-4) {$L^4_1 \cap \hat L^4_3$};
\node[inner sep=2pt] (a22) at (0,-4) {$L^4_2 \cap \hat L^4_2$};
\node[inner sep=2pt] (a31) at (3,-4) {$L^4_3 \cap \hat L^4_1$};
\node[inner sep=2pt] (a40) at (6,-4) {$L^4_4 \cap \hat L^4_0$};

\node[inner sep=2pt] (a14) at (-4.5,-5) {$L^4_1 \cap \hat L^4_4$};
\node[inner sep=2pt] (a23) at (-1.5,-5) {$L^4_2 \cap \hat L^4_3$};
\node[inner sep=2pt] (a32) at (1.5,-5) {$L^4_3 \cap \hat L^4_2$};
\node[inner sep=2pt] (a41) at (4.5,-5) {$L^4_4 \cap \hat L^4_1$};

\node[inner sep=2pt] (a24) at (-3,-6) {$L^4_2 \cap \hat L^4_4$};
\node[inner sep=2pt] (a33) at (0,-6) {$L^4_3 \cap \hat L^4_3$};
\node[inner sep=2pt] (a42) at (3,-6) {$L^4_4 \cap \hat L^4_2$};

\node[inner sep=2pt] (a34) at (-1.5,-7) {$L^4_3 \cap \hat L^4_4$};
\node[inner sep=2pt] (a43) at (1.5,-7) {$L^4_4 \cap \hat L^4_3$};

\node[inner sep=2pt] (a44) at (0,-8) {$L^4_4 \cap \hat L^4_4$};

\draw[thick] (a00) -- (a01) -- (a02) -- (a03) -- (a04) -- (a14) -- (a24) -- (a34) -- (a44);
\draw[thick] (a00) -- (a10) -- (a20) -- (a30) -- (a40) -- (a41) -- (a42) -- (a43) -- (a44);

\draw[thick] (a10) -- (a11) -- (a12) -- (a13) -- (a14);
\draw[thick] (a20) -- (a21) -- (a22) -- (a23) -- (a24);
\draw[thick] (a30) -- (a31) -- (a32) -- (a33) -- (a34);
\draw[thick] (a01) -- (a11) -- (a21) -- (a31) -- (a41);
\draw[thick] (a02) -- (a12) -- (a22) -- (a32) -- (a42);
\draw[thick] (a03) -- (a13) -- (a23) -- (a33) -- (a43);

\draw[thick,color=red] (-1.5,0) ..controls +(0.75,0) and +(-0.75,0) .. (0,-0.5) ..controls +(0.75,0) and +(-0.75,0) .. (1.5,0) node[above left,yshift=4pt,xshift=12pt,fill=red!10!white,shape=rectangle,rounded corners=0.2cm,inner sep=3pt] {$\mathcal{L}(C^4_0)$};

\draw[thick,color=red] (-3.0,-1) ..controls +(0.75,0) and +(-0.75,0) .. (-1.5,-1.5) ..controls +(0.75,0) and +(-0.75,0) .. (0,-2.5) ..controls +(0.75,0) and +(-0.75,0) .. (1.5,-1.5) ..controls +(0.75,0) and +(-0.75,0) .. (3,-1) 
node[above left,shape=rectangle,rounded corners=0.2cm,inner sep=3pt,xshift=12pt] {$\mathcal{L}(C^4_1)$};

\draw[thick,color=red] (-4.5,-2) ..controls +(0.75,0) and +(-0.75,0) .. (-3,-2.5) ..controls +(0.75,0) and +(-0.75,0) .. (-1.5,-2) ..controls +(0.75,0) and +(-0.75,0) .. (0,-2.75) ..controls +(0.75,0) and +(-0.75,0) .. (1.5,-2) ..controls +(0.75,0) and +(-0.75,0) .. (3,-2.5) ..controls +(0.75,0) and +(-0.75,0) .. (4.5,-2) node[above left,yshift=4pt,fill=red!10!white,shape=rectangle,rounded corners=0.2cm,inner sep=3pt,xshift=12pt] {$\mathcal{L}(C^4_2)$};

\draw[thick,color=red] (-6,-3) 
..controls +(0.75,0) and +(-0.75,0) .. 
(-4.5,-3.5)
..controls +(0.75,0) and +(-0.75,0) ..
(-3,-4.5)
..controls +(0.75,0) and +(-0.75,0) ..
(-1.5,-3.5)
--
(1.5,-3.5)
..controls +(0.75,0) and +(-0.75,0) ..
(3.0,-4.5)
..controls +(0.75,0) and +(-0.75,0) ..
(4.5,-3.5)
..controls +(0.75,0) and +(-0.75,0) ..
(6.0,-3)
node[above left,shape=rectangle,rounded corners=0.2cm,inner sep=3pt,xshift=12pt] {$\mathcal{L}(C^4_3)$}
;

\draw[thick,color=red] (-7.5,-4)
..controls +(0.75,0) and +(-0.75,0) ..
(-6,-4.5) 
..controls +(0.75,0) and +(-0.75,0) ..
(-4.5,-4) 
..controls +(0.75,0) and +(-0.75,0) ..
(-3,-4.75)
..controls +(0.75,0) and +(-0.75,0) ..
(-1.5,-4.5)
--
(1.5,-4.5)
..controls +(0.75,0) and +(-0.75,0) ..
(3.0,-4.75)
..controls +(0.75,0) and +(-0.75,0) ..
(4.5,-4)
..controls +(0.75,0) and +(-0.75,0) ..
(6,-4.5)
..controls +(0.75,0) and +(-0.75,0) ..
(7.5,-4)
node[above left,yshift=4pt,fill=red!10!white,shape=rectangle,rounded corners=0.2cm,inner sep=3pt,xshift=12pt] {$\mathcal{L}(C^4_4)$}
;

\draw[thick,color=red] (-6,-5)
..controls +(0.75,0) and +(-0.75,0) ..
(-4.5,-5.5)
--
(-1.5,-5.5)
..controls +(0.75,0) and +(-0.75,0) ..
(0,-6.5)
..controls +(0.75,0) and +(-0.75,0) ..
(1.5,-5.5)
--
(4.5,-5.5)
..controls +(0.75,0) and +(-0.75,0) ..
(6,-5.25)
node[above left,shape=rectangle,rounded corners=0.2cm,inner sep=3pt,xshift=20pt] {$\mathcal{L}(C^4_5)$}
;

\draw[thick,color=red] (-4.5,-6)
..controls +(0.75,0) and +(-0.75,0) ..
(-3,-6.5)
..controls +(0.75,0) and +(-0.75,0) ..
(-1.5,-6.0)
..controls +(0.75,0) and +(-0.75,0) ..
(0,-6.75)
..controls +(0.75,0) and +(-0.75,0) ..
(1.5,-6.0)
..controls +(0.75,0) and +(-0.75,0) ..
(3,-6.5)
..controls +(0.75,0) and +(-0.75,0) ..
(4.5,-6.5)
node[above left,yshift=4pt,fill=red!10!white,shape=rectangle,rounded corners=0.2cm,inner sep=3pt,xshift=20pt] {$\mathcal{L}(C^4_6)$}
;

\draw[thick,color=red] (-3,-7)
..controls +(0.75,0) and +(-0.75,0) ..
(-1.5,-7.5)
--
(1.5,-7.5)
..controls +(0.75,0) and +(-0.75,0) ..
(3,-7.25)
node[above left,shape=rectangle,rounded corners=0.2cm,inner sep=3pt,xshift=20pt] {$\mathcal{L}(C^4_7)$}

node[below left,xshift=-5pt,yshift=-5pt,fill=red!10!white,shape=rectangle,rounded corners=0.2cm,inner sep=3pt,xshift=12pt] {$\mathcal{L}(C^4_8)$}
;

\end{tikzpicture}}
\caption{Overview of how the sets $\{ L^k_i \cap \hat L^k_j\}_{0 \leq i \leq k, 0 \leq j \leq k}$ (for $k=4$) compose $C^k_0, \ldots, C_{2k}^k$ in Theorem~\ref{thm:conjunctionBecomesBig}} 
\label{fig:levels}
\end{figure}
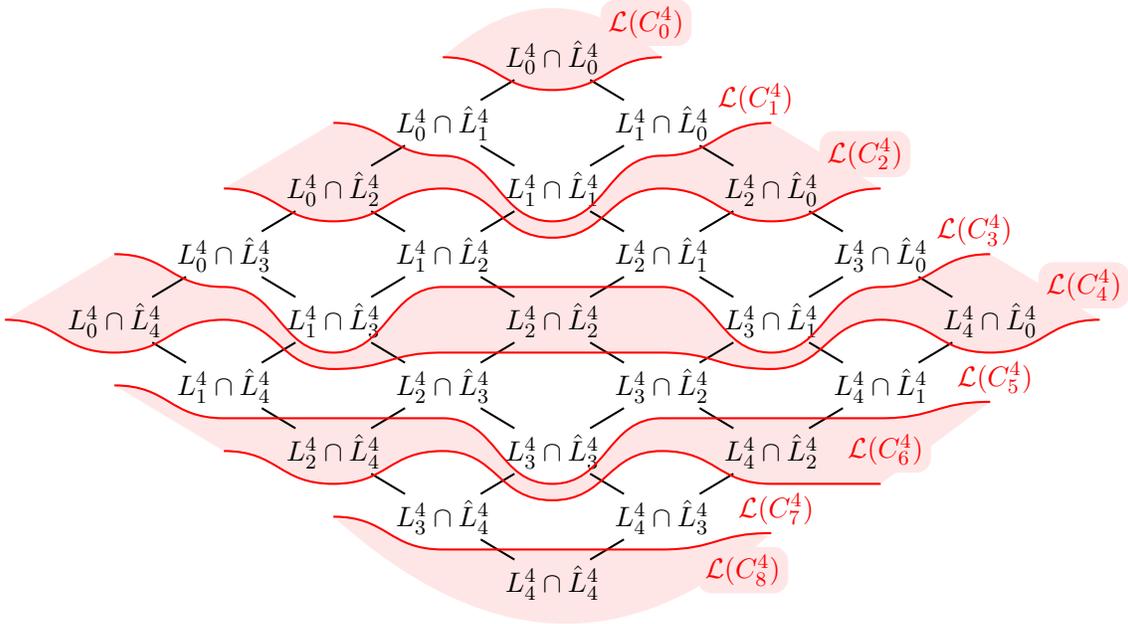

The correctness proof of Theorem~\ref{thm:conjunctionBecomesBig} employs a couple of observations:

\begin{obs}
\label{obs:three}
Let $w$ be some word for which the unique greatest pair $(i,j)$ such that $w \in L^k_i \cap \hat L^k_j$ holds (with $(i,j) \in \Gamma^k_u$ for some $0 \leq u \leq 2k$). We have that $w \notin \mathcal{L}(C^k_{u+1})$.
\end{obs}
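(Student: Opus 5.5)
The plan is to combine the maximality property from Lemma~\ref{lem:movingLemma} with a parity/sum argument on the index sets $\Gamma^k_u$ and $\Gamma^k_{u+1}$. First, the boundary case: if $u = 2k$, then $C^k_{2k+1}$ is the empty language by the convention from Section~\ref{subsec:additionalCOCOALemmata}. So nothing is to be shown, and I assume $u < 2k$ from now on.

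Suppose, towards a contradiction, that $w \in \mathcal{L}(C^k_{u+1})$. By the definition of $\mathcal{L}(C^k_{u+1})$ as a union, there is some $(i',j') \in \Gamma^k_{u+1}$ with $w \in L^k_{i'} \cap \hat L^k_{j'}$. Since $(i,j)$ is the unique greatest index pair for $w$ (first half of Lemma~\ref{lem:movingLemma}), we get $i' \leq i$ and $j' \leq j$. In particular $i'+j' \leq i+j$. The rest is a case split on the parity of $u$, comparing this inequality with the constraints defining $\Gamma^k_u$ and $\Gamma^k_{u+1}$.

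\emph{Case $u$ even.} Here $(i,j) \in \Gamma^k_u$ means $i+j = u$ with $i,j$ both even. The pair $(i',j') \in \Gamma^k_{u+1}$ for the odd index $u+1$ satisfies $u+1 \leq i'+j' \leq u+2$. Hence $i'+j' > u = i+j$, contradicting $i'+j' \leq i+j$.

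\emph{Case $u$ odd.} Here $(i,j) \in \Gamma^k_u$ gives $u \leq i+j \leq u+1$, with $i$ or $j$ odd. Since $u+1$ is even, $(i',j')$ has $i'+j' = u+1$ with $i',j'$ both even. From $u+1 = i'+j' \leq i+j \leq u+1$ we get $i+j = i'+j'$. Together with $i' \leq i$ and $j' \leq j$, this forces $i = i'$ and $j = j'$. Then both $i$ and $j$ are even, contradicting that one of them is odd.

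The only delicate point is the odd case. There the sums alone do not yield a contradiction, and one has to use componentwise domination to force $(i,j) = (i',j')$ and then exploit the parity constraints. Everything else follows directly from Lemma~\ref{lem:movingLemma}.
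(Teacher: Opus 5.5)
Your proof is correct and follows the paper's argument. You use the same case split on the parity of $u$: when $u$ is even, the sum bound $i'+j' \geq u+1 > i+j$ contradicts componentwise domination; when $u$ is odd, domination together with $i'+j' = u+1 \geq i+j$ forces $(i',j')=(i,j)$, which clashes with the parity constraints. Your explicit treatment of the boundary case $u=2k$, via the convention $C^k_{2k+1}=\emptyset$, is a detail the paper leaves implicit.
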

\begin{proof}
If $(i,j)$ is the unique greatest pair $(i,j)$ such that $w \in L^k_i \cap \hat L^k_j$, then we have that $w \notin L^k_{i+1}$ and $w \notin \hat L^k_{j+1}$.

We consider two cases, namely that $u$ is even and that $u$ is odd.

In the first case, we have that $i+j = u$, both $i$ and $j$ are even, and we are searching for some other $(i',j')$ in $\Gamma^k_{u+1}$ such that $i' \leq i$ (so that $w$ is accepted by $\hat L^k_{i'}$) and such that $j' \leq j$ (so that $w$ is accepted by $L^k_{j'}$). If $u$ is even, then $u+1$ is odd, and by the definition of $\{\Gamma^k_m\}_{0 \leq m \leq 2k}$ in Theorem~\ref{thm:conjunctionBecomesBig}, we have $u+1 \leq i'+j' \leq u+2$ for $(i',j')$ to be contained in $\Gamma^k_{u+1}$. Since $i' \leq i$, $j' \leq j$, and $i'+j' \geq i+j+1$, we have that $w$ is not accepted by $C^k_{u+1}$ as not all three requirements on $i'$ and $j'$ can be fulfilled at the same time.

For the case that $u$ is odd, we have $u \leq i+j \leq u+1$ and either $i$ or $j$ are odd. Here, we are searching for $i' \leq i$, $j' \leq j$ with $u+1 = i'+j'$ and $i'$ and $j'$ are even. %
This leaves only $i=i'$ and $j=j'$ as solution, which however contradicts thats either $i$ or $j$ are odd, as $i'$ and $j'$ are even. Hence, we have that $w$ is not accepted by $C^k_{u+1}$.
\end{proof}

\begin{obs}
\label{obs:four}
Whenever for some $0 < i \leq k$, $0 \leq j \leq k$, and $0 \leq u \leq 2k$ for even $i$ and $j$, we have that $(i-1,j) \in \Gamma^k_{u-1}$, then $(i,j) \in \Gamma^k_u$. 

Similarly, whenever for some $0 \leq i \leq k$, $0 \leq j \leq k$, and $0 \leq u \leq 2k$ for even $i$ and $j$, we have that $(i,j-1) \in \Gamma^k_{u-1}$, then $(i,j) \in \Gamma^k_u$. 
\end{obs}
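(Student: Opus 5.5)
The proof is a direct case analysis on the definition of $\Gamma^k_m$ in Theorem~\ref{thm:conjunctionBecomesBig}. I treat the first claim; the second follows by exchanging the roles of $i$ and $j$, since the definition of $\Gamma^k_m$ is symmetric in the two coordinates.

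Assume $i$ and $j$ are even with $0 < i \leq k$, and $(i-1,j) \in \Gamma^k_{u-1}$. Because $i$ is even, $i-1$ is odd. For $u = 0$ the set $\Gamma^k_{-1}$ is not defined, so the premise is vacuous and nothing needs to be shown. Hence assume $u \geq 1$ and split on the parity of $u-1$.

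\emph{Case $u-1$ even.} Then membership in $\Gamma^k_{u-1}$ requires both coordinates to be even. This contradicts $i-1$ being odd, so this case cannot occur.

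\emph{Case $u-1$ odd.} Then $u$ is even, and membership in $\Gamma^k_{u-1}$ gives $u-1 \leq (i-1)+j \leq u$, that is, $u \leq i+j \leq u+1$. Since $i$ and $j$ are both even, $i+j$ is even. As $u$ is also even, the only value in $\{u, u+1\}$ that $i+j$ can take is $u$, so $i+j = u$. We also have $0 \leq i \leq k$ and $0 \leq j \leq k$ by assumption, and $u \leq 2k$ follows from $i+j = u$. Together with $i$ and $j$ being even, these are exactly the conditions for the even case of the definition, so $(i,j) \in \Gamma^k_u$.

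The only point needing care is to check that the pair stays within the index range $\{0, \ldots, k\}^2$ and that $u$ is a valid level index. Both follow directly from the hypotheses, as shown in the second case. The rest is the parity bookkeeping above.
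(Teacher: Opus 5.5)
Your proposal is correct and follows the paper's proof: exclude an even $u-1$ because $i-1$ is odd, turn $u-1 \leq (i-1)+j \leq u$ into $i+j=u$ by parity, and obtain the second claim by symmetry. You also spell out what the paper leaves implicit, namely why $u-1$ must be odd, the parity step giving $i+j=u$, and that the premise is vacuous for $u=0$.
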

\begin{proof}
Let $(i-1,j) \in \Gamma^k_{u-1}$ for even $i$ and $j$. Then, $i-1$ is odd,  $j$ is even, and $u-1$ is odd. So $u-1 \leq i-1+j \leq u$. This means that $u=i+j$, and then, by the definition of $\{\Gamma^k_m\}_{0 \leq m \leq 2k}$ in Theorem~\ref{thm:conjunctionBecomesBig}, we have that $(i,j)$ is in $\Gamma^k_u$.

The other case can be proven in an analogous way.
\end{proof}

\begin{obs}
\label{obs:five}
Let $(i,j) \in \Gamma^k_u$ and $(i',j')$ be some pair such that $0 \leq i' \leq i$, $0 \leq j' \leq j$, $(i,j) \neq (i',j')$, and both $i'$ and $j'$ are even. Then we have that $(i',j') \in \Gamma^k_0 \cup \ldots \cup \Gamma^k_{u-1}$. 
\end{obs}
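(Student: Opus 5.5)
The plan is to locate $(i',j')$ directly in some $\Gamma^k_v$ and show that $v<u$. Since $i'$ and $j'$ are both even, the definition in Theorem~\ref{thm:conjunctionBecomesBig} puts $(i',j')$ in $\Gamma^k_{v}$ with $v = i'+j'$, which is even. Because $(i',j') \neq (i,j)$, $i'\leq i$ and $j'\leq j$, we get $i'+j' \leq i+j-1$. Moreover, if $i+j-1 = i'+j'$, then exactly one of the coordinates differs, and it differs by one. So it suffices to compare $v = i'+j'$ with $u$ using what membership $(i,j)\in\Gamma^k_u$ says about $u$.

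I split on the parity of $u$.

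\emph{Case $u$ even.} Here $i$ and $j$ are even and $u=i+j$. Since $i'\le i$ and $j'\le j$ are even and not both equal, at least one of them drops by at least $2$. Hence $v \le u-2 < u$, and $(i',j')\in\Gamma^k_v$ with $v\le u-1$.

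\emph{Case $u$ odd.} Here $u \le i+j \le u+1$ and $i$ or $j$ is odd.
\begin{itemize}
\item If exactly one of $i,j$ is odd, then $i+j$ is odd, so $i+j=u$. The parity of $i'+j'$ differs from that of $i+j$, so $v\le u-1$ directly.
\item If both $i$ and $j$ are odd, then $i+j$ is even, so $i+j=u+1$. Since $i'\le i-1$ and $j'\le j-1$ (each is even while $i,j$ are odd), we get $v\le i+j-2 = u-1$.
\end{itemize}
In every case $v\le u-1$, so $(i',j')\in\Gamma^k_0\cup\ldots\cup\Gamma^k_{u-1}$.

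The only step needing care is the odd-$u$ case. There one has to use that $i'\le i$ with $i'$ even and $i$ odd forces $i'\le i-1$, which absorbs the extra $+1$ allowed in the definition of $\Gamma^k_u$ for odd $u$. The remaining bookkeeping is a routine parity argument. One should also note that $v=i'+j'\in\{0,\ldots,2k\}$, so $\Gamma^k_v$ is indeed defined.
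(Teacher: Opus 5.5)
Your proof is correct and follows essentially the same route as the paper. Both split on the parity of $u$, and for odd $u$ both split into $i+j=u$ versus $i+j=u+1$ (you phrase this equivalently as exactly one versus both of $i,j$ being odd), using $i'<i$ and $j'<j$ to absorb the extra $+1$ in the second subcase. Your even case states the inequality in the right direction ($i'+j'\le i+j-2$), whereas the paper's text has it reversed by a typo.
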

\begin{proof}
Let $u'$ be the element such that $(i',j') \in \Gamma^k_{u'}$.

If $u$ is even, both $i$ and $j$ are both even, and $u = i+j < i'+j' = u'$, so that $(i',j') \in \Gamma^k_0 \cup \ldots \cup \Gamma^k_{u-1}$.

If $u$ is odd, then $u \leq i +j \leq u+1$. We distinguish two cases, namely that $i+j = u$ and that $i+j = u+1$.

In the first case, since $u'=i'+j'$ (as $u'$ is even), $i' \leq i$, $j' \leq j$, and $(i,j) \neq (i',j')$, we have that $u' < u$, which is sufficient.

In the second case, we have $i+j = u+1$. Since either $i$ or $j$ are odd and $u$ is odd, we have that both $i$ and $j$ have to be odd for $u+1$ to be even. Since $i' \leq i$, $j' \leq j$, and both $i'$ and $j'$ are even, we have that $i' < i$ and $j' < j$. So we have $u' = i'+j' \leq i+j -2 = u-1$. Thus, we have $(i',j') \in \Gamma^k_0 \cup \ldots \cup \Gamma^k_{u-1}$ again.
\end{proof}

\begin{proof}[Proof of Theorem \ref{thm:conjunctionBecomesBig}]
We prove the claim by induction over $u$. 
In particular, we show that exactly the words with a natural color of $u$ are rejected by $C^k_{u+1}$ but accepted by $C^k_{u}$.
Note that this is equivalent to showing that for each word $w$ with a natural color of $u$, for the unique maximal pair $(i,j)$ with $0 \leq i \leq k$ and $0 \leq j \leq k$ such that $w \in L^k_i \cap \hat L^k_j$, we have that $(i,j) \in \Gamma_u^k$.

We split proving the claim into four cases, and these are the possible combinations of even $u$/odd $u$ and the two directions of the proof.

\textbf{Even $u$, $\Leftarrow$:} 
Consider first an even $u \geq 0$ and let $w$ be a word accepted by $C^k_u$ but rejected by $C^k_{u+1}$. We have that the unique greatest pair $(i,j)$ with $w \in L^k_i \cap \hat L^k_j$ is such that $i+j = u$ with even $i$ and $j$ (as $C^k_u$ is built as a disjunction of automata for $L^k_i \cap \hat L^k_{j}$ from such pairs). 

Since $i$ and $j$ are both even, we have that $w$ is both in  $L^k$ as well as $\hat L^k$, and hence should be in the language of $\mathcal{C}^k$. Since $u$ is even and $w$ is accepted by $C^k_u$ and rejected by $C^k_{u+1}$, this is indeed the case.

Let us now show that the conditions for $w$ to have a natural color of $u$ are fulfilled. In particular, 
let now $w'$ be an extension of $w$ and $(i',j')$ be the  unique greatest pair such that $w' \in L^k_{i'} \cap \hat L^k_{j'}$. By Lemma~\ref{lem:movingLemma}, we have that $i' \geq i$ and $j' \geq j$.
If $i=i'$ and $j=j'$, then the resulting word is still in $L^k$, $\hat L^k$, and the language of $\mathcal{C}^k$, which satisfies the conditions for the natural color of $w$ to be $u$. If however $i'<i$ or $j'<j$, then the resulting word is rejected by $C^k_u$, and by the inductive hypothesis has a smaller natural color, which is also fine for the word $w$ to have a natural color of $u$.

\textbf{Even $u$, $\Rightarrow$:} 
Let $w$ be a word with a natural color of $u$ and $u$ be even. Since $u$ is even, the word is in $L'^k$ and hence there exists some unique greatest pair $(i,j)$ with even $i$ and $j$ such that $w \in L^k_i \cap L^k_j$. We show that $(i,j) \in \Gamma^k_u$. 

We first distinguish between $u$ being $0$ or not. If $u=0$, then there does not exist some level $(i',j')$ with odd $i'$ or odd $j'$ and such that $i' \leq i$ and $j' \leq j$, as by Lemma~\ref{lem:movingLemma}, we could then compute a word $w'$ not in $L'^k$ that is an extension of $w$, which contradicts that $w$ has a natural color of $0$. This means that $i=0$ and $j=0$, as otherwise an extension to a word $w'$ for which $(0,1)$ or $(1,0)$ are the unique greatest pairs for $w'$ would be possible (by Lemma~\ref{lem:movingLemma}). Hence, $w$ has $(0,0)$ as unique greatest pair, and this pair is contained in $\Gamma^k_0$.

For the case of $u\geq 2$, 
we either have $i > 0$ or $j > 0$.
Let us consider the case that $i>0$. Then there exists an extension $w'$ of $w$ whose greatest pair is $(i-1,j)$. Since the natural color of $w$ is $u$, by assumption and because $w' \notin L^k$, the natural color of $w'$ is at most $u-1$. Then by the inductive hypothesis, we have $(i-1,j) \in \Gamma^k_0 \cup \ldots \cup \Gamma^k_{u-1}$. If $(i-1,j) \in \Gamma^k_0 \cup \ldots \Gamma^k_{u-3}$, then by Observation~\ref{obs:four} and the inductive hypothesis, we would have that $w$ has a natural color of $u-2$, which contradicts the assumption that $w$ has a natural color of $u$. We also cannot have that $(i-1,j) \in \Gamma^k_{u-2}$ as then $w'$ would also be in $L'^k$ by the inductive hypothesis, which contradicts the assumption that $i$ and $j$ are even. Hence, we have $(i-1,j) \in \Gamma^k_{u-1}$. By applying 
Observation~\ref{obs:four} again, we then have $(i,j) \in \Gamma^k_u$.

The case of $j>0$ is analogous.

\textbf{Odd $u$, $\Leftarrow$:} 
Consider now an odd $u \geq 0$ and let $w$ be a word accepted by $C^k_u$ but rejected by $C^k_{u+1}$. We have that the unique greatest level $(i,j)$ with $w \in L^k_i \cap \hat L^k_j$ is such that $u \leq i+j \leq u+1$ with either odd $i$ or odd $j$ (or both).

We show that that the natural color of $w$ is $u$. To see this, consider a word $w'$ that extends $w$ and let $(i',j')$ be a unique greatest pair such that $w' \in L^k_{i'} \cap L^k_{j'}$. By Lemma~\ref{lem:movingLemma}, we have that $i' \leq i$ and $j' \leq j$. If either $i'$ or $j'$ stay odd, then $w' \notin L'^k$ and the condition for the word $w$ to have a natural color of $u$ is fulfilled. Alternatively, by Observation~\ref{obs:five}, $w'$ is rejected by $C^k_u$, which by the inductive hypothesis suffices to show that $w'$ has a natural color of at most $u-1$. Since thus for all extensions $w'$ of $w$, the
resulting word is still not in $L'^k$ or its resulting natural color is at most $u-1$, the claim follows.

\textbf{Odd $u$, $\Rightarrow$:} 
Let now $w$ be a word with a natural color of $u$ and $u$ be odd. The word is hence not in $L'^k$ and for the unique greatest $(i,j)$ such that $w \in L^k_i \cap \hat L^k_j$ we have that either $i$ is odd or $j$ is odd, as one of the COCOA whose language intersection is taken needs to reject $w$ for the word not to be in $L'^k$. 

We need to prove that $(i,j)$ is in $\Gamma^k_u$ and the word is hence in $\mathcal{L}(C^k_u)$. Since then $(i,j)$ is in $\Gamma^k_u$ and by Observation~\ref{obs:three}, we then have that $w$ is not in $\mathcal{L}(C^k_{u+1})$.

Consider the set of words $w'$ that result from $w$ by residual language invariant word injections. Ignore those words $w'$ in $L'^k$, as there is nothing to be proven for them. By assumption, all such words have a natural color of at most $u-1$. We build the sets of all index pairs $Y$ that can occur for any such word $w' \in L'^k$. Note that all such pairs are in $\Gamma_0 \cup \Gamma_2 \cup \ldots \Gamma_{u-1}$ (by the assumption that the natural color of $w$ is at most $u$). 
Let $i_\mathit{max} = \max\{ i' \in \NN \mid \exists j'. (i',j') \in Y\}$ and $j_\mathit{max} = \max\{ j' \in \NN \mid \exists i'. (i',j') \in Y\}$. We distinguish three cases:
\begin{itemize}
\item $i$ is odd and $j$ is even: Then, $j = j_\mathit{max}$ as $(i-1,j_\mathit{max}) \in Y$, and $i = 1 + i_\mathit{max}$, as any value $i > 1 + i_\mathit{max}$ would allow $w$ to be extended to a word contained in $L^k_{i_\mathit{max}+2} \cap \hat L^k_j$. This would contradict that $u$ is the natural color of $w$, as a word whose unique greatest pair is $(i_\mathit{max}+2,j)$ can be extended to a word in $(i_\mathit{max}+1,j)$ with a different containment in $L'^k$, and since that word, by the inductive hypothesis, cannot have color $u-1$, this means that $w$ cannot have a natural color of $u$.

Since $(i_\mathit{max},j)$ is in $\Gamma^k_{u-1}$ (as for $u$ to have a natural color of $u$, all extensions with different containment in $L'^k$ can have a color of at most $u-1$ and a lower natural color is not possible by $(i_\mathit{max},j)$ otherwise also having a natural color of less that $u-1$, which contradicts that $w$ has a natural color of $u$), by the definition of the sets $\{\Gamma^k_m\}_{0 \leq m \leq 2k}$, we have $(i_\mathit{max}+1,j) \in \Gamma^k_{u}$.

\item $i$ is even and $j$ is odd: This case is analogous to the previous case.
\item $i$ is odd and $j$ is odd. Then, $(i_\mathit{max}+1,j_\mathit{max}+1)$ is the unique greatest tuple such that $w'$ can be located at this level when the acceptance of $w'$ differs from the acceptance of $w$ (by Lemma~\ref{lem:movingLemma}). It cannot be located at a greater tuple by the same reasoning as in the first case.
By the assumption that $w$ has a natural color of $u$ and hence all extensions of $w$ with different containment in $L'^k$ need to have a natural color of at most $u-1$, we know that $(i_\mathit{max},j_\mathit{max}) \in \Gamma^k_{u-1} \cup \ldots \cup \Gamma^k_{1}$. At the same time, $(i_\mathit{max}+1,j_\mathit{max}+1) \notin \Gamma^k_{u-1} \cup \ldots \cup  \Gamma^k_{1}$ as otherwise by the inductive hypothesis, $w$ would have a natural color of less than $u$.
By the definition of the $\{\Gamma^k_m\}_{0 \leq m \leq 2k}$ sets, we then have $(i,j) = (i_\mathit{max}+1,j_\mathit{max}+1) \in \Gamma^k_u$. \qedhere
\end{itemize}
\end{proof}

Theorem~\ref{thm:conjunctionBecomesBig} provides a blueprint for building $\mathcal{C}^k$ from the COCOA for $L^k$ and $\hat L^k$. In particular, we can obtain $\mathcal{C}^k$ by a sequence of disjunction and conjunction operations. This characterization allows us to deduce that $\mathcal{C}^k$ must be of size exponential in $k$, as proposition~\ref{prop:exponentialSize} below shows. 
To simplify the construction an proof, we will perform the disjunction and conjunction operations on deterministic rather than history-deterministic co-Büchi automata and then only show that the resulting automata need to be exponentially big even when allowing history-determinism. The following lemma adapts folk results for taking the conjunction or disjunction of deterministic Büchi automata to the case of transition-based acceptance (and a co-Büchi acceptance condition).

\begin{lem}
\label{lem:conjunctionDisjunctionOfDCW}
Let $\mathcal{A}_1, \ldots, \mathcal{A}_n$ be deterministic co-Büchi automata over the same alphabet. We can construct a deterministic co-Büchi automaton $\mathcal{A}^\wedge$ for the conjunction of these languages of size $|\mathcal{A}_1| \cdot \ldots \cdot |\mathcal{A}_n|$, and a deterministic co-Büchi automaton $\mathcal{A}^\vee$ for the disjunction of theses languages of size $|\mathcal{A}_1| \cdot \ldots \cdot |\mathcal{A}_n| \cdot n$.
\end{lem}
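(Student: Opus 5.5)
Both constructions will be product constructions over the deterministic co-Büchi automata $\mathcal{A}_m = (Q^m,\Sigma,\delta^m,q^m_0)$ for $1 \leq m \leq n$. By determinism, each word $w$ induces a unique run in every $\mathcal{A}_m$. The product automaton simulates all these runs in parallel. So the product state reached after a prefix is the tuple of the individual states. It remains only to choose the colors on the product transitions, plus a small bookkeeping component in the disjunction case.

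\textbf{Conjunction.} I will take $Q^\wedge = Q^1 \times \ldots \times Q^n$ with initial state $(q^1_0,\ldots,q^n_0)$. On letter $x$, the successor tuple is obtained componentwise. The color is $1$ (rejecting) if at least one component transition has color $1$, and $2$ otherwise. A word is accepted by all $\mathcal{A}_m$ iff each run sees color $1$ only finitely often. Since $n$ is finite, this holds iff after some position no component sees color $1$. That is exactly the condition that the product run sees color $1$ only finitely often. The size is $|\mathcal{A}_1|\cdots|\mathcal{A}_n|$ as claimed.

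\textbf{Disjunction.} Here the acceptance condition is ``some component eventually stops seeing rejecting transitions''. A product that simply takes the minimum over components would be wrong. Different components may reject at interleaved times, so the product could see rejection infinitely often even though one component is eventually clean. I will therefore use a round-robin pointer, in the spirit of the folk breakpoint construction for the union of co-Büchi languages (dual to the intersection of Büchi languages). Set $Q^\vee = Q^1 \times \ldots \times Q^n \times \{1,\ldots,n\}$ with initial state $(q^1_0,\ldots,q^n_0,1)$. On letter $x$, update the tuple componentwise. If the component currently pointed to, say $p$, takes a rejecting transition, the product transition gets color $1$ and the pointer moves to $(p \bmod n)+1$. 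Otherwise the color is $2$ and the pointer stays. This gives size $|\mathcal{A}_1|\cdots|\mathcal{A}_n| \cdot n$.

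For correctness of the disjunction, two directions must be checked. First, suppose some $\mathcal{A}_m$ accepts, so after some position $t$ its run never takes a rejecting transition. After $t$, the pointer moves only on rejections of the currently pointed component. If the product sees color $1$ infinitely often, the pointer cycles through all indices infinitely often. It would then eventually sit at $m$, and could only leave $m$ through a rejection in $\mathcal{A}_m$ after $t$, which is impossible. Hence color $1$ occurs only finitely often, and the product accepts. Conversely, suppose the product accepts. Then the pointer eventually stays forever at some $p$, and from that point on $\mathcal{A}_p$ never takes a rejecting transition, so $\mathcal{A}_p$ accepts. The main point needing care is this pointer argument, in particular the ``must eventually point to $m$'' step. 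It relies on the pointer advancing cyclically and only moving on rejections, so I will state both properties explicitly. Everything else is routine.
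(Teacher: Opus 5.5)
Your proposal is correct and follows the paper's proof: the same componentwise product for conjunction, and for disjunction the same product with a round-robin pointer in $\{1,\ldots,n\}$ that advances exactly when the pointed-to component takes a rejecting transition. The differences are cosmetic. The paper emits color $1$ only when the pointer wraps from $n$ to $1$ rather than on every advance, which is equivalent since the pointer advances infinitely often iff it wraps infinitely often. Your conjunction rule (``color $1$ if some component has color $1$'') is the right one, whereas the paper's formula $c' = \max(c_1,\ldots,c_n)$ is a slip that contradicts its own accompanying explanation.
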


\begin{proof}
The claim can be shown by adapting existing (folk) approaches to the case of transition-based acceptance. The resulting construction is, to the best of the author's knowledge, not found explicitly anywhere in the available literature. However, the \texttt{spot} framework for constructing and manipulating automata \cite{DBLP:conf/cav/Duret-LutzRCRAS22} has an implementation for pairs of automata.

Let for each $1 \leq i \leq n$ the automaton $\mathcal{A}_i$ be given as a tuple $(Q^i,\Sigma,\delta^i,q^i_0)$. We define $\mathcal{A}^\wedge = (Q^\wedge,\Sigma,\delta^\wedge,q_0^\wedge)$ with $Q^\wedge = Q^1 \times \ldots \times Q^n$, $q^\wedge_0 = (q^1_0,\ldots,q^n_0)$, and for each $(q_1, \ldots, q_n),(q'_1, \ldots, q'_n) \in Q^\wedge$, $x \in \Sigma$, and $c \in \{1,2\}$, we have $((q_1, \ldots, q_n),x,((q'_1, \ldots, q'_n),c')) \allowbreak{} \in \delta^\wedge $ 
if there exists some $c_1, \ldots, c_n, c'$ such that for every $1 \leq i \leq n$, we have $(q_i,x,q'_i,c_i) \in \delta^i$, and $c' = \max(c_1, \ldots, c_n)$.

Note that this automaton is deterministic and accepts a word if and only if it is accepted by every input co-Büchi automaton. This is because $\mathcal{A}^\wedge$ simulates all automata in parallel and has an accepting transition if and only if all transitions in the component automata are accepting. Hence, if and only if for some word $w$, eventually only accepting transitions are taken along a run for $w$ in $\mathcal{A}^\wedge$, then all automata accept.

We also define $\mathcal{A}^\vee = (Q^\vee,\Sigma,\delta^\vee,q_0^\vee)$ with $Q^\vee = Q^1 \times \ldots \times Q^n \times \{1, \ldots, n\}$, $q^\vee = (q^1_0,\ldots,q^n_0,1)$, and for each $(q_1, \ldots, q_n,j),(q'_1, \ldots, q'_n,j') \in Q^\vee$, $x \in \Sigma$, and $c \in \{1,2\}$, we have $((q_1, \ldots, q_n,j),x,((q'_1, \ldots, q'_n,\allowbreak{} j'),\allowbreak{} c')) \in \delta^\wedge $ 
if there exists some $c_1, \ldots, c_n, c'$ such that for every $1 \leq i \leq n$, we have $(q_i,x,q'_i,x_i) \in \delta^i$, $c' = 1$ if $j=n$ and $j'=1$ and $c'=2$ otherwise, and we have $j' = (j \mod n)+1$ if $c_j = 1$ and $j'=j$ otherwise.

This automaton is also deterministic. To see that it accepts the union of languages of $\mathcal{A}_1,  \ldots, \mathcal{A}_n$, consider the case that a word $w$ is accepted by some automaton $\mathcal{A}_i$. Then, we have that along the run of $w$ for $\mathcal{A}^\vee$, eventually the counter (i.e., the last element of the state set) gets stuck at value $i$, and no more rejecting transitions are taken afterwards. On the other hand, if all automata $\mathcal{A}_1, \ldots, \mathcal{A}_n$ reject the word $w$, then the counter moves through all possible values infinitely often along the run for $w$, and then because when the counter switches from $n$ to $1$, a rejecting transition is taken, the automaton $\mathcal{A}^\vee$ rejects the word.
\end{proof}

We use this lemma in the proof of the following proposition:

\begin{prop}
\label{prop:exponentialSize}
Let $\mathcal{C}^k = (C^k_1, \ldots, C^k_{2k})$ be a COCOA for $L^k \cap \hat L^k$. 

A minimal deterministic co-Büchi automaton for $\mathcal{L}(C^k_i)$ for some $1 \leq i \leq 2k$ has at most $2^{2k} \cdot k$ many states.
For even $k$, we have that the \hdcw{} $C^k_k$ has at least $2^{k}$ many states. For odd $k$, we have that $C^k_{k-1}$ has at least $2^{k-2}$ many states.
\end{prop}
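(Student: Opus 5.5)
The plan has two independent parts: an explicit construction for the upper bound, and a residual-language count for the lower bound, using that minimal \hdcw{} are semantically deterministic \cite{DBLP:journals/lmcs/RadiK22}.

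\textbf{Upper bound.} Theorem~\ref{thm:conjunctionBecomesBig} gives $\mathcal{L}(C^k_u)=\bigcup_{(i,j)\in\Gamma^k_u} L^k_i\cap\hat L^k_j$, so a deterministic co-Büchi automaton for this union is enough.
\begin{enumerate}
\item Take the two-state deterministic automata of Lemma~\ref{lem:gfgcoBuchiSizes} for all $L^k_i$ and their analogues for all $\hat L^k_j$.
\item Form their synchronous product. Its states are exactly the parity vectors of the letters $X_1,\ldots,X_k,Y_1,\ldots,Y_k$, so it has $2^{2k}$ states.
\item Each $L^k_i\cap\hat L^k_j$ is then a deterministic co-Büchi automaton on this common state space, by the conjunction construction of Lemma~\ref{lem:conjunctionDisjunctionOfDCW}.
\item For the union over $\Gamma^k_u$, I would not use the full product of the disjunction construction. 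All components share one state space, so the product collapses, and only the round-robin counter of Lemma~\ref{lem:conjunctionDisjunctionOfDCW} is added. This gives $2^{2k}\cdot|\Gamma^k_u|$ states.
\end{enumerate}
The remaining task is to bring the counter range down to $k$. For this I would observe that membership in $L^k_i\cap\hat L^k_j$ depends only on the parity vector and on the highest and lowest $a$-index occurring infinitely often. Many pairs in $\Gamma^k_u$ can therefore be merged: for a fixed $i$, the constraints for neighbouring $j$ are nested. This should leave at most $k$ disjuncts per level, and that is the bookkeeping I would check carefully.

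\textbf{Lower bound.} A minimal \hdcw{} has at least as many states as its language has residual languages. So it suffices to exhibit many pairwise distinguishable prefixes for $\mathcal{L}(C^k_k)$ when $k$ is even, and for $\mathcal{L}(C^k_{k-1})$ when $k$ is odd.

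Let $m=k$ in the even case and $m=k-1$ in the odd case. As prefixes I take words over the letters $X_i$ and $Y_j$ with $i,j$ even and at most $m$, one prefix for each parity vector. This gives $2^{m}$ prefixes when $k$ is even and $2^{k-1}\ge 2^{k-2}$ when $k$ is odd.

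Suppose two prefixes differ, say, in the parity of $X_i$. Set $j=m-i$, which is even, so $(i,j)\in\Gamma^k_m$.
\begin{enumerate}
\item Choose the suffix $(a_{4k-2i+1}a_{2j-1})^\omega$. If $j=0$, drop $a_{2j-1}$ and use a letter such as $a_{2k}$ to keep the $\hat L$-index at $0$.
\item After the prefix with an even number of $X_i$, the greatest pair of Lemma~\ref{lem:movingLemma} is $(i,j)$. After the prefix with an odd number, it is $(i-1,j)$. The same argument with the roles swapped handles a differing $Y_j$.
\item The first word lies in $C^k_m$ because $(i,j)\in\Gamma^k_m$.
\item The second word does not. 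Any pair $(i',j')\le(i-1,j)$ has $i'+j'\le m-1$, while every pair in $\Gamma^k_m$ has $i'+j'=m$. Hence no pair of $\Gamma^k_m$ below its greatest pair exists.
\end{enumerate}

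\textbf{Main obstacle.} The delicate step is checking that the chosen suffix places the word exactly at the intended greatest pair, independently of all other parities in the prefix. This needs the index bookkeeping of Definitions~\ref{def:toyLanguage} and~\ref{def:toyLanguageB}.
\begin{itemize}
\item $a_{2j-1}$ must not push the word out of $L^k_i$. This needs $2j-1\le 4k-2i$, which holds since $i+j\le k$.
\item $a_{4k-2i+1}$ must not push the word out of $\hat L^k_j$. This needs $4k-2i+1\ge 2j-1$, which again holds.
\item The parities of the other $X_{i''}$ must not raise the $L$-index above $i$. This holds because the suffix contains $a_{4k-2i+1}$, which is excluded from $L^k_{i+1}$ regardless of the parity of $X_{i+1}$.
\end{itemize}
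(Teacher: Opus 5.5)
Your route matches the paper's: a product with a round-robin counter for the upper bound, and residual languages plus semantic determinism of minimal \hdcw{} for the lower bound. Your suffix $(a_{4k-2i+1}a_{2j-1})^\omega$ fixes the $\hat L$-index at $j$ whatever the parities are, which is cleaner than the paper's $(a_{4k-2i+1}a_{2j-2})^\omega$.

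The step that fails is the boundary case $j=0$, and symmetrically $i=0$ for a differing $Y_m$. Your prefixes contain no $Y_1$. With an even number of $Y_1$, the language $\hat L^k_1$ contains every word ending in $(a_0+\cdots+a_{4k-1})^\omega$, so no suffix over lowercase letters can give $\hat L$-index $0$. With $a_{4k-2m+1}$ and $a_{2k}$ recurring, both continuations lie in $\hat L^k_k$ itself, which always admits $a_{2k-1},\ldots,a_{4k-1}$. Their greatest pairs are then $(m,k)$ and $(m-1,k)$. Both dominate $(0,m)\in\Gamma^k_m$, so both words are in $\mathcal{L}(C^k_m)$. Hence the parities of $X_m$ and $Y_m$ are not separated, and your argument yields only $2^{m-2}$ classes. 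That is $2^{k-2}$ for even $k$ and $2^{k-3}$ for odd $k$, below both claimed bounds.

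The repair is to target $\hat L$-index $j+1$ instead. Use $a_1$ when $j=0$, or uniformly $a_{2j+1}$; this letter lies in $\hat L^k_{j+1}$ for every parity and in no $\hat L^k_{j+2}$. The greatest pairs become $(i,j+1)$, which dominates $(i,j)\in\Gamma^k_m$, and $(i-1,j+1)$. The latter's coordinates sum to $m$ with an odd first entry, so no pair of $\Gamma^k_m$ lies below it. For a differing $Y_j$, fix the $L$-index at $i+1$ symmetrically via $a_{4k-2i-2}$, e.g.\ $a_{4k-2}$ when $i=0$.

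Your deferred bookkeeping for the upper bound also does not close. Merging nested pairs leaves one pair $(i,u-i)$ per $i$. At level $u=k$ for odd $k$ these are the $k+1$ pairs $(0,k),\ldots,(k,0)$, and none is redundant. Indeed, a word with all capital-letter parities odd whose greatest pair is $(i,k-i)$ lies only in that disjunct. So the round robin gives $2^{2k}(k+1)$ states at that level.

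The paper's count of at most $k$ non-dominated pairs does not settle this either. It keeps the maximal pairs, although $L^k_{i'}\cap\hat L^k_{j'}\supseteq L^k_i\cap\hat L^k_j$ for $(i',j')\le(i,j)$. To reach $2^{2k}\cdot k$ for odd $k\ge 3$, let the counter range depend on the current parity vector. Within a fixed vector, compare the intervals of admissible recurring $a$-indices: $(0,k)$ is subsumed by $(1,k-1)$ when $X_1$ is even, and $(k,0)$ by $(k-1,1)$ when $Y_1$ is even. Summing over all vectors gives exactly $2^{2k}\cdot k$ states.
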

\begin{proof}
Let for all languages $L^k_i$ and $\hat L^k_j$ be the respective two-state deterministic automata be denoted by $A^k_i$ and $\hat A^k_j$, which by Lemma~\ref{lem:gfgcoBuchiSizes} have two states each.

For the first part, note that for all $0 \leq i \leq 2k$, the set $\Gamma^k_u$ has at most $k$ many \emph{non-dominated elements}, i.e., pairs $(i,j)$ that do not have another different pair $(i',j')$ in the set such that $i'\geq i$ and $j' \geq j$. When building $C^k_u = \bigcup_{(i,j) \in \Gamma^k_u} A^k_i \cap \hat A^k_j$, only the non-dominated pairs have to be considered, as all words accepted by $A^k_{i'} \cap \hat A^k_{j'}$ for a dominated pair $(i',j')$ are also accepted by $A^k_{i} \cap \hat A^k_{j}$ for some non-dominated pair $(i,j)$.
A deterministic co-Büchi automaton $A^k_i \cap \hat A^k_j$ for some pair $(i,j)$ only needs 4 states by Lemma~\ref{lem:conjunctionDisjunctionOfDCW}. Taking the union of $k$ many such automata yields an automaton with at most $2^{2k} \cdot k$ many states by the same lemma.

For the second part, consider the case of
$C^k_k = \bigcup_{(i,j) \in \Gamma^k_k} A^k_i \cap \hat A^k_j$ for even $k$. Here, we take the disjunction of $\frac{k}{2}$ many 4 state automata, yielding an automaton with at most $2^k$ many states. This is at the same time also the lower bound, because the number of residual languages of $\mathcal{L}(C^k_k)$ is $2^k$. This is because every language $L^k_i \cap \hat L^k_j$ for $(i,j) \in \Gamma^k_u$ has the word suffix $\tilde w = (a_{4k-2i+1} a_{2j-2})^\omega$ that is not accepted by any $L^k_{i'} \cap \hat L^k_{j'}$ with $(i',j') \in \Gamma^k_u$ as well for $(i,j) \neq (i',j')$ and that is only in $L^k_i \cap \hat L^k_j$ for prefixes for which the letters $X_i$ and $Y_j$ each occur an even number of times in the prefix. This implies that a \hdcw{} for $C^k_k$ has different residual languages for words that differ in their numbers of $X_i$ or $Y_j$ letters for $(i,j) \in \Gamma^k_k$. The number of residual languages is hence $2^k$.
As minimal canonical \hdcw{} are \emph{semantically deterministic} \cite{DBLP:journals/lmcs/RadiK22}, having $2^k$ many residual languages implies a lower bound of $2^k$ for the size of $C^k_k$.
The case for $k$ being odd is analogous.
\end{proof}
Let us finally discuss that a similar blow-up does not occur when representing $L^k$ and $\hat L^k$ as deterministic parity automata.
\begin{prop}
\label{proposition:paritySizes}
Each of the languages $L^k$ (from Def.~\ref{def:toyLanguage}) and $\hat L^k$ (from Def.~\ref{def:toyLanguageB}) can be represented as deterministic parity automata with $2^k$ states (and not less).

There exists a deterministic parity automaton for $L^k \cap \hat L^k$ with no more than $2^{4k^2} \cdot k^{2k}$ many states.
\end{prop}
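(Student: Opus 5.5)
Three things have to be established: an upper bound of $2^k$ states for a DPW for $L^k$, a matching lower bound, and the bound for the intersection. The case of $\hat L^k$ follows by the symmetry noted after Definition~\ref{def:toyLanguageB}.

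\textbf{Upper bound for $L^k$.} I would construct a DPW with state set $\{0,1\}^k$, where the $i$-th bit records the parity of the number of $X_i$ letters read so far. Reading $X_i$ flips bit $i$ and emits color $0$; every $Y_j$ also emits color $0$. Reading a lowercase letter $a_h$ in state $b$ emits the natural color that any word ending with $a_h^\omega$ has after a prefix leading to $b$; by Definition~\ref{def:toyLanguage} this is the largest $i$ with $a_h$ allowed infinitely often in $L^k_i$ given bit $b_i$.

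The key point is that this color is monotone (antitone) in $h$. Hence the least color seen infinitely often is exactly the natural color of the word determined by the largest index $h$ occurring infinitely often. By Lemma~\ref{lem:correctNaturalColors} this natural color is even iff the word is in $L^k$. Words with infinitely many capital letters see color $0$ infinitely often and are accepted, which is consistent with natural color $0$.

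\textbf{Lower bound for $L^k$.} Lemma~\ref{lem:numberOfSuffixLanguages} gives $2^k$ residual languages of $L^k$. A deterministic automaton needs a distinct state for each residual language, so no DPW for $L^k$ has fewer than $2^k$ states.

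\textbf{Intersection.} The natural route is not to take the product of the two DPWs directly, since a general DPW conjunction is exponential in the number of colors. Instead I would go through the COCOA $\mathcal{C}^k$ of Theorem~\ref{thm:conjunctionBecomesBig}:
\begin{itemize}
\item By Proposition~\ref{prop:exponentialSize}, each of the $2k$ levels has a deterministic co-Büchi automaton with at most $2^{2k}\cdot k$ states.
\item The product construction from the proof of Proposition~\ref{proposition:COCOAtoDPWExponentialIsEnough} then yields a DPW over the product of these deterministic automata (no Miyano--Hayashi step is needed, since the levels are already deterministic). Its size is at most $(2^{2k}k)^{2k}=2^{4k^2}k^{2k}$.
\item Correctness follows from the argument in that proof: the dominating color of a run equals the natural color of the word.
\end{itemize}

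\textbf{Main obstacle.} The delicate step is checking that the single-state-set DPW for $L^k$ is correct. This requires a case analysis showing that the color function is antitone in $h$ for every fixed bit vector, which I would verify directly from the index ranges $a_0,\dots,a_{4k-2i+1}$ and $a_0,\dots,a_{4k-2i}$ in Definition~\ref{def:toyLanguage}.
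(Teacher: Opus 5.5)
Your proposal is correct and takes essentially the same route as the paper's proof. The paper also uses the $\{0,1\}^k$ parity-tracking DPW; your implicitly defined color function coincides with its explicit case table, and your antitonicity-in-$h$ argument is a compact version of its case analysis. The remaining steps match as well: the residual-language lower bound via Lemma~\ref{lem:numberOfSuffixLanguages}, symmetry for $\hat L^k$, and the product of the $2k$ deterministic level automata from Proposition~\ref{prop:exponentialSize} via the construction of Proposition~\ref{proposition:COCOAtoDPWExponentialIsEnough}.
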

\begin{proof}
We can build a DPW $\mathcal{P}^k = (Q,\Sigma,\delta,q_0)$ for $L^k$ with $Q = \mathbb{B}^k$, $q_0 = (0, \ldots,0)$, and for all $(b_1,\ldots, \allowbreak{} b_k)\in Q$ and $x \in \Sigma$, we have $\delta((b_1,\ldots, b_k),x) = ((b'_1, \ldots, b'_k),c)$ for
$b'_i = \neg b_i$ if $x=X_i$ and $b'_i = b_i$ (for all $1 \leq i \leq k$) otherwise. The value of $c$ in this transition is defined as follows::
\begin{equation*}
c = \begin{cases} 0 &\text{ if } x \in \{X_1, \ldots, X_k, Y_1, \ldots, Y_k\} \\
i &\text{ if } x=a_{4k-2i} \text{ for some } 1 \leq i \leq k \\
i &\text{ if } x=a_{4k-2i+1} \text{ and } b_i=0 \text{ for some } 1 \leq i \leq k \\
i-1 &\text{ if } x=a_{4k-2i+1} \text{ and } b_i=1 \text{ for some } 1 \leq i \leq k \\
k & \text{ otherwise}. 
\end{cases}
\end{equation*}
We show that the automaton recognizes each word with its natural color (w.r.t.~$L^k$). For doing so, we 
compare the color induced by a run of $\mathcal{P}^k$ for some word $w$ against in which languages of $L^k_1, \ldots, L^k_k$ the word is.
\begin{itemize}
\item If there are infinitely many occurrences of letters in $\{X_1, \ldots, X_k, Y_1, \ldots, Y_k\}$, then the word is not in $L^k_1$. This word is recognized by $\mathcal{P}^k$ with color $0$, which is the natural color of the word (w.r.t.~$L^k$).
\item If there are infinitely many letters $a_{4k-2i}$ for some $1 \leq i \leq k$ but only finitely many letters $a_j$ for an index $j<4k-2i$, then the word is in $L^k_i$ but not in $L^k_{i+1}$. The word is recognized by $\mathcal{P}^k$ with color $i$ in this case.
\item If there are infinitely many letters $a_{4k-2i+1}$ for some $1 \leq i \leq k$ but only finitely many letters $a_j$ for an index $j<4k-2i+1$ and the number of $X_i$ letters is even, then the word is in $L^k_i$ but not in $L^k_{i+1}$. The word is recognized by $\mathcal{P}^k$ with color $i$ in this case.
\item If there are infinitely many letters $a_{4k-2i+1}$ for some $1 \leq i \leq k$ but only finitely many letters $a_j$ for an index $j<4k-2i+1$ and the number of $X_i$ letters is odd, then the word is in $L^k_{i-1}$ but not in $L^k_{i}$. The word is recognized with color $i-1$ by $\mathcal{P}^k$.
\item Otherwise, the word ultimately consists of letters $a_j$ with $j<4k-2k$. Then, even $L^k_k$ accepts the word, and the natural color of the word is  $k$. We have that $\mathcal{P}^k$ recognizes the word with color $k$.
\end{itemize}

Note that $\mathcal{P}^k$ is the smallest deterministic parity automaton for $L^k$ as it has $2^k$ many states and the number of residual languages of $L^k$ is $2^k$, so it cannot be smaller. %

A similar DPW can be built from $\hat L^k$ by replacing $X_i$ characters with $Y_i$ and renumbering the indices for the $a_j$ letters.

For the DPW for $L^k \cap \hat L^k$, we employ Proposition~\ref{prop:exponentialSize} to obtain deterministic co-Büchi automata of size at most $2^{2k} \cdot k$ for each of the levels of a COCOA for $L^k$ and then build a product parity automaton of the deterministic automata as in 
Proposition~\ref{proposition:COCOAtoDPWExponentialIsEnough},
which yields a deterministic parity automaton for $L'^k$ of size at most $(2^{2k} \cdot k)^{2k} = 2^{4k^2} \cdot k^{2k}$.
\end{proof}

Proposition~\ref{proposition:paritySizes}, Proposition~\ref{prop:exponentialSize}, and Lemma~\ref{lem:gfgcoBuchiSizes} together show that while $L^k$ and $\hat L^k$ can be represented with a COCOA that is exponentially more concise than any deterministic parity automaton for these languages, exponential conciseness is lost when computing a COCOA for $L^k \cap \hat L^k$.

\begin{rem}
\label{remark:deMorgan}
Exponential conciseness can also be lost when taking the disjunction of two COCOA (instead of taking their conjunction).
\end{rem}
\begin{proof}
COCOA for the complements of the languages of $L^k$ and $\hat L^k$ can be obtained by adding a \hdcw{} accepting the universal language as new first automaton in the chains, moving all chain elements one element back. After taking the conjunction of the resulting COCOA for the complement language, we obtain a result COCOA in which the first automaton accepts the universal language (by the construction in Theorem~\ref{thm:conjunctionBecomesBig}). Removing it yields a COCOA for $L^k \cup \hat L^k$. Applying Proposition~\ref{prop:exponentialSize} for the conjunction automaton yields the exponential lower size bound on the COCOA for $L^k \cup \hat L^k$. For the parity automata that we compare with, complementation can be performed without blow-up by adding $1$ to each transition color.
\end{proof}

\section{Complementing COCOA}
\label{sec:complementation}

In this section and as the final contribution of this paper, we consider the problem of complementing chains of co-Büchi automata. 
We show that such a complementation requires restructuring with which natural colors the words are recognized. It can happen that two words $w$ and $w'$ are both recognized by a COCOA with some color $i$, but a COCOA for the complement recognizes $w$ with some color $i-1$ while $w'$ is recognized with color $i+1$. As a consequence, complementing a COCOA is, unlike for deterministic parity automata, where complementing its language can be performed by adding one to each transition color, a non-trivial operation. 
Even more, complementation can lead to an exponential growth of the COCOA, as we show in this section of the paper. 

We define a family of COCOA $\mathcal{C}^1, \mathcal{C}^2, \ldots$ for which each COCOA $\mathcal{C}^k$ has $(4+3 \cdot k)$ many states overall, but the first automaton in a COCOA for the complement of $\mathcal{C}^k$ needs at least $2^k$ many states. Figure~\ref{fig:bluePrintComplementation} depicts the co-Büchi automata of a COCOA $\mathcal{C}^k$. 

There are three sets of letters in the alphabet $\Sigma$ of a language $L^k$ represented by the COCOA $\mathcal{C}^k$, namely $\{a_1, \ldots, a_{2k+1}\}$, $\{X_1, \ldots, X_k\}$, and $\{Y_1, \ldots, Y_k\}$.
Every automaton $A^k_{i}$ for $i \geq 2$ accepts those words for which letters from $X_1, \ldots, X_k$ appear only finitely often, letters from $\{a_{2k-i+5}, \ldots, a_{2k+1}\}$ appear only finitely often, and either $a_{2k-i+4}$ appears only finitely often or after the first $Y_{i-1}$ letter, the letter $X_{i-1}$ does not appear. 
Hence, if $Y_{i-1}$ and $X_{i-1}$ appear in a word (in that relative order), the set of letters from $a_{1}, \ldots, a_{2k+1}$ that may appear infinitely often along an accepted word changes.
The automaton $A_1$ accepts all words in which letters from  $X_1, \ldots, X_k$ appear only finitely often.

As in the previous section, we need to prove that for every $k \in \NN$, $\mathcal{C}^k$ is a valid COCOA, i.e., that it recognizes each word with its natural color w.r.t~the language represented by the chain of automata.

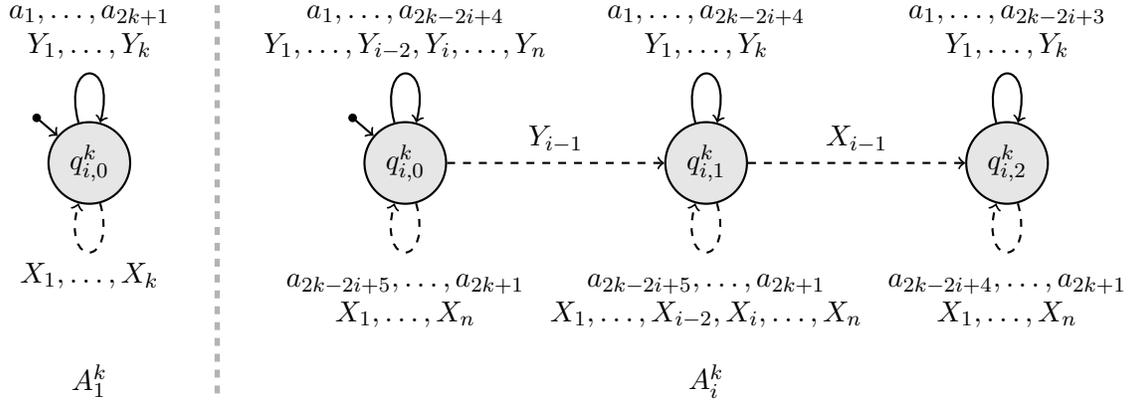
\begin{figure}
\centering\begin{tikzpicture}
\path[use as bounding box] (-2.3,-3.15) rectangle (12.6,2.2);

\begin{scope}[xshift=-1.2cm]
\node[state,thick,fill=black!10!white] (a) at (0,0) {$q^k_{i,0}$};
\draw[thick,->] (a) to[loop above] node[above] {\begin{tabular}{c}$a_1, \ldots, a_{2k+1}$\\ $Y_1, \ldots, Y_k$ \end{tabular}} (a);
\draw[thick,->,dashed] (a) to[loop below] node[below] {$X_1, \ldots, X_k$} (a);

\draw[fill=black] ($(a)+(-0.7,0.6)$) circle (0.05cm);
\draw[->,thick] ($(a)+(-0.7,0.6)$) -- (a);

\end{scope}

\node[state,thick,fill=black!10!white] (b) at (3,0) {$q^k_{i,0}$};
\node[state,thick,fill=black!10!white] (c) at (7,0) {$q^k_{i,1}$};
\node[state,thick,fill=black!10!white] (d) at (11,0) {$q^k_{i,2}$};

\draw[thick,->] (b) to[loop above] node[above] {\begin{tabular}{c}$a_1, \ldots, a_{2k-2i+4}$\\$Y_1, \ldots, Y_{i-2}, Y_{i}, \ldots, Y_n$\end{tabular}} (b);

\draw[thick,->] (c) to[loop above] node[above] {\begin{tabular}{c}$a_1, \ldots, a_{2k-2i+4}$\\$Y_1, \ldots, Y_k$\end{tabular}} (c);

\draw[thick,->,dashed] (b) to[loop below] node[below] {\begin{tabular}{c}$a_{2k-2i+5}, \ldots, a_{2k+1}$\\$X_1, \ldots, X_n$ \end{tabular}} (b);

\draw[thick,->,dashed] (c) to[loop below] node[below] {\begin{tabular}{c}$a_{2k-2i+5}, \ldots, a_{2k+1}$\\$X_1, \ldots, X_{i-2}, X_i, \ldots, X_n$ \end{tabular}} (c);

\draw[thick,->] (d) to[loop above] node[above] {\begin{tabular}{c}$a_1, \ldots, a_{2k-2i+3}$\\$Y_1, \ldots, Y_k$\end{tabular}} (d);

\draw[thick,->,dashed] (d) to[loop below] node[below] {\begin{tabular}{c}$a_{2k-2i+4}, \ldots, a_{2k+1}$\\$X_1, \ldots, X_n$ \end{tabular}} (d);

\draw[thick,->,dashed] (b) to node[above] {$Y_{i-1}$} (c);
\draw[thick,->,dashed] (c) to node[above] {$X_{i-1}$} (d);

\draw[fill=black] ($(b)+(-0.7,0.6)$) circle (0.05cm);
\draw[->,thick] ($(b)+(-0.7,0.6)$) -- (b);

\node at ($(a)+(0,-2.9)$) {$A^k_1$};
\node at ($0.5*(b)+0.5*(d)+(-0.0,-2.9)$) {$A^k_i$};

\draw[line width=2pt,dashed,color=black!30!white] (0.5,2.1) -- +(0,-5.25);

\end{tikzpicture}
\caption{Co-Büchi automata $A^k_1$ and $A^k_i$ (for $2 \leq i \leq k+1$) for the COCOA $\mathcal{C}^k = (A^k_1, \ldots, A^k_{k+1})$ (for some $k \in \NN$) used for showing that complementation can cause an exponential blow-up in Section~\ref{sec:complementation}}
\label{fig:bluePrintComplementation}
\end{figure}

\begin{lem}
Let $\mathcal{C}^k = (A^k_1, \ldots, A^k_{k+1})$ be a sequence of automata as given in Figure~\ref{fig:bluePrintComplementation} (for some $k \in \NN$), and $L^k = \mathcal{L}(A^k_1, \ldots, A^k_{k+1})$.

We have that $\mathcal{C}^k$ is a valid COCOA, i.e., each automaton $A^k_i$ for $1 \leq i \leq k+1$ accepts exactly the words with a natural color of at least $i$ w.r.t.~$L^k$.
\end{lem}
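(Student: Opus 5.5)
First I will establish the chain property $\mathcal{L}(A^k_1) \supseteq \mathcal{L}(A^k_2) \supseteq \ldots \supseteq \mathcal{L}(A^k_{k+1})$ directly from Figure~\ref{fig:bluePrintComplementation}. Every $A^k_i$ with $i \geq 2$ rejects words with infinitely many $X$ letters, so $\mathcal{L}(A^k_i) \subseteq \mathcal{L}(A^k_1)$. For $2 \leq i \leq k$, a word accepted by $A^k_{i+1}$ uses only $a_1,\ldots,a_{2k-2i+2}$ infinitely often, which lies below the threshold $a_{2k-2i+3}$. Hence $A^k_i$ accepts it in every one of its states, as all its states accept $a_1,\ldots,a_{2k-2i+3}$ and all $Y$ letters. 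The only exception to handle is the $Y_{i-1}$ loop missing at $q^k_{i,0}$, but $Y_{i-1}$ moves the run to $q^k_{i,1}$, and from there all $Y$ letters are accepting.

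I will also record a letter-based characterization of each level. Whether $w \in \mathcal{L}(A^k_i)$ depends only on two things: the set of letters occurring infinitely often, and the single finite-prefix event $E_{i-1}$, meaning that an $X_{i-1}$ occurs after the first $Y_{i-1}$. Since $A^k_i$ is deterministic and each run stabilizes in one state, this follows directly.

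Next, I will determine the residual languages of $L^k$. A prefix matters only through which events $E_{i-1}$ have already happened, and through whether a $Y_{i-1}$ has occurred without a subsequent $X_{i-1}$. An injection is residual-language invariant exactly when it does not change this status. Concretely, the allowed injections are arbitrary lower-case letters, arbitrary $X$ letters, and those $Y$ letters that do not newly enable an $E_{i-1}$ transition at level $i$ which matters for the continuation. I will mainly use the first two kinds, plus injections of whole words $Y_{i-1}X_{i-1}$ wherever they leave the residual unchanged. Where invariance of such a combined injection is doubtful, I will avoid it and use lower-case letters only.

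The core is an induction on $i$ in the style of Lemma~\ref{lem:correctNaturalColors}. The base case is level $1$: words with infinitely many $X$ letters stay rejected by $A^k_1$ under any injection, so they lie outside $L^k$ and have natural colour $0$. Conversely, any word accepted by $A^k_1$ can be extended by infinitely many $X_1$ injections, which changes membership, so its colour is at least $1$. For the step, suppose $w$ is accepted by $A^k_{i-1}$ and rejected by $A^k_i$.
\begin{itemize}
\item \emph{Lower bound.} If $w \in \mathcal{L}(A^k_i)$, I will inject infinitely often the letter $a_{2k-2i+4}$, or $a_{2k-2i+3}$ when $E_{i-1}$ holds. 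This keeps the word in $A^k_{i-1}$, by the threshold computation above, while pushing it out of $A^k_i$. Membership in $L^k$ flips, and the induction hypothesis says the extension has colour exactly $i-1$, so $w$ has colour at least $i$.
\item \emph{Upper bound.} If $w \notin \mathcal{L}(A^k_i)$, then any extension only adds infinitely-often letters, and cannot undo $E_{i-1}$ without breaking invariance. By monotonicity, any extension therefore stays outside $A^k_i$. It either remains in $A^k_{i-1}$ with the same membership, or drops to a lower colour by the induction hypothesis.
\end{itemize}

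I expect the main obstacle to be the residual-invariance bookkeeping for the $Y_{i-1}/X_{i-1}$ event. An injection might turn a word with $E_{i-1}$ false into one with $E_{i-1}$ true, while keeping the overall residual language of $L^k$ unchanged because the level $i$ change is masked by a lower level. This is exactly the subtlety behind complement colours moving by $\pm1$. My first attempt will be to show that the upper-bound argument never needs to worry about this: making $E_{i-1}$ true only shrinks $\mathcal{L}(A^k_i)$, and a word already outside $A^k_i$ cannot re-enter it under any injection, so monotonicity suffices. In the lower-bound direction I will use only lower-case injections, which are trivially residual invariant, so I never have to reason about injected $X$ or $Y$ letters there.
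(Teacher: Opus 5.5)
Your upper-bound direction (a word rejected by $A^k_i$ has natural color at most $i-1$) is fine and matches the paper. Extensions only enlarge the set of letters occurring infinitely often, and they cannot remove an $X_{i-1}$ that follows the first $Y_{i-1}$, so they stay outside $\mathcal{L}(A^k_i)$.

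\textbf{The gap.} The other direction fails, both in your base case and in your step, because you aim the injection at the wrong color. To show that $w$ is \emph{not} at home in color $i-1$, you need an extension $w'$ with two properties: $w'$ is not at home below $i-1$, and its membership in $L^k$ disagrees with the \emph{parity of $i-1$}. Whether it disagrees with the membership of $w$ is beside the point. Pushing $w$ down to color $i-1$, or in the base case to color $0$ via $X$-injections, produces an extension whose membership \emph{agrees} with that parity. Such an extension can differ in membership from $w$ only when $w$ itself already violates the parity condition, and that case is settled by $w$ alone.

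The nontrivial words are exactly the ones your argument misses.
\begin{itemize}
\item In the base case, take $w \in \mathcal{L}(A^k_1) \cap L^k$ of chain color $2$. Infinitely many $X$-injections land in color $0$, and color-$0$ words are \emph{in} $L^k$, so membership does not change. You state the opposite, but a word outside $L^k$ could never be at home in color $0$.
\item In the step, a word of chain color $i+1$ keeps its membership when moved to color $i-1$.
\end{itemize}
Moreover, your letters do not act as you claim. $a_{2k-2i+4}$ is an accepting self-loop at $q^k_{i,0}$ and $q^k_{i,1}$, and $a_{2k-2i+3}$ is one at $q^k_{i,2}$. So they keep the word inside $\mathcal{L}(A^k_i)$ and instead push it out of $\mathcal{L}(A^k_{i+1})$.

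\textbf{The fix.} The missing idea, which is what the paper uses, is to land the extension at color \emph{exactly} $i$: inject $a_{2k-2i+3}$ infinitely often (for $i=1$ this is $a_{2k+1}$). Lower-case letters are self-loops in every automaton of the chain. Hence these injections are residual-language invariant at any positions, and so available for every choice of $J$. The resulting $w'$ lies in $\mathcal{L}(A^k_i) \setminus \mathcal{L}(A^k_{i+1}) \subseteq \mathcal{L}(A^k_{i-1})$. By the induction hypothesis it is not at home in any color below $i-1$, and $w' \in L^k$ holds if and only if $i$ is even. This refutes that $w$ is at home in color $i-1$, regardless of whether $w \in L^k$. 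Your letters in fact achieve this, but the justification you give for them (membership flips and the extension has color $i-1$) is false.

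Finally, drop the claim that arbitrary $X$ letters are admissible injections. Injecting $X_{i-1}$ while $A^k_i$ is in $q^k_{i,1}$ moves it to $q^k_{i,2}$, which changes whether $(a_{2k-2i+4})^\omega$ lies in the residual language. Consider a word in which every $Y_j$ occurs and no $X_j$ ever follows the first $Y_j$. Once all $Y_j$ have been read, no $X$ letter can be injected invariantly at all. These are exactly the words exploited in the complementation theorem.
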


\begin{proof}
First of all note that all automata in the chain are history-deterministic as they are also deterministic.
Also note that for every $1 \leq i \leq k$, we have $\mathcal{L}(A^k_i) \supset \mathcal{L}(A^k_{i+1})$ as all automata in the chain only accept words that ultimately end in letters from $\{Y_1, \ldots, Y_k, a_1, \ldots, a_{2k+1}\}$, but the letters from $a_{1}, \ldots, a_{2k+1}$ that may appear infinitely often strictly shrinks from level to level.

We now prove by induction that for each $1 \leq i \leq k$, the language $L^k_i$ contains exactly those words whose natural color regarding $L^k$ is at least $i$.

\emph{Induction basis:} Let $w$ be a word. We need to show that exactly the words $w \in L^k$ for which every sequence of residual-language invariant word injections leading to a word $w'$ that has $w' \in L^k$ as well are the ones \emph{not} in the language of $A^k_1$. 

$\Rightarrow$: So let $w$ be a word that is not in the language of $A^k_1$. 
This means that some letter in $X_1, \ldots, X_k$ appears infinitely often in the word. Injecting additional letters does not change this, so for the resulting word, we have that it is in $L^k$ and not in the language of $A^k_1$ as well.

$\Leftarrow$: Let $w \in L^{k}$ be a word for which for some $J$, every sequence of residual-language invariant word injections leads to a word that is also in $L^k$. Then this includes the word resulting from injecting $a_{2k+1}$ infinitely often, which does not change whether the word is accepted by $A^k_1$ but ensures that it is rejected by $A^k_2, \ldots, A^k_{k+1}$. Also, this injection is residual-language invariant w.r.t.~$L^k$ as all co-Büchi automata in $\mathcal{C}^k$  always self-loop under this letter. Hence, if $w$ was in $L^k$ before the injections, it was already rejected by $A^k_1$.

\emph{Induction step:} 
$\Rightarrow$: Let $w$ be a word not in $\mathcal{L}(A^k_{i})$. We need to show that $w$ has a natural color of at most $i-1$. 
If the word is not in $\mathcal{L}(A^k_{i-1})$, then this holds by the inductive hypothesis. So it remains to consider the case that $A^k_{i-1}$ accepts the word. 

Then, we have that either 
the word contains the $a_{2k-2i+5}$ letter infinitely often or it contains the $X_{i-1}$ letter after the first $Y_{i-1}$ letter and $a_{2k-2i+4}$ infinitely often. Injecting any residual-language invariant words cannot change that. In this context, note that if the word contains a $Y_{i-1}$ letter and not a $X_{i-1}$ letter afterwards, then any injection of $X_{i-1}$ letters after the first $Y_{i-1}$ letter is not suffix-language invariant as it changes whether $(a_{2k-2i+4})^\omega$ is in the residual language. If we now inject a word containing letters from $a_{2k-2i+6}, \ldots, a_{2k+1}$ infinitely often, then the word becomes rejected by $A^k_{i-1}$, proving by the inductive hypothesis that the resulting word has a natural color of at most $i-2$. Otherwise, the word stays accepted by $A^k_{i-1}$ and rejected by $A^k_{i}$, retaining whether the word is in $L$. As for all residual-language invariant injections, we now know that either the resulting word is in $L$ if and only if it is in $L$ before the injection or the resulting word has a natural color of at most $i-2$, this direction of the induction step is complete.

$\Leftarrow$: Let $w$ be a word with a natural color of at most $i-1$. We need to prove that we have that $w$ is not in $\mathcal{L}(A^k_{i})$. 
If the word has a color of at most $i-2$, then we know that it is not in $\mathcal{L}(A^k_{i-1})$ by the inductive hypothesis, so by the strict language inclusion between the chain elements, the claim already holds. So assume that the natural color of $w$ is exactly $i-1$.
Every suffix-language invariant word injection changing whether the word is in $L^k$ hence needs to lead to a word with a color of at most $i-2$. 

Assume for a proof by contradiction that $w$ is accepted by $A^k_i$.
Now inject $a_{2k-2i+3}$ infinitely often into $w$. The word is still in $\mathcal{L}(A^k_{i})$ but is rejected by $\mathcal{L}(A^k_{i+1})$, so it is in $L$ if and only if $i$ is even. 
Since it does not have a natural color of at most $i-2$ but the resulting word can also not have a natural color of $i-1$ because the evenness of $i-1$ does not fit to whether the word is contained in $L$, this proves that the resulting word has a natural color of at least $i$, which cannot happen if $w$ has a natural color of exactly $i-1$.
\end{proof}

Let us now consider the problem of complementing a COCOA $\mathcal{C}^k$. The following theorem captures that the first element of a COCOA for the complement of $\mathcal{C}^k$ needs a number of states that is exponential in $k$.

\begin{thm}
Let, for some $k \in \NN$, $\mathcal{C}^k = (A^k_1, \ldots, A^k_{k+1})$ be a chain of co-Büchi automata with the structure given in Figure~\ref{fig:bluePrintComplementation}, and $L^k = \mathcal{L}(A^k_1, \ldots, A^k_{k+1})$.

Every first chain element of a COCOA encoding $\Sigma^\omega \setminus L^k$ needs at least $2^k$ many states.
\end{thm}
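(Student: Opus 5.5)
The plan is to reduce the size bound to counting residual languages. Let $B_1$ be the first element of the canonical COCOA for $\Sigma^\omega \setminus L^k$, so $\mathcal{L}(B_1)$ is the set of words whose natural color with respect to the complement is at least $1$. Minimal \hdcw{} are semantically deterministic \cite{DBLP:journals/lmcs/RadiK22}, as already used in the proof of Proposition~\ref{prop:exponentialSize}. Hence it suffices to exhibit $2^k$ finite prefixes that pairwise induce different residual languages of $\mathcal{L}(B_1)$. I would take, for each $S \subseteq \{1,\ldots,k\}$, the prefix $u_S$ that reads $Y_j$ for every $j$ and additionally reads $X_j$ right after $Y_j$ exactly when $j \notin S$. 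After $u_S$, each $A^k_{j+1}$ with $j \in S$ is in state $q^k_{j+1,1}$ and each one with $j \notin S$ is in state $q^k_{j+1,2}$.

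First I would characterise the natural colors with respect to the complement on the relevant words. Complementation swaps membership, so a word has complement color $0$ exactly when it lies outside $L^k$ and no admissible injection (at late positions $J$) brings it into $L^k$. The key observation, already used in the validity proof of $\mathcal{C}^k$, is which injections are residual language invariant after $u_S$. For $j \in S$, injecting $X_j$ is \emph{not} invariant, because it moves $A^k_{j+1}$ from $q^k_{j+1,1}$ to $q^k_{j+1,2}$ and thereby changes whether $(a_{2k-2j+2})^\omega$ is in the residual language. For $j \notin S$, injecting $X_j$ is invariant. So after $u_S$ the letters $X_j$ with $j \notin S$ are freely injectable, and the letters $a_\ell$ are always injectable since all automata self-loop on them. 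Injecting any $X_j$ infinitely often gives natural color $0$ with respect to $L^k$.

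For $S \neq S'$, pick $j$ in the symmetric difference, say $j \in S \setminus S'$. I would build a suffix $v_j$ that uses only $a$-letters, and possibly $X_m$ letters with $m \notin S$ inserted finitely often, such that its position in the chain is decided by the state of $A^k_{j+1}$. The natural candidate is $v_j = (a_{2k-2j+2})^\omega$, possibly mixed with lower-index $a$-letters so that $A^k_2,\ldots,A^k_j$ still accept. After $u_S$ the word $u_S v_j$ is accepted by $A^k_{j+1}$. After $u_{S'}$ it is rejected by $A^k_{j+1}$ but accepted by $A^k_j$. The two words therefore get colors of different parity, and exactly one of them lies outside $L^k$. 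For that one I would then compare the reachable extensions. Every extension that decreases the color has to inject higher $a$-letters or infinitely many $X$-letters, and I would use the restricted set of invariant $X$-injections to show that one of the two words has complement color $0$ while the other has complement color at least $1$. This shows $u_S v_j \in \mathcal{L}(B_1) \not\Leftrightarrow u_{S'} v_j \in \mathcal{L}(B_1)$, so $u_S$ and $u_{S'}$ induce different residual languages of $\mathcal{L}(B_1)$, and the $2^k$ bound follows.

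The main obstacle is that last comparison step: choosing $v_j$ and the parity case split so that the complement color of one word is genuinely $0$. The natural color is defined through nested extensions, and the free $X_m$-injections for $m \notin S$ threaten to push every word in $L^k$-complement into $L^k$. That would make both words accepted by $B_1$ and destroy the distinction. I would first try to handle this by picking $j$ to be the \emph{largest} index in the symmetric difference and choosing $v_j$ so that its membership change can only be realised through $A^k_{j+1}$. This would turn the relevant injection set into the single letter $X_j$, whose availability differs between $u_S$ and $u_{S'}$. If free injections of other $X_m$ still interfere, I would fall back to prefixes where every index is in state $q^k_{m+1,0}$ or $q^k_{m+1,1}$, i.e.\ $Y_m$ read or not, and redo the argument with $S$ as the set of indices where $Y_m$ was read.
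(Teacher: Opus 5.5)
Your reduction to counting residual languages of $\mathcal{L}(B_1)$ is sound and matches the paper, but the main construction does not just risk failing: it provably fails. Take $S \neq \{1,\ldots,k\}$ and $m \notin S$. After $u_S$, the automaton $A^k_{m+1}$ sits in the sink $q^k_{m+1,2}$, and every other automaton self-loops on $X_m$. So injecting $X_m$ is residual language invariant at \emph{every} later position. Now take any continuation $v$. If $u_S v \in L^k$, it cannot have complement color $0$. Otherwise, injecting $X_m$ at infinitely many points of any $J$ gives a word of $L^k$-color $0$, so $u_S v$ is not at home in complement color $0$. Hence $\mathcal{L}(B_1)|_{u_S} = \Sigma^\omega$ for all such $S$, and your $2^k$ prefixes induce at most two residual languages. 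Choosing $j$ maximal or tuning $v_j$ cannot repair this. The suffix $(a_{2k-2j+2})^\omega$ also aims at the wrong level of the chain. A word outside $L^k$ with $L^k$-color at least $3$ is pushed into $L^k$ by injecting $a_{2k}$ infinitely often, or $X_1$ if $X_1$ already followed $Y_1$. So only words of $L^k$-color $1$ can have complement color $0$, and $B_1$ never sees parity differences at level $j+1$.

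The missing idea is the step the paper's proof rests on: an explicit characterization of $\hat L^k_0 = \Sigma^\omega \setminus \mathcal{L}(B_1)$. These are the words that read every $Y_i$, never read $X_i$ after the first $Y_i$, and contain $a_{2k+1}$ infinitely often. The paper's formula also admits $a_{2k}$, but a word there with infinitely many $a_{2k}$ and only finitely many $a_{2k+1}$ is accepted by $A^k_2$ and rejected by $A^k_3$, so it lies in $L^k$. Your fallback prefixes $u_T$, which differ only in which $Y_m$ were read, are exactly the paper's. However, the argument cannot be ``redone'' as sketched: $q^k_{m+1,0}$ and $q^k_{m+1,1}$ have identical $a$-transitions, so no $a$-suffix separates them. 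Instead, for $j \in T \setminus T'$, let $v$ read the $Y_m$ with $m \notin T$ and then continue with $(a_{2k+1})^\omega$. Then $u_T v \in \hat L^k_0$: take $J$ after the last $Y$, where no $X$ is invariantly injectable and $a$- or $Y$-injections keep $L^k$-color $1$. By contrast, $u_{T'} v$ never reads $Y_j$, so $X_j$ stays invariantly injectable forever and $u_{T'} v \in \mathcal{L}(B_1)$. Both words lie outside $L^k$ with the same $L^k$-color $1$. What separates them is the freedom to inject $X_j$, not a parity difference.
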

\begin{proof}
We perform the proof by characterizing the words with a natural color of exactly $0$. We denote the language formed by these words as $\hat L^k_0$ henceforth and prove that it has at least $2^k$ many residual languages. As the first element of a COCOA for $\Sigma^\omega \setminus L^k$ accepts exactly the complement of $\hat L^k_0$, and minimized co-Büchi automata are, w.l.o.g., language-deterministic (as the minimization algorithm by Abu Radi and Kupferman \cite{DBLP:journals/lmcs/RadiK22} produces such automata), this proves that the first element of a COCOA for $\Sigma^\omega \setminus L^k$ needs at least $2^k$ many states.
We can characterize $\hat L^k_0$ as follows:
\begin{align*}
\hat L^k_0 & = \{w_0 w_1 \ldots \in \Sigma^\omega \mid \forall 1 \leq i \leq k. \exists j \in \NN. Y_i \notin \{w_0, \ldots, w_{j-1}\} \wedge w_j = Y_i \wedge X_i \notin \\ & \quad \quad \{w_{j+1}, w_{j+2}, \ldots \},
\exists_\infty j \in \NN. w_j \in \{ a_{2k}, a_{2k+1} \} \}
\end{align*}
In this equation, the expression $\exists_\infty j \in \NN$ denotes that there exist infinitely many elements $j \in \NN$ with the stated properties.
To prove that $L^k_0$ contains exactly the words of $\Sigma^\omega \setminus L^k$ with a natural color of $0$, consider a word in $\hat L^k_0$. It contains all letters in $Y_1, \ldots, Y_k$ but no letter in $X_0, \ldots, X_n$ after the corresponding $Y_i$ letter. 
Hence, $A^k_1$ accepts the word, but $A^k_2$ rejects it, making the word contained in $\Sigma^\omega \setminus L^k$.

Now consider a sequence of injection points $J$ that are all placed after each letter from $\{Y_1, \ldots, Y_n\}$ has occurred at least once.
Any run for any of the automata $A^k_i$ for $2 \leq i \leq k+1$ will be in the state $q^k_{i,1}$ after all $Y_i$ letters have been seen. Any successive letter $X_i$ (for $1 \leq i \leq k$) will change the residual language of the COCOA (as it changes whether the residual word $(a_{2k+2i+4})^\omega$ is accepted), so residual-language invariant word injections at positions in $J$ cannot contain $X_i$ letters. Without injecting any $X_i$ letter, the resulting word will still be accepted by $A^k_1$ and still be rejected by $A^k_2$, making the word contained in $\Sigma^\omega \setminus L^k$ again.

For the converse direction, consider a word $w$ in $\Sigma^\omega \setminus L^k$ for which a sequence $J$ exists all of whose residual-language invariant word injections are also in $\Sigma^\omega \setminus L^k$. We distinguish three cases:
\begin{itemize}
\item Some letter $Y_{i-1}$ may not be contained in $w$. Then $q_{i,1}$ is not reached along the run of $A^k_i$ for $w$, and since injecting $X_{i-1}$ makes all automaton runs for the automata in the COCOA self-loop, we have that this injection is residual-language invariant. Injecting $X_{i-1}$ infinitely often however leads to the resulting word having a natural color of $0$ w.r.t.~$L^k$, making it not contained in $\Sigma^\omega \setminus L^k$, which is a contradiction.
\item Alternatively, some letter $X_{i-1}$ appears after the first occurrence of $Y_{i-1}$ in $w$. But then, the (only) run of $A^k_i$ for $w$ would move to state $q^k_{i,2}$ after the occurrence of letter $X_{i-1}$, making injections of $X_{i-1}$ after that point residual-language invariant. By the same reasoning as in the last case, we get a contradiction.
\item Finally, we may have that for all $1 \leq i \leq k$, the letter $X_{i}$ does not occur after the last $Y_{i}$ letter in $w$, and these $Y_{i}$ letters can be found somewhere in the word for all $1 \leq i \leq k$. In this case, every resulting word has a natural color of at least $1$ in $\mathcal{C}^k$ (as $A^k_1$ accepts it). If no such injection changes that the word has an even natural color w.r.t. $L^k$, then we have that $A^k_2$ accepts it and $A^k_3$ rejects it as otherwise by injecting $a_{2k-2}$ infinitely often we could make the resulting word not be contained in $L^k$ or the word is actually in $L^k$ (and hence not in $\Sigma^\omega \setminus L^k$). Hence, either $a_{2k}$ or $a_{2k+1}$ already occur infinitely often in $w$ (as the word is rejected by $A^k_3$), making it contained in $\hat L^k_0$.
\end{itemize} 
We now have that $\hat L^k_0$ has at least $2^k$ many residual languages: all words using only letters from $\{Y_1, \ldots, Y_k, a_1, \ldots, a_{2k+1}\}$ that differ by which letters from $\{Y_1, \ldots, Y_k\}$ have appeared so far differ by their residual language, in particular by which elements of $(\{Y_1, \ldots, Y_k\})^* \{a_1, \ldots, a_{2k+1}\}^\omega$ are in their residual languages.

As a minimal co-Büchi automata for $\Sigma^\omega \setminus \hat L^k_0$, which is the first co-Büchi automaton in a minimized COCOA for $\Sigma^\omega \setminus L^k$, can w.l.o.g.~be assumed to track the residual language,
we have that the first COCOA automaton for $\Sigma^\omega \setminus L^k$ needs to have at least $2^k$ many states.
\end{proof}

We note that just like for the conjunction and disjunction operations, the proof of the theorem stating an exponential blow-up for COCOA complementation employs the property of COCOA that they can have a number of residual languages that is exponential in the overall number of their states.

\section{Conclusion}
\label{sec:conclusion}

In this paper, we took a close look at the conciseness of chains of history-deterministic co-Büchi automata with transition-based acceptance (COCOA) %
by aggregating previous results on them, deriving corollaries from these previous results, and augmenting the resulting overview of known conciseness properties of COCOA with three new technical results that are not corollaries of existing work. 

In particular, we showed that by splitting a language to be represented into levels, COCOA can be exponentially more concise than deterministic parity automata even when the automata on the individual levels are not. Secondly, we showed that exponential conciseness can be shattered by performing language disjunction or conjunction operations.
Finally, we showed that even complementing a single COCOA may require an exponential blow-up. 
While the first of these results even holds for a language that only has a single residual language, the loss of conciseness in the second and third results was caused by some co-Büchi automaton in the resulting COCOA needing to represent an exponential number of residual languages.
None of the latter two technical results depend on the conciseness of history-deterministic co-Büchi automata over deterministic co-Büchi automata.

Apart from providing some insight into the capabilities and limits of COCOA as a representation for $\omega$-regular languages, our results inform future work on algorithms for performing operations on COCOA as well as future work on using COCOA for practical applications. In particular, the lower bounds on conjunction/disjunction/complementation operations provides a baseline that future algorithms for computing Boolean combinations of COCOA can be compared against. Furthermore, since in all results on Boolean combinations of COCOA the exponential size increase was caused by the fact that COCOA can have an exponential number of residual languages, our results motivate looking into alternative representations for $\omega$-regular languages that inherit the good properties of COCOA (such as polynomial-time minimization) but cannot represent a language with a number of residual languages that is exponential in the number of states. Such a model may then support Boolean operations with only a polynomial representation blow-up and potentially also permits simpler algorithms for performing such operations.

\bibliographystyle{alphaurl}
\bibliography{bib}

\end{document}